\newtheorem{thm}{Theorem}[section]
\newtheorem{cor}[thm]{Corollary}
\newtheorem{lem}[thm]{Lemma}
\newtheorem{prop}[thm]{Proposition}
\theoremstyle{definition}
\newtheorem{defn}{Definition}[section]
\newtheorem{rem}{Remark}[section]
\newcommand{\R}{{\mathbb R}}
\newcommand{\Q}{{\mathbb Q}}
\newcommand{\C}{{\mathbb C}}
\newcommand{\Z}{{\mathbb Z}}
\newcommand{\calA}{{\mathcal A}}
\newcommand{\calB}{{\mathcal B}}
\newcommand{\calD}{{\mathcal D}}
\newcommand{\calE}{{\mathcal E}}
\newcommand{\calF}{{\mathcal F}}
\newcommand{\calG}{{\mathcal G}}
\newcommand{\calM}{{\mathcal M}}
\newcommand{\calL}{{\mathcal L}}
\newcommand{\calV}{{\mathcal V}}
\renewcommand{\to}{\longrightarrow}
\newcommand{\Aut}{\operatorname{Aut}}
\newcommand{\Hom}{\operatorname{Hom}}
\newcommand{\End}{\operatorname{End}}
\newcommand{\ev}{\operatorname{ev}}
\newcommand{\Tr}{\operatorname{Tr}}
\newcommand{\tr}{\operatorname{tr}}
\newsavebox{\savepar}
\numberwithin{equation}{section}
\newcounter{labelflag} \setcounter{labelflag}{0}
\newcommand{\labelon}{\setcounter{labelflag}{1}}
\newcommand{\Label}[1]{
                       \ifnum\thelabelflag=1
                          \ifmmode
                             \makebox[0in][l]{\qquad\fbox{\rm#1}}
                          \else
                             \marginpar{\vspace{0.7\baselineskip}
                                        \hspace{-1.1\textwidth}
                                        \fbox{\rm#1}}
                          \fi
                       \fi
                       \label{#1}
                      }
\newcommand{\BbP}{{\mathbb P}}
\newcommand{\pdo}{\Psi{\rm DO}}
\newcommand{\fraku}{{\mathfrak u}}
\newcommand{\dvol}{{\rm dvol}}
\newcommand{\dg}{\dot\gamma}
\newcommand{\dir}{\partial\kern-.570em /}
\newcommand{\dire}{\partial\kern-.570em /{}^{\rm eq}}
\newcommand{\idex}{{\rm ind}}
\newcommand{\nlm}{\nabla^{LM}}
\newcommand{\wnlm}{\widetilde\nabla^{LM}}
\newcommand{\nm}{\nabla^{M}}
\newcommand{\wng}{\widetilde\nabla_{\mathfrak u}}
\newcommand{\wog}{\widetilde\Omega_{\mathfrak u}}
\newcommand{\og}{\Omega^M_{\mathfrak u}}
\newcommand{\wolm}{\widetilde\Omega^{LM}}
\newcommand{\bds}{\bar d s}
\newcommand{\nee}{\nabla^E}
\newcommand{\wne}{\widetilde\nabla^\calE}
\newcommand{\woeg}{\widetilde\Omega^\calE_{\mathfrak u}}
\newcommand{\oeg}{\Omega^E_{\mathfrak g}}
\newcommand{\che}{{\rm ch}}
\newcommand{\ints}{\int^{S^1}}
\newcommand{\calas}{\calA^*}
\newcommand{\frakg}{\mathfrak g}
\newcommand{\adP}{{\rm Ad}\ P}
\newcommand\Pad{P^{\rm ad}}
\newcommand{\keras}{{\rm ker}\ d_A^*}
\newcommand{\daga}{d_AG_Ad_A^*}
\newcommand{\cklo}{c_k^{\rm lo}}
\newcommand{\ctlo}{c_2^{\rm lo}}
\newcommand{\PD}{{\rm PD}}
\newcommand{\cgk}{C^\infty_{g,k}(A)}
\newcommand{\trlo}{\Tr^{\rm lo}}
\newcommand{\mgk}{\calM_{g,k}(A)}
\newcommand{\bmgk}{\overline{\calM}_{g,k}(A)}
\newcommand{\mgkm} {\calM_{g,k-1}(A)}
\newcommand{\bv}{[\overline{\calM}_{g,k}(A)]^{\rm virt} }
\newcommand{\bvm}{[\overline{\calM}_{g,k-1}(A)]^{\rm virt} }
\newcommand{\mok}{\calM_{0,k}(A)}
\newcommand{\bmok}{\overline{\calM}_{0,k}(A)}
\newcommand{\mokm} {\calM_{0,k-1}(A)}
\newcommand{\cok}{C^\infty_{0,k}(A)}
\newcommand{\cokm}{C^\infty_{0,k-1}(A)}
\begin{document}

\title[Characteristic classes associated to fibrations]{Equivariant, String and leading order characteristic classes associated to fibrations}
\author[A. Larra\'in-Hubach]{Andr\'es Larra\'in-Hubach}
\address{Department of Mathematics, University of Arizona}
\email{alh@math.arizona.edu}
\author[Y. Maeda]{Yoshiaki Maeda}
\address{Department of Mathematics, Keio University}
\email{ymaeda@math.keio.ac.jp}
\author[S. Rosenberg]{Steven Rosenberg}
\address{Department of Mathematics, Boston University}
\email{sr@math.bu.edu}
\author[F. Torres-Ardila]{Fabi\'an Torres-Ardila}
\address{Department of Curriculum and Instruction, University of Massachusetts Boston}
\email{fabian.torres-ardila@umb.edu}

\maketitle

\section{Introduction}

{\it Added to this version: The results in \S3 are not correct as stated.  The specific error occurs above (3.11), where it is incorrectly stated that $\widetilde\Omega^k_{\fraku}$ takes values in pointwise endomorphisms.  
In fact, as pointed out to us by Tommy McCauley, this operator takes values in first order differential operators.  We have been unable to salvage the results in this section, but feel there are enough worthwhile calculations to leave this section in.}
\bigskip

Infinite rank principal or vector bundles appear frequently in mathematical physics, even before quantization.  For example, string theory involves the tangent bundle to the space of maps ${\rm Maps}(\Sigma^2,M)$ from a Riemann surface to a manifold $M$, while any gauge theory relies on the principal bundle $\calA^* \to \calA^*/\calG$ of irreducible connections over the quotient by the gauge group.  Finally,
formal proofs of the Atiyah-Singer index theorem take place on the free loop space $LM$, and in particular use 
calculations on $TLM.$  As explained below, many of these examples arise from pushing finite rank bundles on the total space of a fibration down to an infinite rank bundle on the base space.

For the correct choice of structure group, these infinite rank bundles can be topologically nontrivial.  As for finite rank bundles, nontriviality is often
detected by   infinite dimensional analogs of the Chern-Weil construction of 
characteristic classes, as  in 
 \cite{h-m-v}, \cite{lrst}, \cite{MRT}, \cite{m-v}, \cite{P-R2},  with a survey in \cite{R}.  The choice of structure 
 group is determined by natural classes of connections on these bundles, which typically take values 
 either in the Lie algebra of a gauge group or a Lie algebra of 
  zeroth order pseudodifferential operators ($\pdo$s).
There are essentially three types of characteristic forms for these connections, one using the 
Wodzicki residue for $\pdo$s, one using the zeroth order or leading order symbol, and one using integration over the fiber.  As shown in \cite{lrst2}, the
characteristic classes for the Wodzicki residue vanish, but nontrivial Wodzicki-Chern-Simons
classes exist \cite{MRT}.  

In this paper, we focus on gauge group connections  and produce examples of nontrivial leading order 
characteristic
classes for some infinite rank bundles associated to loop spaces, Gromov-Witten theory and gauge theory.  While the residue classes
are inherently infinite dimensional objects and difficult to compute, the leading order and string classes for infinite rank bundles on the base space of a fibration
are often related to characteristic classes of the finite rank bundle on the total space. This makes the leading order and string classes more computable.  
In particular, in some cases we can relate  the leading order classes to the string classes associated to 
integration over the fiber.
\medskip

In \S2, we describe the basic setup, which is well known from local proofs of the families index theorem.  To a fibration $Z\to M\to B$ of closed manifolds and a bundle with connection $(E, \nabla)\to M$, one can
associate an infinite rank bundle with connection $(\calE,\nabla') \to M$.  This is a gauge connection if the
fibration is integrable, and we define the associated leading order Chern classes.  Even if the fibration is not integrable, $\calE$ has string classes, which are topological pushdowns of the Chern classes of $E$.  
The leading order and string classes do not live in the same degrees.  Both classes have associated Chern-Simons or transgression forms.  

In \S3, we show that the $S^1$ Atiyah-Singer index theorem can be 
rewritten as an equality involving leading order classes on the loop space $LM$ of a closed manifold $M$ (Thm. \ref{thm2}).  (More precisely, we work with the version of the $S^1$-index theorem called the Kirillov
formula in \cite{BGV}.)
This is an attempt to mimic the formal proofs of the ordinary index theorem on loop space \cite{atiyah}, 
\cite{B}, but differs in significant ways.  In particular, the statement involves integration of
a leading order class over a finite cycle in $LM$, not over all of $LM$, so the nonrigorous
localization step in the formal proof is sidestepped. It should be emphasized that this is 
only a restatement and not
a loop space proof of the index theorem, as the $S^1$-index theorem is used in the
restatement.  Along the way, we construct equivariant characteristic forms on $LM$, such
as the equivariant $\hat A$-genus and  Chern character, which restrict to the corresponding forms
on $M$ sitting inside $LM$ as constant loops (Thm. \ref{thm1}).  It is unclear if the Chern character form we construct is the same as those constructed in \cite{B} and \cite{twz}.

In \S4 we apply similar techniques to the moduli space of pseudoholomorphic maps from
a Riemann surface $\Sigma$ to
a symplectic manifold $M$.  We prove that certain Gromov-Witten invariants and gravitational
descendants 
can be expressed in terms of leading order classes and string classes, and we recover the Dilaton Axiom.   
These techniques work when the
 GW invariants are really given by integrals over the smooth interior of the compactified moduli space, for which we rely on \cite{zi}.  In particular, we have to restrict ourselves to genus zero GW invariants for semipositive manifolds.  The main geometric observation is that the 
fibration of (interiors of) moduli spaces associated to forgetting a marked point is integrable, so that leading order classes are defined. The main results (Thms. \ref{gd}, \ref{pp}) involve a mixture of string and leading order classes.

In \S5 we prove that the real cohomology of a based loop group $\Omega G$, for $G$ compact, is generated by leading order Chern-Simons classes.  This amounts to noting that the cohomology of $G$ is generated by Chern-Simons classes, and then relating these finite rank classes to the leading order classes.
We note that the generators of $\Omega G$ can also be written in terms of string classes, a known result
\cite{h-m-v}, and we specifically relate the string and leading order classes (Thm. \ref{relate}).
Related results are in \cite{hmsv}.

In \S6 we study leading order classes associated to the gauge theory fibration  $\calA^* \to \calA^*/\calG$.  This fibration
has a natural gauge connection  \cite{GP}, \cite{singer}, whose curvature involves nonlocal
Green's operators. 
Leading order classes only deal with the locally defined symbol of these operators, so 
the calculation of these classes is relatively easy.
 In Prop. \ref{nvs}, we show that the canonical representative of  Donaldson's $\nu$-class 
\cite[Ch. V]{dk} in the cohomology of the moduli space $ASD/\calG$ of ASD connections on a 
$4$-manifold is the restriction of a leading order form on all of $\calA^*/\calG.$  Thus the
$\nu$-class gives information on the cohomology of $\calA^*/\calG.$  It is desirable to 
extend this construction to cover the more important $\mu$-classes, but this seems to require a theory of
leading order currents.  We give a preliminary result in this direction.

We would like to thank Michael Murray and Raymond Vozzo for helpful conversations, 
particularly about 
 \S\ref{string}.

\section{Two types of characteristic classes}

Perhaps the simplest type of infinite  rank vector bundles come from fibrations.  Let
$Z\to M\stackrel{\pi}{\to} B$ be a locally trivial fibration, with $Z, M, B$ smooth 
manifolds, and let $E\to M$ be a smooth bundle.  
The pushdown bundle $\calE = \pi_*E$ is a bundle over $B$ with fiber $\Gamma(E|_{\pi^{-1}(b)})$ over $b\in B.$  
To specify the topology of $\calE$, we can choose either a Sobolev class of $H^s$ 
sections for the fibers or the Fr\'echet topology on smooth sections.  

Using the transition functions of $E$, we can check that $\calE$ is a smooth bundle with Banach spaces or Fr\'echet spaces as fibers in these two cases.  For local triviality, take a connection $\nabla$ on $E$, and fix a neighborhood $U$ containing $b$ over which the fibration is trivial.  We can assume that $U$ is filled out by radial curves centered at $b$.  Take a connection for the fibration, {\it i.e.}, a complement to 
the kernel of $\pi_*$ in $TM.$  For $m\in \pi^{-1}(b)$, each radial curve has a unique horizontal lift to a curve in $M$ starting at $m$.  For $s\in \Gamma(E|_{\pi^{-1}(b)})$,
take the $\nabla$-parallel translation of $s$ along each horizontal lift at $m$.  This gives a smooth isomorphism of $\calE_b$ with $\calE_{b'}$ for all $b'\in U.$  

The connection $\nabla$ pushes down to a connection $\pi_*\nabla$ on $\calE$ by
\begin{equation}\label{pdconn}\pi_*\nabla_X(s')(m) = \nabla_{X^h}(\tilde s)(m),
\end{equation}
where $X^h$ is the horizontal lift of $X\in T_bB$ to $T_mM$, and
$s'\in \Gamma(\calE)$ and $\tilde s\in \Gamma(E)$ is defined by 
$\tilde s(m) = s'(\pi(m))(m).$  Thus $\pi_*\nabla$ acts as a first order 
operator on $\calE_b.$
The curvature $\Omega'$ of $\pi_*\nabla$, defined by
\begin{eqnarray*} \Omega'(X,Y) &=& \pi_*\nabla_X\pi_*\nabla_Y - \pi_*\nabla_Y\pi_*\nabla_X - \pi_*\nabla_{[X,Y]}\\
&=& \nabla_{X^h}\nabla_{Y^h} - \nabla_{Y^h}\nabla_{X^h} -\nabla_{[X,Y]^h}\\
&=& \nabla_{X^h}\nabla_{Y^h} - \nabla_{Y^h}\nabla_{X^h} - \nabla_{[X^h, Y^h]} 
+ \left(\nabla_{[X^h, Y^h]} - \nabla_{[X,Y]^h}\right)\\
&=& \Omega(X^h, Y^h) + \left(\nabla_{[X^h, Y^h]} - \nabla_{[X,Y]^h}\right),
\end{eqnarray*} 
satisfies $\Omega'(X,Y) (s')(m) = \Omega(X^h, Y^h)(\tilde s)(m)$ iff
$[X,Y]^h = [X^h, Y^h]$, {\it i.e.}, iff the connection for the fibration has vanishing curvature
\cite[p.~20]{BGV}.
$\Omega(X^h, Y^h)$ is a zeroth order or multiplication operator, so in
 general, $ \nabla_{[X^h, Y^h]} - \nabla_{[X,Y]^h}$ and hence $\Omega'$
acts on  $\calE_b$ as a first order differential operator: $\Omega\in \Lambda^2(B,\calD^1)$ in the obvious notation.

For a finite rank bundle $F\to B$ with connection, the curvature lies in $\Lambda^2(B,
\End(F))$, and
Chern classes are built from the usual matrix trace tr on $\End(F).$  There are no known 
nontrivial traces on $\calD^1$, as the Wodzicki residue vanishes on differential operators.  However, if the connection on the fibration is integrable, then a 
 natural trace $\trlo$ on multiplication operators can be built from the matrix trace as follows.
 Take a Riemannian metric on $M$, giving a volume form 
$\dvol_b$  on each fiber $M_b.$\footnote{If we choose this metric initially, we can take the horizontal distribution to be the orthogonal complement to Ker $\pi_*.$}
 For $\eta\in \Lambda^k(B,\End(\calE))$ locally of the form $\eta = \sum_i a_{i}\omega^i\otimes A^i$
 with $\omega^i \in \Lambda^k(B), A^i\in \End(\calE)$ 
 \begin{equation}\label{trlo}
 \trlo(\eta) = \sum_i a_i\omega^i\int_{M_b} \tr (A^i)\dvol_b\in \Lambda^k(B).
 \end{equation}
 In \cite{P-R2}, this trace is called the leading order trace, as it extends to bundles whose transition functions are zeroth order pseudodifferential operators.  
 
 \begin{defn} {\it The leading order Chern classes of $\calE$ are 
 $$\cklo(\calE) = \left[ \trlo(\Omega^k)\right]\in H^{2k}(B, \C),$$
 where the brackets denote the de Rham cohomology class.  The leading order
 Chern character of $\calE$ is}
 $$ch^{\rm lo}(\calE) = \left[\trlo(\exp(\Omega))\right]\in H^{\rm ev}(B, \C).$$
 \end{defn}
 \medskip
 
 For 
 $k=0$, $\Tr(AB) = \Tr(BA)$, so the usual proof that the Chern form $\Tr(\Omega^k)$
 is closed with de Rham class independent of choice of connection carries over.  

The integral in $\Tr(\eta)$ is an averaging of the endomorphism and leaves the degree of $\omega^i$ unchanged.
In contrast, we can integrate $\omega^i$ over the fiber as well, which we denote by
$\int_Z\omega^i$ or $\pi_*\omega^i.$  This leads to a second type of Chern class, called string classes  
\cite{m-v} or caloron classes \cite{h-m-v}.  Let $z = {\rm dim}\ Z.$

\begin{defn} \label{2.2}{\it The string classes of $\calE$ are}
$$c^{\rm str}_k(\calE) = \left[ \pi_*\Tr(\Omega^k)\right]  = \pi_* \left[ \Tr(\Omega^k)\right]\in H^{2k -z}(B, \C).$$
{\it The string Chern character of $\calE$ is}
$$ch^{\rm str}(\calE) = \left[\pi_*\Tr( \exp(\Omega))\right] \in H^{{\rm ev} -z}(B,\C).$$
\end{defn}

The  $\pi_*$ outside the brackets denotes the induced
pushforward on de Rham cohomology.  Thus string classes satisfy a naturality condition:
\begin{equation}\label{1.1} 
c^{\rm str}_k(\pi_*E) = \pi_* c_k(E).
\end{equation}

 The string classes are 
 the topological pushforward, and so 
can be defined for any coefficient ring, most easily
if the total space $M$ of the fibration is compact.
  Specifically, for Poincar\'e duality $\PD_M:H_*(M,\Z)\to H^{{\rm dim}\ M-*}(M,\Z)$, we have 
\begin{equation}\label{PD}c^{\rm str}_k(\pi_*E) = \PD_B\circ \pi_*\circ \PD_M^{-1} c_k(E),
\end{equation}
where $\pi_*$ on the right hand side is the usual homology pushforward. Thus the string classes are novel only in that they are identified with characteristic classes of infinite rank bundles.   

In contrast,
the leading order classes have no obvious interpretation for $\Z$ coefficients.  However, the leading order forms for a fibration contain more information than the string forms: the string forms  average only terms with 
all fiber variables and discard the rest, while the leading order forms average all terms as in (\ref{trlo}).

Both classes have associated Chern-Simons classes; see \S5.  As with ordinary CS classes, the leading order and string CS classes are geometric objects.

\section{Equivariant characteristic classes on loop spaces}

{\it The results in this section are correct until (3.11); see the footnote there.}
\bigskip

In this section we construct equivariant characteristic classes on $LM$
and relate them to the corresponding characteristic classes on $M$. We use these constructions to restate
 the $S^1$ index theorem in terms of data on $LM$. 

Let $M$ be a closed, oriented, Riemannian $n$-manifold. Fix a 
parameter $s \gg 0$, and let $LM = L^{(s)}M$ be the space of maps $f:S^1\to M$ of Sobolev class 
$s\gg 0$.  $LM$ is a Banach manifold.  The space $L^{(\infty)}M$ of smooth loops is only a 
Fr\'echet manifold; the techniques of this section work in this case as well.

The tangent space $T_\gamma LM$ at a loop $\gamma$ consists of vector fields along $\gamma$, 
{\it i.e.,} sections of $\gamma^*TM\to S^1$ of either Sobolev class $s$ or smooth.  In sheaf theory terms, $TL^{(\infty)}M = \pi_*\ev^*TM$, for\\
 $\ev: LM\times S^1\to M$ the evaluation map
$\ev(\gamma,\theta) = \gamma(\theta)$ and $\pi:LM\times S^1\to LM$ the projection. The following diagram encapsulates the setup.

\begin{equation}\label{encap}\begin{CD} 
@.\textrm{ev}^*TM @>>> TM\\
@.@VVV   @VVV\\
@.LM\times S^1 @>{\rm ev}>> M\\
@.@V\pi VV @.\\
TLM = \pi_*\ev^*TM @>>> LM  @.
\end{CD}\ \ \ \ \ \ \ \ \ \ \ \ \ \ \ \ \ \ \ \ \ \ \ \ \ \ \ \ \ \ \ \end{equation}
We remark that since $T_\gamma LM$ is noncanonically isomorphic to the trivial bundle
$\underline {\R}^n = S^1\times \R^n\to S^1$,   the structure group of $TLM$ is the gauge group of $\underline {\R}^n.$

$LM$ has the $L^2$ Riemannian metric
$$\langle X, Y\rangle_\gamma = \frac{1}{2\pi}\int_{S^1}\langle X(\theta), Y(\theta)\rangle_{\gamma(\theta)} d\theta.$$
Let $\nlm$ be the $L^2$ connection on $LM$ associated to the Levi-Civita
connection $\nm$ on $M$.  This is given by ``pulling  back and pushing down $\nm$ 
to $LM$."  
To define this carefully, particularly  at self-intersections
of $\gamma$, pick $X\in T_\gamma LM$, $Y\in \Gamma(TLM),$ define
$\tilde\gamma:(-\epsilon,\epsilon)\times S^1\to M$ by
$\tilde\gamma(t,s) = \exp_{\gamma(s)}tX(s),$ 
and define the vector field $Y(t,s)$ on $(-\epsilon,\epsilon)\times S^1$ by 
$Y(t,s)  = Y_{\tilde\gamma(t,s)}.$  Then
\begin{equation}\label{l2lc} 
(\nlm_XY)_\gamma(s) = [(\ev^*\nabla^M)_{(X,0)}\widetilde Y]_{(\gamma,s)},
\end{equation}
where $\widetilde Y\in \Gamma(\ev^*TM)$ is $\widetilde Y_{(\gamma,s)} = Y_{\gamma(s)}.$ 

Similarly, if $(E,\nabla^E,h)\to M$ is a hermitian vector bundle with connection, we can form $\calE = \pi_*\ev^*E\to LM$ with fiber
$\calE_\gamma = \Gamma(\gamma^*E\to S^1)$ in the Fr\'echet case.  The structure group of $\calE$ is
the gauge group of $\underline {\R}^k\to S^1$, $k = {\rm rk}(E).$  
$\calE$ has an $L^2$ hermitian 
metric $\langle e_1, e_2\rangle_\gamma = (2\pi)^{-1}\int_M h(e_1, e_2) d\theta$.  As above, $\nabla^E$  
pushes down to a hermitian connection $\nabla^\calE$ on $\calE.$


$LM$ has a canonical $S^1$ action $k_s:LM\to LM, s\in [0,2\pi],$ given by rotation of loops: $k_s(\gamma)(\theta) =
\gamma(s+\theta).$  $k_s$ is an isometry of $LM$.  
$\calE$ is an equivariant bundle for this action:
$$\begin{CD} \calE@>k'_s>>\calE\\
@VVV @VVV\\
LM@>>k_s> LM
\end{CD}$$
where $k_s':\calE_\gamma\to \calE_{k_s\gamma}, k_s'(e)(\theta) = e(s+\theta).$

As pointed out in \cite{B}, $\nlm$ is not an $S^1$-invariant connection.
Recall that an invariant connection would satisfy \cite[p. 26]{BGV}
\begin{equation}\label{zero}\nlm k_{s,*} =k_{s,*}\nlm.
\end{equation}
More explicitly, let $W = W_\gamma (s)$ be a vector field on $LM$ and $X\in
T_\gamma LM.$  Then (\ref{zero}) means
\begin{equation}\label{zero1} ({\nlm}_X k_{s,*}[W_{k_{-s}(\gamma)}]) (s_0) = k_{s,*}\left(
 [ {\nlm}_{k_{-s,*}X}W]_{k_{-s}(\gamma)}\right)(s_0)
\end{equation}
at $\gamma.$
To compute $k_{s,*}:T_{k_{-s}(\gamma)}LM \to T_\gamma LM$, set
$\gamma_t(s) = \exp(tW(s))$.  Then
\begin{equation}\label{zero0}
(k_{s,*}W_\gamma)(s_0) =
\frac{d}{dt}\biggl|_{t=0} (k_s(\gamma_t))(s_0) =
\frac{d}{dt}\biggl|_{t=0}\exp(tW(s+s_0)) = W_\gamma(s+s_0) = W_{k_s(\gamma)}(s_0),
\end{equation}
which is indeed a vector field along $k_s(\gamma).$
By (\ref{l2lc}),
the left hand side of (\ref{zero1}) equals
$$[(\ev^*{\nm}_{(X,0)} )  \widetilde W]_{(\gamma, s_0)}\in \ev^*TM\bigl|_{(\gamma, s_0)}
= T_{\gamma(s_0)}M,$$
while the right hand side equals
$$k_{s,*}\left[(\ev^*\nm)_{(k_{-s,*}X,0)}\widetilde W\right]_{(k_{-s}(\gamma),s_0)}
\in k_{s,*}\left(\ev^*TM \bigl|_{(k_{-s}(\gamma),s_0)} \right)= T_{\gamma(s_0)}M.$$
Even though both sides of (\ref{zero1}) are vectors in
$T_{\gamma(s_0)}M$, the two sides  differentiate $W$ at the different points
$\gamma(s_0), \gamma(-s+s_0)$, and so $\nlm$ is not $S^1$-invariant.  


As in \cite[(1.10)]{BGV}, we can average $\nlm$ over the action to
produce 
\begin{equation}\label{wnlm}
\wnlm = \int_0^{2\pi}\left(k_s^{T^*
LM\otimes TLM}\right)^{-1} \nlm
k_{s,*}\ \bds,
\end{equation} 
where 
$(k_s^{T^*LM\otimes TLM})^{-1}\nabla^{LM}_XY = \nlm_{k_{-s,*X}} k_{-s,*}Y$ 
and  $\bds = \frac{1}{2\pi} ds.$
$\wnlm$ is $S^1$-equivariant, since (simplifying the notation)
\begin{eqnarray*}k_{s_0}\wnlm &=& k_{s_0}\left(\int_0^{2\pi}k_s^{-1} \nlm
k_s\ \bds\right) = \int_0^{2\pi}k_{s_0-s}
\nlm k_s\ \bds\\
& =& \int_0^{2\pi}k_{-s}
\nlm k_{s+s_0}\ \bds
=  \wnlm k_{s_0}.\end{eqnarray*} 
We can similarly average $\nabla^\calE$ to obtain an $S^1$-invariant connection $\widetilde\nabla^\calE.$

We now follow \cite[\S7.1]{BGV}.  Let
$ (\C[\fraku]\otimes \Lambda^*(LM))^{S^1}$
be the space of equivariant forms on $LM$ with values 
polynomials on
$\fraku = \fraku(1)$, the Lie algebra of $S^1.$  Equivalently, this is the space of 
equivariant polynomial maps from 
$\fraku$ to $\Lambda^*(LM)$.
For deg$(u) = 2$, this space becomes a complex for the degree one equivariant differential
$(d_\frakg \alpha)(u)= d(\alpha(u))- \iota_u \alpha(u), u\in \fraku$, where $\iota_u$ is the interior product 
of the vector field on $LM$ associated to $u$.  In particular, if $u = \partial_\theta$ in the usual notation, 
then $\iota_u = \iota_{\dot\gamma}$ at the loop $\gamma$ (so $\dot\gamma\in T_\gamma LM)$).
The cohomology of this complex  at $u=0$ is the Cartan model for the equivariant cohomology $H_{S^1}(LM, \C)$.

$\widetilde\nabla^\calE$ has the associated so-called equivariant connection 
$\widetilde\nabla^\calE_\fraku$ acting on 
$(\C[u]\otimes \Lambda^*(LM,\calE))^{S^1}$:
$$(\widetilde\nabla^\calE_\fraku\alpha)(X) = (\widetilde\nabla^\calE-\iota_X)(\alpha(X)),\ \ X\in \fraku.$$
 With $\calE$ and $\alpha$ understood, we also denote the left hand side of this equation by $\wng(X).$ The 
 equivariant curvature is by definition
$$\wog^\calE = \wng^2(X) + L_{X}^{\calE}.$$
 Here $L_X^\calE$ is the Lie derivative along the vector field on the $S^1$-manifold $\calE$ determined by
 $X\in\fraku.$ 
 By \cite[Prop. 7.4]{BGV}, $\wog^\calE \in (\C[u]\otimes\Lambda^*(LM, \End(\calE)))^{S^1}$.  

Assume that $M$ has an $S^1$ action $a:S^1\times M\to M$. The case of the trivial 
action $a(\theta, m) = m$ is not uninteresting.
By averaging the metric over $S^1$, we may assume that the action is via isometries.
 Let $a_s:M\to M$
be $a_s(m) = a(s,m).$ The action induces
an embedding 
$$a':M\to LM, \ \  a'(m) = (s\mapsto a(s,m)).$$
The following diagram commutes:
\begin{equation}\label{diagram}\begin{CD}M @>a_s>> M\\
@V{a'}VV  @VV{a'}V\\
LM @>>{k_s}> LM
\end{CD}
\end{equation}
Let $Y$ be the vector field for the flow $\{a_s\}$ on $M$, {\it i.e.}, $Y$ is the vector field corresponding to 
$\partial_\theta\in \fraku.$  Since $\{k_s\}$ is the flow of $\dot\gamma$ on $LM$,
 it follows from (\ref{diagram}) that
for a vector field $V$ on $M$,
\begin{equation}\label{nine} L_{\dot\gamma}(a'_*V) = \frac{d}{ds}\biggl|_{s=0}(k_{-s}\circ a')_* V
= \frac{d}{ds}\biggl|_{s=0}(a'\circ a_{-s})_*V 
=  a'_*L_YV,
\end{equation}

For $V_0\in T_mM$, we have 
\begin{equation}\label{ab}(a'_*V_0)(s) = (a_s)_*(V_0).
\end{equation}
 From now on, we 
denote $a'$ just by $a$, so $(a_*V_0)(s) = (a_s)_*(V_0)$.
Let $i:M\to LM$ be the isometric embedding taking a point to a constant loop.  On
Fix, the fixed point set of $a$, we have $a = i$  and $ia_s = k_s i.$
Let $T\subset TLM$ be the rank $n$ subbundle of $TLM$ of ``rotated vectors": the fiber is $T_\gamma =
\{s\mapsto a_{s,*}(V_0), V_0\in
T_{\gamma(0)}M\}$.  Thus $V\in\Gamma(TM)$ implies $a_*V\in\Gamma(T).$  Clearly $i^*T\simeq TM$
and $\beta:a^*T\stackrel{\simeq}{\to}  TM$ via $\beta(s\mapsto
k_{s,*}(V_0))= V_0.$    $\beta$ induces an isomorphism
$\Lambda^*(M, a^*T)\stackrel{\simeq}{\to} \Lambda^*(M,TM).$


 
 We will  need the analog of (\ref{nine}) for the Levi-Civita connection.
 \begin{lem}  Under the isomorphism $\beta: a^*T\simeq TM$, we have
 \begin{equation}\label{lem1}
 a^*(\nlm_\cdot\dot\gamma) = \nm_\cdot Y\in \Lambda^1(M,TM).
 \end{equation} 
 \end{lem}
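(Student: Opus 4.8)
The plan is to reduce the infinite-dimensional statement to the finite-dimensional definition (\ref{l2lc}) of $\nlm$ together with torsion-freeness of $\nm$, and then exploit that both the orbit loops $a'(m)$ and the fields in $T$ are built from the isometric flow $\{a_s\}$. First I would establish a general formula for $\nlm_X\dot\gamma$, valid for any loop $\gamma$ and any $X\in T_\gamma LM$. Applying (\ref{l2lc}) with the field $\dot\gamma$, whose associated section of $\ev^*TM$ is $\widetilde{\dot\gamma}_{(\gamma,s)}=\partial_s\gamma(s)$, and the variation $\tilde\gamma(t,s)=\exp_{\gamma(s)}(tX(s))$, the covariant derivative $(\ev^*\nm)_{(X,0)}\widetilde{\dot\gamma}$ becomes $\frac{\nm}{\partial t}\big|_{t=0}\partial_s\tilde\gamma$. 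Since $[\partial_t,\partial_s]=0$ and $\nm$ is torsion-free, the symmetry lemma gives $\frac{\nm}{\partial t}\partial_s\tilde\gamma=\frac{\nm}{\partial s}\partial_t\tilde\gamma$, and evaluating at $t=0$, where $\partial_t\tilde\gamma=X$ and $\partial_s\tilde\gamma=\dot\gamma$, yields
\[
(\nlm_X\dot\gamma)_\gamma(s)=\frac{\nm}{\partial s}X(s),
\]
so that $\nlm_X\dot\gamma$ is simply the covariant derivative of the field $X$ along the loop in the velocity direction. (This is consistent with $C^\infty(LM)$-linearity in $X$, since such functions are constant in $s$.)

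Next I would specialize to $\gamma=a'(m)$ and $X=a_*V_0$ for $V_0\in T_mM$. Because $\gamma(s)=a_s(m)$ and $a_{s+h}=a_h\circ a_s$, the velocity is $\dot\gamma(s)=\frac{d}{ds}a_s(m)=Y_{\gamma(s)}$, while by (\ref{ab}) $X(s)=(a_s)_*V_0$. The formula above then reduces the lemma to the finite-dimensional identity
\[
\frac{\nm}{\partial s}\big[(a_s)_*V_0\big]=a_{s,*}\big(\nm_{V_0}Y\big)\qquad\text{for all }s.
\]

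To prove this sub-identity I would first compute the value at $s=0$ by a second use of the symmetry lemma: choosing a curve $c(u)$ with $c(0)=m$, $c'(0)=V_0$ and setting $\Phi(s,u)=a_s(c(u))$, one has $\partial_s\Phi=Y_\Phi$ and $\partial_u\Phi(s,0)=(a_s)_*V_0$, whence $\frac{\nm}{\partial s}\big|_{s=0}(a_s)_*V_0=\frac{\nm}{\partial u}\big|_{u=0}Y_{c(u)}=\nm_{V_0}Y$. To propagate this to all $s$ and, crucially, to see that the result is a \emph{rotated} vector (so that it lies in $a^*T$ and $\beta$ applies), I would use that each $a_s$ is an isometry, hence commutes with covariant differentiation, together with the flow-invariance $(a_s)_*Y=Y$: writing $(a_{s_0+h})_*V_0=(a_{s_0})_*(a_h)_*V_0$ and carrying the isometry $a_{s_0}$ through the $h$-derivative gives $\frac{\nm}{\partial s}\big|_{s_0}(a_s)_*V_0=a_{s_0,*}(\nm_{V_0}Y)$. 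This exhibits $(\nlm_{a_*V_0}\dot\gamma)_\gamma$ as the rotated field $a_*(\nm_{V_0}Y)\in T_{a'(m)}$, and since $\beta$ reads off the $s=0$ value of a rotated vector, I obtain $\beta\big(a^*(\nlm_\cdot\dot\gamma)\big)=\nm_\cdot Y$, which is exactly (\ref{lem1}).

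The main obstacle I anticipate is not any individual computation but the bookkeeping needed to guarantee that the pulled-back form takes values in the subbundle $a^*T$: the formula for $\nlm_X\dot\gamma$ a priori lands only in $a^*TLM$, and it is precisely the isometry-equivariance of $\nm$ and the flow-invariance of $Y$ that force the answer to be a genuine rotated vector for \emph{every} $s$, which is what makes the identification through $\beta$ legitimate rather than merely an assertion about the value at $s=0$.
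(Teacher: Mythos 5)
Your proposal is correct, and it proves the lemma by a genuinely different route than the paper. The paper's proof extends $V_0$ to a vector field $V$ on $M$, applies torsion-freeness of $\nabla^{LM}$ in the Koszul form $\nabla^{LM}_{a_*V_0}\dot\gamma = \nabla^{LM}_{\dot\gamma}a_*V + [a_*V,\dot\gamma]$, converts the bracket by naturality, $[a_*V,a_*Y]=a_*[V,Y]=-a_*L_YV$, evaluates $\nabla^{LM}_{\dot\gamma}a_*V$ in local coordinates (using that $a$ acts by isometries) to get $s\mapsto a_{s,*}\nabla^M_YV$, and finishes with torsion-freeness on $M$ in the form $\nabla^M_YV-L_YV=\nabla^M_VY$. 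You instead isolate a clean structural fact the paper never states: for \emph{any} loop $\gamma$ and any $X\in T_\gamma LM$, the definition (\ref{l2lc}) plus the symmetry lemma for two-parameter variations gives $\nabla^{LM}_X\dot\gamma = \tfrac{D X}{ds}$, the covariant derivative of $X$ along the loop. This reduces the lemma to the finite-dimensional identity $\tfrac{D}{ds}\bigl[(a_s)_*V_0\bigr]=a_{s,*}\bigl(\nabla^M_{V_0}Y\bigr)$, which you prove by a second application of the symmetry lemma (to $\Phi(s,u)=a_s(c(u))$) at $s=0$, followed by the group law and isometry-equivariance of covariant differentiation to propagate to all $s$. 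What your route buys: no extension of $V_0$ to a vector field, no local coordinates, no Lie brackets of pushed-forward fields on $LM$, and an explicit argument that the resulting field is a rotated one, i.e.\ lies in $T$, so that applying $\beta$ is legitimate for every $s$ (in the paper this point is carried implicitly by the coordinate computation); the intermediate formula $\nabla^{LM}_\cdot\dot\gamma = D/ds$ is also independently useful, e.g.\ it is consistent with the pointwise identities (\ref{insert0}) used later. What the paper's route buys: it is shorter for a reader comfortable with bracket manipulations, and it ties directly into the Lie-derivative identity (\ref{nine}) established immediately before the lemma. Both arguments ultimately rest on the same two ingredients --- torsion-freeness of the connections and the fact that $\{a_s\}$ is an isometric flow with $(a_s)_*Y=Y$ --- so the geometric content coincides even though the technical paths differ.
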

 \begin{proof}  Extend a fixed $V_0\in T_{m_0}M$ to a vector field $V$ on $M$, so $a_*V$ is an extension of
 $a_*V_0\in T_{a(m_0)}LM$ to a vector field on $a(M) \subset LM.$  Then
 \begin{eqnarray*} a^*(\nlm_\cdot\dot\gamma)(V_0) &=& \nlm_{a_*V_0}\dot\gamma
 = \nlm_{\dot\gamma}a_*V +[a_*V,\dot\gamma]
 = \nlm_{\dot\gamma}a_*V +[a_*V, a_*Y]\\
 &=& \nlm_{\dot\gamma}a_*V +a_*[V,Y]
=
    \nlm_{\dot\gamma}a_*V -a_*L_YV.
 \end{eqnarray*}
In local coordinates,  
$$(\nlm_{\dg} a_*V)^i = (s\mapsto \dot \gamma(a_{s,*}V)^i + \Gamma^i_{jk}\dg^j(a_{s,*}V)^k)
=(s\mapsto \nm_Ya_{s,*}V = a_{s,*}\nm_YV),$$
since  $Y = a_{s,*}Y$ is the velocity vector field for the orbit $\gamma$ and
$a$ acts via isometries.    Since 
$a_{*}L_YV = (s\mapsto a_{s,*}L_YV)$, we have
$$a^*(\nlm_\cdot\dot\gamma)(V_0)= (s\mapsto a_{s_*}(\nm_YV-L_YV)) = (s\mapsto a_{s,*}(\nm
_VY))\in T|_\gamma = a^*T|_{m_0}$$
Thus using $\beta:a^*T\stackrel{\simeq}{\to} TM$, we have
$a^*(\nlm_\cdot \dg) = \nm_\cdot Y.$
\end{proof}

 \medskip

 We now focus on the Riemannian case with $E = TM.$
 
 Let 
 $\Omega_\fraku = Ма\og$ be the 
equivariant curvature of the $S^1$-invariant Levi-Civita connection $\nm$, and let $\widetilde \Omega_\fraku$
be the equivariant curvature of $\widetilde\nabla^{LM}.$   Since $\widetilde \Omega^k_\fraku
\in (\C[u]\otimes\Lambda^*(LM, \End(TLM)))^{S^1}$ takes values in pointwise 
endomorphisms,\footnote{This statement is incorrect, as a calculation by T. McCauley shows that  $\widetilde \Omega^k_\fraku$ takes values in first order differential operators.}
its
powers have a leading order trace
 \begin{equation}\label{lot}\Tr(\wog^k) = \int_{S^1} \tr  (\widetilde\Omega_\fraku^k(s)) \bar d s\in
 (\C[u]\otimes\Lambda^*(LM))^{S^1}.
 \end{equation}

\begin{lem} \label{lem2}$a^*\Tr(\widetilde \Omega^k_\fraku) = \tr((\og)^k)$.
\end{lem}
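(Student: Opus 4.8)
The plan is to expand each equivariant curvature into an ordinary-curvature piece plus a moment piece and then match the two sides term by term under the isomorphism $\beta:a^*T\stackrel{\simeq}{\to}TM$. For any $S^1$-invariant connection with generating field $X$, the Cartan-model equivariant curvature $\wng^2(X)+L_X$ is of the form $\Omega+\mu$ with moment $\mu=L_X-\nabla_X$: on sections $\wng^2(X)=(\nabla-\iota_X)^2=\Omega-\nabla_X$, so $\wng^2(X)+L_X=\Omega+(L_X-\nabla_X)$. On $M$ the generator is $Y$, which is a Killing field because $a$ acts by isometries, so $\mu_Y(V)=L_YV-\nm_YV=[Y,V]-\nm_YV=-\nm_VY$; hence $\og=\om-\nm_\cdot Y$. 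On $LM$ the generator is $\dg$, so $\wog$ has ordinary-curvature part $\wolm$ and moment part $L_{\dg}-\wnlm_{\dg}$. It therefore suffices to identify, under $\beta$, the moment parts and the curvature parts separately, and then integrate over $S^1$.

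First I would show that after pullback by $a$ the averaged connection $\wnlm$ may be replaced by $\nlm$. The embedding intertwines the two circle actions, $k_s\circ a=a\circ a_s$ by (\ref{diagram}), and $\nm$ is $a_s$-invariant since $a_s$ is an isometry; consequently each conjugate $(k_s^{T^*LM\otimes TLM})^{-1}\nlm k_{s,*}$ in (\ref{wnlm}), once pulled back by $a$ and read through $\beta$, becomes an $a_s$-conjugate of the $\nm$-data, which is again $\nm$. Thus the $\bds$-average in (\ref{wnlm}) is the average of an $s$-independent quantity and $a^*\wnlm$ reduces to $a^*\nlm$. With this reduction, $\nlm$ is torsion free, so its moment is $L_{\dg}-\nlm_{\dg}=-\nlm_\cdot\dg$, and (\ref{lem1}) gives $a^*(-\nlm_\cdot\dg)=-\nm_\cdot Y$, matching the moment of $\og$. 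For the curvature part, $\nlm$ is the pushdown of the product (hence integrable) fibration $LM\times S^1\to LM$, so by the \S2 computation $\olm$ is the pointwise multiplication operator $\olm(X_1,X_2)(W)(s)=\om(X_1(s),X_2(s))(W(s))$; pulling this back by $a$ and invoking the $S^1$-invariance of $\om$ gives $\beta(a^*\olm)=\om$.

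Combining the two identifications, $a^*\wog$ at parameter $s$ is the endomorphism $a_{s,*}\circ\og\circ a_{s,*}^{-1}$ of $T_{a_s(m)}M$. Because the matrix trace is conjugation invariant, $\tr\!\big((a^*\wog)^k(s)\big)=\tr((\og)^k)$ for every $s$, independent of $s$. Feeding this into the definition (\ref{lot}) yields $a^*\Tr(\wog^k)=\int_{S^1}\tr\!\big((a^*\wog)^k(s)\big)\,\bds=\tr((\og)^k)\int_{S^1}\bds=\tr((\og)^k)$, since $\bds=\frac1{2\pi}ds$ has total mass one. I expect the main obstacle to be the reduction in the second paragraph: one must check that the $k_{s,*}$-conjugations defining $\wnlm$ collapse to $a_{s,*}$-conjugations after pullback by $a$, using only $k_s a=a a_s$ and the isometry-invariance of $\nm$, and that these leftover $a_{s,*}$-conjugations are exactly the ones washed out by the conjugation-invariant trace. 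Carefully distinguishing the base point $a_s(m)$ from $m$ under $\beta$ throughout is the delicate bookkeeping.
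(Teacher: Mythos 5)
Your proposal takes essentially the same route as the paper's proof: decompose the equivariant curvature into curvature-plus-moment parts (the paper's (\ref{eqcurv})), identify the moment part via Lemma \ref{lem1}, use the pointwise nature of the $L^2$ connection together with isometry-equivariance to see that the endomorphism at parameter $s$ is the $a_{s,*}$-conjugate of its value at $s=0$ (the paper expresses this conjugation-invariance with the moving frame $e_i(s)=a_{s,*}e_i$), and conclude that the $\bds$-average is trivial; the paper also handles the collapse of $\wnlm$ to $\nlm$ along the orbit by exactly the observation you flag as the main obstacle. Note that, like the paper's own argument, yours relies on $\wog^k$ taking values in pointwise endomorphisms, which is the step the authors' added erratum identifies as incorrect for this section.
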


Since the curvature form is skew-symmetric, both sides are zero if $k$ is odd.
\medskip

\begin{proof} 

 Denote the
  orbit $a(m)$ by $\gamma.$
Because the action is via isometries, we have  \cite[p. 209-210]{BGV},
\begin{equation}\label{eqcurv} \wog = 
\wolm -\wnlm_\cdot {\dot\gamma}, 
\end{equation}
where $\wolm$ is the curvature of $\wnlm.$ On the right hand side of (\ref{eqcurv}),
$\wolm\in\Lambda^2(LM, \Hom(TLM))$, and  $\wnlm_\cdot {\dot\gamma}\in
\Lambda^0(LM, \Hom(TLM))$.  
Thus for $Y_1, Y_2, Z\in
T_\gamma LM$, $\wog(Y_1, Y_2)Z = 
\wolm(Y_1, Y_2)Z - \wnlm_Z{\dot\gamma}
\in T_\gamma LM$.  Because of the pointwise nature of the $L^2$ connection, we have
\begin{equation}\label{insert0}
\Omega^{LM}(a_*Y_1, a_*Y_2)(s) = \Omega^M(a_*Y_1(s), a_*Y_2(s) ),\ 
(\nabla^{LM}_{a_*Z}\dot\gamma) (s) = \nabla^M_{a_*Z(s)}\dot\gamma,
\end{equation}
and for $Z(0)\in T_mM$,
\begin{eqnarray}\label{insert}
\Omega^M(a_*Y_1(s), a_*Y_2(s) )a_{s,*}Z &=& a_{s,*}[\Omega^M(a_{s,*}(Y_1(0)), 
a_{s,*}(Y_2(0))Z(0)] = a_{s,*}[\Omega^M((Y_1(0), Y_2(0)Z(0)],\nonumber\\
\nabla^M _{a_{s,*}Z(0)}\dot\gamma &=& a_{s,*}\nabla^M_{Z(0)}\dot\gamma.
\end{eqnarray}

Let $\{e_i \}$ be an orthonormal frame of $T_mM$, so $\{a_{s,*}e_i  = e_i(s)\}$ is an orthonormal frame at $T_{a(m,s)}M.$
It follows from (\ref{eqcurv}) --  (\ref{insert}) that 
$$\tr((a^*(\Omega^{LM} - \nabla^{LM}_\cdot\dot\gamma)^k(Y_1,\ldots,Y_{2k}))  = 
\langle (\Omega^{LM} - \nabla^{LM}_\cdot\dot\gamma)^k(a_{s,*}Y_1,\ldots, a_{s,*}Y_{2k})(e_i (s)), 
e_i (s)\rangle_{a(m)(s)}$$
is independent of $s$.
Trivially averaging this expression over $s$ to obtain $\widetilde \Omega_\fraku$, we see that
$\widetilde \Omega_\fraku$ acts pointwise in $s$. Therefore
\begin{eqnarray}\label{ergo}\tr((a^*\widetilde\Omega_\fraku)^k)(Y_1,\ldots,Y_{2k})_m &=& 
\langle (\widetilde\Omega^{LM} - \widetilde\nabla^{LM}_\cdot\dot\gamma)^k(a_{s,*}Y_1,\ldots, a_{s,*}Y_{2k})(e_i (s)), 
e_i (s)\rangle_{a(m)(s)}\biggl|_{s=0}\nonumber \\
&=& \langle (\Omega^{M} - \nabla^{M}_\cdot Y)^k(Y_1,\ldots, Y_{2k})
(e_i),e_i\rangle_m\\
&=& \langle \Omega_\fraku^k(Y_1,\ldots, Y_{2k})
(e_i),e_i\rangle_m\nonumber\\
&=& \tr( \Omega_\fraku^k)(Y_1,\ldots,Y_{2k})_m.\nonumber
\end{eqnarray}
where the second line 
follows from Lemma \ref{lem1} and (\ref{insert0}) (and noting that $\nabla^M$ is already equivariant).

There is one final average in (\ref{lot}) to obtain
$$a^*\Tr(\widetilde\Omega^k_\fraku)_m= \int_{S^1}\tr((a^*[(\widetilde 
\Omega_\fraku)_{a(m)(s)}])^k)\bar d s.$$ 
However, (\ref{ergo}) shows that 
$$a^*\Tr(\widetilde\Omega^k_\fraku)_m = \int_{S^1} \tr((a^*[(\widetilde 
\Omega_\fraku)_{a(m)(0)}])^k)\bar d s = \tr((a^*[(\widetilde 
\Omega_\fraku)_{s}])^k) = \tr(\Omega_\fraku^k)_m.$$
\end{proof}

By this Lemma, we can extend the $\hat A$-polynomial as a characteristic form in
the curvature on $M$ to
an equivariantly closed form on $LM$.
The $\hat A$-polynomial of a curvature form $\Omega$ 
can be expressed as a polymonial in $\tr(\Omega^{2k}).$  In particular, $\hat
A(\wog)$ is defined using the leading order trace Tr.

Let $T|_M$ be the restriction of $T\subset TLM$ to the constant loops
$i(M)\subset LM.$

\begin{thm}\label{thm1} (i) $a^*\hat A(\wog) = \hat A(\og)$.   

(ii) $\hat A(\wog)$ is an
  equivariant extension of the $\hat A$-form on constant loops, i.e.  $\hat
  A(\wog)  = \hat A(\Omega^M)$ when restricted to  $T|_M.$
\end{thm}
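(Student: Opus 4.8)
The plan is to derive Theorem \ref{thm1} as a direct consequence of Lemma \ref{lem2}, since the $\hat A$-genus is a universal characteristic form built from the traces of even powers of the curvature. First I would recall the standard fact that the $\hat A$-polynomial can be written as $\hat A(\Omega) = \exp\left(\sum_{k\geq 1} b_k \tr(\Omega^{2k})\right)$ for universal constants $b_k$ (equivalently, as a polynomial in the Pontryagin forms $\tr(\Omega^{2k})$), where in the infinite-rank setting the trace $\tr$ is replaced by the leading order trace $\Tr$ of \eqref{trlo}. Because pullback $a^*$ commutes with wedge products, with scalar multiplication, and with the formal power series defining $\hat A$, applying $a^*$ to $\hat A(\wog)$ reduces termwise to computing $a^*\Tr(\wog^{2k})$.

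For part (i), I would then simply invoke Lemma \ref{lem2}, which gives $a^*\Tr(\wog^k) = \tr(\Omega_\fraku^k)$ for every $k$, so that applying the same universal polynomial to both sides yields
\begin{equation*}
a^*\hat A(\wog) = \hat A(a^*\wog) = \hat A(\og).
\end{equation*}
The only point requiring a word of care is that $a^*$ is an algebra homomorphism on equivariant forms and that the leading order trace $\Tr$ intertwines correctly with $a^*$ in each degree; this is exactly the content of the lemma, applied to each generator $\Tr(\wog^{2k})$ of the polynomial ring, so the verification is formal once the lemma is in hand.

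For part (ii), I would restrict attention to the constant loops $i(M)\subset LM$, where the embedding $a$ coincides with $i$ on the fixed point set $\mathrm{Fix}$, and more generally I would use the restriction of $T\subset TLM$ to $i(M)$. On constant loops $\dot\gamma = 0$, so the equivariant correction term $\wnlm_\cdot\dot\gamma$ in \eqref{eqcurv} vanishes and the equivariant curvature $\wog$ restricts to the ordinary curvature $\Omega^M$ of $M$; concretely, a constant loop has trivial rotation, so the averaging in \eqref{wnlm} is trivial and $\widetilde\Omega^{LM}$ agrees with $\Omega^M$ under the identification $T|_M\simeq TM$. Feeding this into the universal expression for $\hat A$ gives $\hat A(\wog)|_{T|_M} = \hat A(\Omega^M)$, which is the claimed equivariant extension property.

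I expect no serious obstacle here, since the theorem is essentially a corollary of the lemma together with the algebraic structure of the $\hat A$-polynomial; the main subtlety to state carefully is the claim (flagged as erroneous in the added footnote) that $\wog^k$ takes values in pointwise endomorphisms, which is what legitimizes writing $\Tr(\wog^k)$ via \eqref{lot} in the first place. Granting that claim as in the lemma, the proof is a clean reduction: express $\hat A$ through the Pontryagin generators, apply $a^*$, substitute Lemma \ref{lem2} for part (i), and set $\dot\gamma=0$ on constant loops for part (ii).
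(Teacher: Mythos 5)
Your proof is correct and is essentially the paper's own argument: the paper derives (i) ``immediately from the Lemma [Lemma \ref{lem2}], standard Chern-Weil theory,'' i.e.\ from writing $\hat A$ as a polynomial in $\tr(\Omega^{2k})$ and pulling back generator by generator, and derives (ii) from $\dot\gamma = 0$ on $i(M)$, exactly as you do. The one ingredient the paper cites that you leave implicit is the Bianchi identity for the equivariant curvature, which is what makes $\hat A(\wog)$ equivariantly closed and hence a genuine \emph{equivariant} extension in part (ii).
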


As above, these equalities use the isomorphisms $i^*T\simeq a^*T\simeq TM$ to
identify $  a^*({\nlm}_\cdot {\dot\gamma}), a^*\Omega^{LM}$, with 
$\nabla^M_\cdot Y, \Omega^M$, respectively.  (i) follows immediately from the Lemma, standard 
Chern-Weil theory,  and the fact that the equivariant curvature satisfies the Bianchi identity
\cite[\S7.1]{BGV}. (ii)
follows from $\dot\gamma = 0$
on $i(M).$

\begin{rem}  (i) This discussion extends to equivariant hermitian
bundles $E\to M$ and their
  loopifications $\calE = \pi_*\ev^*E\to LM.$  Namely, we can take an invariant
  connection $\nee$ on $E$, form the $L^2$/pointwise connection
  $\nabla^\calE$
  on $\calE$, average it to $\wne$, and prove that the corresponding
  equivariant   curvatures satisfy
$$a^*(\woeg) = \oeg.$$
In particular, the Chern characters satisfy 
$$a^*\che(\woeg) = \che(\oeg),$$
and the Chern character restricts to the ordinary Chern character on the constant loops.  See \cite{twz} for 
an alternative construction of an equivariant Chern character on $LM$.  By the localization formula for equivariantly closed forms, the top degree rational equivariant cohomology classes of these forms are determined by their values on constant loops, where they agree.  It remains to be seen if the actual forms agree.

(ii)   If $\Omega_\fraku^{LM} = \Omega^{LM} - \nabla^{LM}_\cdot\dot\gamma$ is built from the 
$L^2$ connection and curvature on $LM$, the proof of Lemma \ref{lem2} (without the final $S^1$ average) 
implies that
$a^*\Tr(\Omega^k_\fraku) = \tr((\Omega^M_\fraku)^k).$  Thus $a^*\hat A(\Omega_\fraku^{LM}) = \hat 
A(\Omega_\fraku^M).$  This is somewhat more natural than Theorem \ref{thm1}(i), but 
$\Omega_\fraku^{LM}$ and hence $\hat A(\Omega_\fraku^{LM})$ are not equivariantly closed.
\end{rem}

\medskip

Recall that $a:M\to LM$ by abuse of notation.

\begin{defn} {\it For an $S^1$ action $a:S^1\times M\to M$, set
  $[a] = a_*[M^n]\in H_n(LM,\Z)$.}
\end{defn}

Since $a:M\to LM$ is injective, we also denote its image by $[a]$, an 
$n$-dimensional submanifold of $LM$.  

We now review the $S^1$-index theorem.  Since $S^1$ is assumed to act on $(E,\nabla^E)$ covering its action via isometries on $M$, the kernel
and cokernal of $\dir_\nee$ are representations of $S^1$. The $S^1$-index of
$\dir_\nee$ is
the corresponding element of the representation ring $R(S^1)$:
$${\rm ind}_{S^1}(\dir_{\nee}) = \sum(a_k^+ - a_-^k)u^k\in \Z[u,u^{-1}] = R(S^1),$$
where $u^k$ denotes the representation $e^{i\theta}\mapsto e^{ik\theta}$ of $S^1$ on $\C$, and 
$a_\pm^k$ are the multiplicities of of $u^k$ in the kernel and cokernel of $\dir_{\nee}.$
For a general compact group $G$, the  $G$-index theorem identifies this 
analytically defined element of $R(G)$ with a topologically defined element.  This is difficult to compute in general, but there is a formula to compute  
the character of the action of a fixed $g\in G$ on the index space
$[{\rm ker}\ \dir_{\nee}] - [{\rm coker} \ \dir_{\nee}]\in K_G({\rm pt}) = R(G)$; this is 
often also called the $G$-index theorem.  
For $g = e^{i\theta}\in S^1$, the character is just 
${\rm ind}_{S^1}( e^{ik\theta},\dir_\nee) = \sum(a_k^+ - a_-^k)e^{ik\theta}.$  

The Atiyah-Segal-Singer fixed point formula for the $S^1$-index computes
${\rm ind}_{S^1}(\dir_{\nee}) $ in terms of data on the fixed point set of a particular $e^{i\theta}$ \cite[Thm. 6.16]{BGV}.  Using the localization theorem for integration in equivariant cohomology \cite[Thm. 7.13 ]{BGV}, this can be rewritten as 
\begin{equation}\label{gind} {\rm ind}_{S^1}( e^{-ik\theta}, \dir_{\nee})  =
 (2\pi i)^{-{\rm dim}(M)/2}\int_M 
\hat A_{\fraku}(\theta, \Omega_\fraku) {\rm ch}(\theta,\Omega_{\fraku}^E),
\end{equation}
 \cite[Thm. 8.2]{BGV}. Here $\hat A_{\fraku}(\theta, \Omega_\fraku) = \hat A(\Omega_\fraku)(\theta)
\in\Lambda^*(M)$ is the evaluation of $ \hat A(\Omega_\fraku)\in (\C[u]\otimes\Lambda^*M)^{S^1}$ at $\theta\in \fraku(1).$  (For general compact groups $G$, this theorem only holds for group elements close to the identity.  For the $S^1$-index theorem, both sides are analytic for $\theta$ small, and so the equality extends to all $\theta.$)
For notational ease, we rewrite (\ref{gind}) as
\begin{equation}\label{ease}\overline{\rm ind}_{S^1}( \dir_{\nee})  =
 (2\pi i)^{-{\rm dim}(M)/2}\int_M^{S^1} 
\hat A_{\fraku}(\Omega_\fraku) {\rm ch}(\Omega_{\fraku}^E),
\end{equation}
with the left hand side evaluated at $e^{-ik\theta}$ and the right hand side evaluated at $\theta.$

We can now restate the $S^1$-index theorem for the Dirac operator as a result involving the equivariant curvature of $LM$.  Unlike the usual statement, in this version the action information is contained precisely in $[a]$, while the integrand depends only on the (action-compatible)
Riemannian metric on $M$.  

\begin{thm}  \label{thm2} Let $M$ be a spin manifold with an isometric $S^1$-action, and let $E$ be an equivariant hermitian bundle with connection $\nabla^E$ over $M$.  Then
$$\overline{\rm ind}_{S^1}(\dir_{\nee})  = (2\pi i)^{-{\rm dim}(M)/2} \ints_{[a]}\hat A(\wog)\che(\woeg).$$
\end{thm}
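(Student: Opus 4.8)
The plan is to reduce the right-hand side---an equivariant integral over the finite-dimensional cycle $[a]\subset LM$---to the finite-dimensional equivariant integral over $M$ appearing in the Kirillov/$S^1$-index formula (\ref{ease}), using the pullback $a^*$ together with Theorem \ref{thm1}(i) and the Remark following it. Since the $S^1$-index theorem is used as input, this is a restatement rather than a loop-space reproof, and the argument is correspondingly short.

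First I would record that $a:M\to LM$ is $S^1$-equivariant: the commuting diagram (\ref{diagram}) says $a\circ a_s = k_s\circ a$, so $a$ intertwines the $S^1$-action on $M$ with loop rotation on $LM$. Hence $a^*$ carries $S^1$-equivariant forms on $LM$ to equivariant forms on $M$ and commutes with the equivariant differential $d_\fraku$; the key point is (\ref{nine}), which shows that $a$ sends the rotation generator $\dot\gamma$ on $LM$ to the action generator $Y$ on $M$, so that $\iota_{\dot\gamma}$ and $\iota_Y$ correspond under $a^*$. In particular $a^*$ preserves equivariant cohomology classes.

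Next, since $[a]=a_*[M^n]$, integration of an equivariant form $\omega$ over $[a]$ is by definition integration of its pullback over $M$:
$$\ints_{[a]}\omega = \int_M^{S^1} a^*\omega.$$
Applying this to $\omega = \hat A(\wog)\che(\woeg)$, and using that $a^*$ is a ring homomorphism on equivariant forms together with Theorem \ref{thm1}(i), $a^*\hat A(\wog)=\hat A(\og)$, and the Remark, $a^*\che(\woeg)=\che(\oeg)$, the right-hand side of the theorem becomes
$$(2\pi i)^{-{\rm dim}(M)/2}\int_M^{S^1}\hat A(\og)\,\che(\oeg),$$
which is precisely the right-hand side of (\ref{ease}). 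By the $S^1$-index theorem (\ref{ease}) this equals $\overline{\rm ind}_{S^1}(\dir_\nee)$, as desired.

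The main obstacle is justifying the equivariant bookkeeping in the displayed reduction: that equivariant integration $\ints$ over the finite cycle $[a]$ inside the infinite-dimensional manifold $LM$ genuinely reduces to $\int_M^{S^1}a^*(\cdot)$. One must check that evaluating the equivariant form at $\theta\in\fraku$, restricting to $[a]$, extracting the top ($n={\rm dim}\,M$) degree component, and integrating, commutes with pullback along $a$ and matches the corresponding operation on $M$. This rests on the equivariance of $a$ established above---so that $a^*$ respects $d_\fraku$ and hence preserves the relevant cohomology classes---together with the degree matching ${\rm dim}\,[a]={\rm dim}\,M$; since the identifications $i^*T\simeq a^*T\simeq TM$ are already in place, the integrands agree after pullback and the two integrals coincide.
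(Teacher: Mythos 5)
Your proposal is correct and follows essentially the same route as the paper's own proof: reduce $\ints_{[a]}$ to $\int_M^{S^1} a^*(\cdot)$ via $[a]=a_*[M]$, apply Theorem \ref{thm1}(i) together with the Remark for $a^*\che(\woeg)=\che(\oeg)$, and then invoke the $S^1$-index theorem in the form (\ref{ease}). The extra equivariance bookkeeping you supply (that $a$ intertwines the actions, so $a^*$ respects $d_\fraku$) is exactly what the paper leaves implicit, having already established the diagram (\ref{diagram}).
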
  

\begin{proof}  By Thm. \ref{thm1}, 
\begin{eqnarray*} \ints_{[a]} \hat A(\wog)\che(\woeg) &=& \ints_{a_*[M]}\hat A(\wog)\che(\woeg) = \ints_{[M]} a^*(\hat
A(\wog)\che(\woeg) )
=  \ints_M \hat A(\Omega_{\mathfrak u}){\rm ch}(\Omega^E_{\mathfrak u}).
\end{eqnarray*}
The $S^1$-index theorem in the form (\ref{ease}) finishes the proof.
\end{proof}

\begin{rem}  (i) By the localization theorem, we have
$$
\overline\idex_{S^1}(\dir_{\nee}) = \int_{[a]}^{S^1}\hat A(\wog)\che(\woeg)
= \int_{[a]\cap M}^{S^1} \frac{\hat A(\wog)\che(\woeg)}
{\chi_\fraku(\nu_a^{LM})},
$$
where $\nu_a^{LM}$ is the normal bundle of the fixed point set ${\rm Fix}(a) = 
[a]\cap M$ in $[a]$, and $\chi_\fraku$ is the equivariant Euler form.

(ii) If we do not assume that the action $a$ is via isometries, then as in \cite{twz}, we should replace the rotational action $k_s$ on $LM$ by parallel translation along loops.  The averaging procedure again produces an equivariantly closed form extending a given characteristic form on $M$.
 However, Thm. \ref{thm2} does not extend.

(iii)  The integrand in Thm. \ref{thm2} is not really independent of the action, since it depends on the 
action-dependent metric. However, we
can  push this metric dependence out 
of the integrand as follows.  Let $\calB$ be
the space of metrics on $M$.  $\calB$ comes with a natural Riemannian metric $g^\calB$, the so-called $L^2$ metric, given at $T_{g_0}\calB$ by
$$g^\calB (X,Y) = \int_M g_0^{ab}g_0^{cd}X_{ac}Y_{bd}
\dvol_{g_0}.$$
Thus $LM\times \calB$ has a metric $h$ which at $(\gamma, g_0)$ is the product metric of the 
 $L^2$ metric on $T_\gamma LM$ 
determined by $g_0$ and $g^\calB$ on $T_{g_0}\calB.$ 
This is not a global product metric, but it is not difficult to compute the Levi-Civita connection and curvature $F$ of 
$h$.  We extend the rotational action on $LM$ trivially to $LM\times \calB$, so one
obtains an equivariant curvature $\widetilde F_{\mathfrak u}$ .  One directly 
computes that $\widetilde F^{(1)}_{\mathfrak u} = P^{TLM}\widetilde F_\fraku P^{TLM}$ equals
$\wog$, where $P^{TLM}$ is the 
$h$-orthogonal projection of $T(LM\times \calB)$ to $LM.$   One obtains

\begin{prop} Let
$i_{g_0}:LM\times\{g_0\}\to LM\times\calB$ be the inclusion.  
 If $a$ is a $g_0$-invariant $S^1$ action on $M$,  then
$$\overline\idex_{S^1}(\dir) = \int^{S^1}_{i_{g_0,*}[a]} \hat A(\widetilde F_\fraku ^{(1)}).$$
\end{prop}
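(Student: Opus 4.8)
The plan is to reduce the integral on $LM\times\calB$ to the integral over $[a]\subset LM$ that already appears in Theorem \ref{thm2}, and then to quote that theorem with a trivial coefficient bundle. The only genuinely new geometric input is the identity asserted just before the statement, namely that on the slice $LM\times\{g_0\}$ one has $i_{g_0}^*\widetilde F^{(1)}_\fraku = \wog$, where $\wog$ is the equivariant curvature of the averaged $L^2$ Levi-Civita connection of $(LM,g_0)$. Everything else is functoriality of integration over cycles, together with the fact that $\hat A$ is a Chern--Weil polynomial in the leading order traces $\Tr(\wog^{2k})$ and therefore commutes with pullback.

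First I would unwind the cycle. Since the $S^1$-action is extended trivially in the $\calB$-direction, every point of $\calB$, in particular $g_0$, is fixed, and $i_{g_0,*}[a] = a_*[M]\times\{g_0\}$ is an $S^1$-invariant $n$-cycle supported in the slice $LM\times\{g_0\}$; hence the equivariant integral is defined. By the definition of the pushforward of a cycle,
$$\ints_{i_{g_0,*}[a]}\hat A(\widetilde F^{(1)}_\fraku) = \ints_{[a]} i_{g_0}^*\hat A(\widetilde F^{(1)}_\fraku) = \ints_{[a]}\hat A\!\left(i_{g_0}^*\widetilde F^{(1)}_\fraku\right),$$
the last step because $i_{g_0}^*$ is an algebra map commuting with the leading order trace $\Tr$ (the latter integrates the loop variable $S^1$, which is untouched by the pullback in the $\calB$-slot). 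Inserting $i_{g_0}^*\widetilde F^{(1)}_\fraku = \wog$ gives $\ints_{[a]}\hat A(\wog)$, and applying Theorem \ref{thm2} with $E$ the trivial line bundle, so that $\che(\woeg)=1$, produces $\overline\idex_{S^1}(\dir)$ up to the constant $(2\pi i)^{-{\rm dim}(M)/2}$ inherited from that theorem (suppressed in the present statement).

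The main obstacle is the verification of $i_{g_0}^*\widetilde F^{(1)}_\fraku = \wog$ itself. Because the $L^2$ metric on $T_\gamma LM$ depends on the base point in $\calB$, the metric $h$ on $LM\times\calB$ is not a global product, so the Levi-Civita connection of $h$ has nonvanishing mixed blocks coupling the $TLM$ and $T\calB$ directions, and the full equivariant curvature $\widetilde F_\fraku$ inherits off-diagonal terms. I would compute the connection and curvature of $h$ in block form and show that the sandwiched, restricted curvature $P^{TLM}\widetilde F_\fraku P^{TLM}$ along $LM\times\{g_0\}$ discards exactly these coupling terms and reproduces the intrinsic $L^2$ equivariant curvature of $(LM,g_0)$; the trivial $S^1$-extension to $\calB$ ensures that the equivariant correction $L_X$ is unaffected by the $\calB$-directions, and the trivial averaging over $S^1$ in the definition of $\wog$ is compatible with the restriction. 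Conceptually this is the payoff of Remark (iii): the single global form $\hat A(\widetilde F^{(1)}_\fraku)$ restricts on each slice to the $\hat A$-form of the corresponding $L^2$ metric, so the metric dependence is moved entirely into the choice of cycle $i_{g_0,*}[a]$. This block computation is the only analytically nontrivial step; granting it, the reduction above is purely formal.
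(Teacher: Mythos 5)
Your proposal is correct and follows essentially the same route as the paper: restrict to the slice $LM\times\{g_0\}$, invoke the identity $P^{TLM}\widetilde F_\fraku P^{TLM}=\wog$ there (which the paper likewise only asserts with ``one directly computes,'' exactly as you defer the block computation), and then apply Theorem \ref{thm2} with $E$ trivial so that $\che(\woeg)=1$. Your observation that the normalization $(2\pi i)^{-{\rm dim}(M)/2}$ from Theorem \ref{thm2} is suppressed in the Proposition's statement is also accurate.
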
 

Thus the integrand is a universal form on $LM\times \calB$, and the choice of action and compatible metric are encoded in the cycle of integration.
\end{rem}

\section{Gromov-Witten theory}
In this section we relate string classes and 
leading order Chern classes to genus zero Gromov-Witten invariants 
and gravitational descendants associated 
to characteristic classes. We also investigate when the integrals that often
denote GW invariants are rigorous expressions. In particular, we want to realize GW invariants as integrals of forms over the moduli space ${\mathcal M}_{0,k}(A)$ defined below, without using the compactification
$\overline{{\mathcal M}}_{0,k}(A).$  Thus we want to avoid both the construction of the virtual fundamental class and 
discussions of non-smooth  Poincar\'e duality spaces. This is certainly not possible in general, so we restrict ourselves mainly to the 
semipositive case, where the moduli space of pseudoholomorphic curves has an especially
nice compactification.

Recall that GW invariants are built from cohomology classes 
$\alpha_i$ on the target manifold $M$, while gravitational descendants \cite[Ch. 10]{CK}
also involve $\psi$ classes, which are first Chern classes of 
line bundles on the moduli space of marked curves.  We first relate gravitational descendants to string classes (Thm. \ref{gd}), 
and then relate GW invariants built from even classes on the target manifold to string and leading order classes (Thm. \ref{pp}).  For most of this section, we work in the symplectic setting.  At the end, we make some comments about the algebraic case and the role of the virtual fundamental class.

 \subsection{Notation}

Following the notation in \cite{ms}, let $M$ be a closed symplectic manifold with a generic compatible almost complex structure.
For $A\in H_2(M,\Z)$, set
$C^\infty_0(A) = \{f:\BbP^1\to M| f\in C^\infty, f_*[\BbP^1] = A\}.$  Let $G= \Aut(\BbP^1)\simeq 
PSL(2,\C)$ be the group of complex automorphisms of $\BbP^1$.
Set $\BbP_k^1 = \{(x_1,\ldots, x_k)\in (\BbP^1)^k: x_i\neq x_j\ {\rm for}\ i\neq j\}.$
For
fixed $k \in \Z_{\geq 0}$, set
 $$C^\infty_{0,k}(A) = (C^\infty_0(A) \times \BbP_k^1)/G,$$
  where the action of $G$ on $C^\infty_g(A) $ is 
  $\phi\cdot f = f\circ \phi^{-1}$ and $\phi$ acts diagonally on $\BbP_k^1.$
Denoting an element of $C^\infty_{0,k}(A)$ by $[f, x_1,\ldots, x_k]$, we set the moduli space 
of pseudoholomorphic maps to be 
$\calM_{0,k}(A) = \{[f, x_1,\ldots, x_k]: f\ {\rm is\ pseudoholomorphic}\}.$  $\mgk$ is
a smooth, finite dimensional, noncompact manifold.

We impose the condition that all maps $f$ are simple, {\it i.e.} $f$ does not factor through a branched covering map from $\BbP^1$ to $\BbP^1.$  In this case, the action of
$G$ on $C^\infty_0(A) \times \BbP_k^1$ is free, and 
$C^\infty_{0,k}(A)$ is an infinite dimensional manifold of either Banach or Fr\'echet type.  


The forgetful map $\pi = \pi_k:C^\infty_{0,k}(A)\to C^\infty_{0,k-1}(A)$  given by
$[f, x_1,\ldots, x_{k-1}, x_k]\mapsto [f, x_1,\ldots, x_{k-1}]$   is a locally trivial smooth 
fibration, since 
for  disjoint neighborhoods $U_1,\ldots, U_{k-1}$ around a fixed $x_1,\ldots, x_{k-1}$, we have
$$\pi^{-1}[ (C^\infty_0(A)\times \prod U_i)/G]\approx 
[ (C^\infty_0(A)\times \prod U_i)/G]\times
\BbP^1_-,$$
{\it i.e.}, the fiber $\BbP^1_- = \BbP^1\setminus \{x_1,\ldots, x_{k-1}\}$ consists of all choices for the $k^{\rm th}$ point.
This fibration restricts to a fibration on the moduli spaces, but  
 does not extend to compactifications of the moduli spaces.

Let $L_i$ be the line bundle over $C^\infty_{0,k}(A)$ with fiber $T^*_{x_i}\BbP^1$ over
$[f, x_1, \ldots, x_k].$  This bundle is well defined, since an automorphism $\phi$ gives an identification of tangent spaces $d\phi^*_{x_i}:T_{\phi(x_i)}^*\BbP^1\to T^*_{x_i}\BbP^1.$ 
We set $\calL_i = \pi_*L_i$ for $i = 1,\ldots, k.$  The fibers of $\calL_i$ are given by
\begin{eqnarray*} \calL_i|_{[f, x_1,\ldots,x_{k-1}]} &=& \{s:\BbP_-^1\to T_{x_i}^*\BbP^1\},\ i =1,\ldots,k-1,\\
\calL_k|_{[f, x_1,\ldots,x_{k-1}]} &=& \Gamma(T^*\BbP_-).
\end{eqnarray*}

If we put a Sobolev topology on $\cok$ ({\it i.e.} we consider two maps close if their first $s$ partial derivatives are close for a fixed $s \gg 0$), then $\cok$ is a Banach manifold and so admits partitions of unity.  Thus the $L_i$ have connections.  
In any case, we are interested in $L_i$ restricted to the finite dimensional manifold
$\calM_{0,k}(A)$, so the existence of connections is not an issue.  

We want to choose connections that do not blow up as $x_i\to x_j$ in $\BbP_k^1.$
 Since the line bundle $L'_i$ with fiber $T^*_{x_i}\BbP^1$ is a well defined line bundle on $(\BbP^1)^k/G$, it 
restricts to a line bundle on $\BbP_k^1$, which then pulls back to the bundle $L_i$ on $\cok.$  We always take connections on $L'_i$, as these restrict to connections 
on $L_i$ which are well behaved on $\BbP_k^1.$  

The $\calL_i$ have leading order first Chern classes, but it is better to consider 
the associated string classes.  Definition \ref{2.2} in the current context is as follows.

\begin{defn}  {\it The string class $c_1^{{\rm str},r}(\calL_i)\in H^{2r-2}(\calM_{0,k-1}(A))$ or 
in $H^{2r-2}(\cok)$
is the de Rham class of }
$$ \int_{\BbP'} [\Tr(\Omega_i)]^{r}, $$
{\it where $\Omega_i$ is the curvature of a restricted connection on $L_i.$ Thus
$c_1^{{\rm str},r}(\calL_i) = \pi_* (c_1(L_i)^{r})$, where $\pi_*$ is the pushforward map given by integration over the fiber.}
\end{defn}

Since $[\Tr(\Omega_i)]^r$ is a closed form, the right hand side of the definition is 
closed.   Here we use the fact that for restricted connections on $L_i$,
the integral over the fiber exists and extends to the compact space $\BbP^1.$  The usual arguments that $c_1^{{\rm str},r}$ is closed (which uses Stokes' Theorem on 
$\BbP^1$)
with de Rham class independent of the connection carry over.  

Note that for $r=1$, the string class $c_1^{{\rm str}}(\calL_i)\in H^0$ is the
(constant function) $-2+k$, since $\int_{\BbP'} \Tr(\Omega)$ equals  
$\chi(T^*\BbP') = -\chi(T\BbP').$

Let $ \ev^k:\cok\to M^k$ be $\ev^k[f, x_1, \ldots, x_k ] = (f(x_1), \ldots, f(x_k))$, let $p_i:M^k\to M$ 
 be the projection onto the $i^{\rm th}$ factor, and set $\ev_i^k = p_i\circ \ev^k:\cgk\to M.$  
Then $\ev_i^k = \ev_i^{k-1}\circ \pi$ for $i<k.$ When the context is clear, we will denote $\ev^k$ by $\ev.$

\subsection{Semipositive manifolds}

We will give cases when GW invariants and gravitational descendants can be detected by integration of forms over the moduli space $\mok$,  without using  the compactification $\bmok.$  
This is expected to happen if the boundary strata have codimension at least two in $\bmok$, {\it e.g.}, if
$M$ is semipositive  \cite[\S6.4]{ms3}.  The main results of this section justify this integration over just 
$\mok.$

{\it For the rest of this section, except for the remarks at the end, we assume that $M$ is 
semipositive.}  For motivation, we first pretend that $\bmok$ carries a fundamental
class.  Then for $\alpha_i\in H^{d_i}(M,\C)$ of appropriate degree, 
the GW invariant associated to the $\alpha_i$ is
\begin{eqnarray}\label{quick} \langle \alpha_1\ldots\alpha_k\rangle & \stackrel{\rm def}{=}& \ev_*[\bmok]\cdot a
= \langle \PD \ev_*[\bmok]\cup \alpha,[M^k]\rangle\\
&=& \langle \alpha,\ev_*[\bmok]\rangle = \langle \ev^*\alpha, [\bmok]\rangle.\nonumber
\end{eqnarray}
Here $\alpha = \alpha_1\times\ldots\times\alpha_k \in H^*(M_k)$ and 
$a = \PD\  \alpha$ is the Poincar\'e dual of $\alpha.$  In (\ref{quick}), we use the characterization of 
Poincar\'e duality: 
for $a\in H_*(X,\Z), \beta\in H^*(X,\Z)$ on an oriented compact manifold $X$, 
\begin{equation}\label{PDPD}\langle \alpha \cup \beta, [X]\rangle = \langle \beta, a\rangle.
\end{equation}

To do this more precisely, we follow the careful exposition in \cite{zi}.  
For $M$ semipositive, 
$$\partial \ev= \bigcap\limits_{K\ {\rm compact}} \overline{\ev(\mok\setminus K)}\ \ \subset M^k$$
 lies in the image of a map of a manifold of dimension at most 
$ \dim \mok - 2 := 
r-2$.  Recall that $r = \dim M + 2c_1(A) + 2k-6,$   with $c_1(A) = \langle c_1(M),A\rangle.$
Thus by definition
$\mok$ 
defines 
a pseudocycle in $M^k$.
By \cite[Prop 2.2]{zi}, there exists an open set $U = U_k$ in $M^k$ with
\begin{equation}\label{uk}
\partial\ev \subset U\subset M^k,\ \ 
H_r(M^k,U; \Z) \simeq H_r(M^k;\Z)
\end{equation}
(see (\ref{uk2})).
  Let $\bar V$ be a compact manifold with boundary inside $\mok$ with 
  $$\mok\setminus\ev^{-1}(U) \subset \bar V;$$
    we think of $\bar V$  as 
``most of" $\mok.$  
Specifically,  $\ev_*[\bar V] \in H_r(M^k,U; \Z) \simeq H_r(M^k;\Z)$ is a substitute for the ill-defined $\ev_*[\bmok]$.

\begin{defn}  {\it
The GW invariant associated to $\alpha$ is  
 $$\langle \alpha_1\ldots\alpha_k\rangle = \ev_*[\bar V]\cdot a\in \Z,$$
  provided $\sum_i |\alpha_i| = 
 k\dim M -r,$ for $|\alpha_i|$ the degree of $\alpha_i.$  }
 \end{defn}

 The GW invariant is independent of the choice of $U$ and $V$. More generally, we can take positive integers $\ell_i$ with
 $\sum_i \ell_i  |\alpha_i| = 
 k\dim M -r$, take $a = 
 \PD\ \alpha) = \PD (\alpha_1^{\ell_1}\times\ldots\times \alpha_k^{\ell_k})$, and similarly define $\langle\alpha_1^{\ell_1}\ldots \alpha_k^{\ell_k}\rangle$.
 
There is an integer $q$ such that 
$qa$ has a representative 
submanifold $N$; if $N$ is unorientable, we have to pass its oriented double cover.  Of course, $\PD(N) = \alpha$, but we can represent $\alpha$
by a compactly supported closed form, the Thom class of the normal bundle of $N$ in $M^k$, thought of as a tubular neighborhood of $N$.  
Then
$$\langle\alpha_1^{\ell_1}\ldots \alpha_k^{\ell_k}\rangle = \frac{1}{q} \ev_*[\bar V]\cdot N = \frac{1}{q}\langle \PD( N), \ev_*[\bar V]\rangle =
\frac{1}{q} \langle \ev^* \PD(N), [\bar V]
\rangle.
$$
In the last term, $[\bar V] \in H_r(\mok, \mok-V; \Z), \ev^*\PD(N)\in H^r(\mok,\R).$  Since $\ev^*\PD(N)$ is a differential form, we can write 
\begin{equation}\label{en}\langle\alpha_1^{\ell_1}\ldots \alpha_k^{\ell_k}\rangle = \frac{1}{q}\int_{\mok}\ev^*\PD(N)
\hskip 0.3 in  ({\rm mod}\  \mok\setminus V),\end{equation}
where mod $\mok\setminus V$ means $\int_K\theta = 0$ for a form $\theta$ and a submanifold, possibly with boundary,
$K\subset \mok\setminus V.$  The modding out ensures that (\ref{en}) is independent of the representative of $\PD(N).$  This justifies writing a GW invariant as the
 integral of a form over
$\mok.$

From now on, we assume $q=1$ for convenience and drop 
``mod $\mok\setminus V$".

To bring in the bundle $\calL_i$, define
the gravitational descendant(or gravitational correlator) 
associated to classes $\alpha_1\ldots \alpha_k\in H^*(M, \C)$ and 
 multi-indices $(\ell_1,\ldots, \ell_k), (r_1,\ldots, r_k)$
by
$$\langle t_1^{r_1}\alpha_1^{\ell_1}\ldots
 t_k^{r_k}\alpha_k^{\ell_k}\rangle _{0,k}
= \int_{\mok} c_1(L_1)^{r_1}\wedge\alpha_1^{\ell_1}\wedge\ldots\wedge 
c_1(L_k)^{r_k}\wedge\alpha_k^{\ell_k}.
$$
Here  $\sum_i 2r_i + \ell_i|\alpha_i| = {\rm dim}\ \mok.$  

We set
\begin{eqnarray*}\lefteqn{\langle t_1^{r_1}\alpha_1^{\ell_1}
 \ldots t_{k-1}^{r_{k-1}}c_1^{{\rm str}, r_k}\rangle_{0,k-1} }\\
 &=& \int_{\mokm} 
 c_1(L_1)^{r_1}\wedge\ev_1^*\alpha_1^{\ell_1}\wedge\ldots\wedge 
c_{1}(L_{k-1})^{r_{k-1}}\wedge\ev_{k-1}^*\alpha_{k-1}^{\ell_{k-1}}\wedge
c_1^{{\rm str},r_k}(L_k).
\end{eqnarray*}

\begin{thm} \label{gd} For $\ell_k=0$, the gravitational descendents satisfy
$$\langle t_1^{r_1}\alpha_1^{\ell_1}
 \ldots \alpha_{k-1}^{\ell_{k-1}}t_k^{r_k}\rangle_{0,k} = 
 \langle t_1^{r_1}\alpha_1^{\ell_1}
 \ldots t_{k-1}^{r_{k-1}}\alpha_{k-1}^{\ell_{k-1}}c_1^{{\rm str},r_k}
 \rangle_{0,k-1}.$$
 \end{thm}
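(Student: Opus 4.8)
The plan is to evaluate the left-hand integral over $\mok$ by integrating first over the fibers of the forgetful fibration $\pi = \pi_k:\mok\to\mokm$ and then over the base, recognizing the fiber integral as the string class via the projection formula. Write
$$\Psi = c_1(L_1)^{r_1}\wedge\ev_1^*\alpha_1^{\ell_1}\wedge\cdots\wedge c_1(L_{k-1})^{r_{k-1}}\wedge\ev_{k-1}^*\alpha_{k-1}^{\ell_{k-1}}$$
for the form on $\mokm$ built from the first $k-1$ insertions, where now $L_i$ and $\ev_i$ denote the corresponding objects on $\mokm$.

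First I would observe that for $i<k$ each factor of the left-hand integrand is pulled back from $\mokm$ along $\pi$. Indeed, $\ev_i^k = \ev_i^{k-1}\circ\pi$, so $\ev_i^{k,*}\alpha_i = \pi^*(\ev_i^{k-1,*}\alpha_i)$; and the line bundle $L_i$ on $\mok$, whose fiber $T^*_{x_i}\BbP^1$ does not involve $x_k$, is the $\pi$-pullback of its counterpart on $\mokm$. Since the gravitational descendants are integrals of closed forms whose cohomology classes are independent of the chosen connections, I am free to take the connection on $L_i$ ($i<k$) to be pulled back from $\mokm$, so that $c_1(L_i)^{r_i}$ is a $\pi$-pullback already at the level of forms. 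Because $\ell_k=0$ kills the last evaluation factor (so $\alpha_k^{\ell_k}=1$), the left-hand integrand is therefore exactly $\pi^*\Psi\wedge c_1(L_k)^{r_k}$.

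Next I would carry out the fiber integration. The map $\pi_k$ is a locally trivial fibration whose fiber $\BbP^1_-$ compactifies to $\BbP^1$, and, by the choice of restricted connection on $L_k$ made before the definition of the string class, the fiberwise integral of $c_1(L_k)^{r_k}$ exists and extends across $\BbP^1\setminus\BbP^1_-$. Hence $\int_{\mok}(\,\cdot\,)=\int_{\mokm}\pi_*(\,\cdot\,)$, and the projection formula gives
$$\pi_*\bigl(\pi^*\Psi\wedge c_1(L_k)^{r_k}\bigr) = \Psi\wedge\pi_*\bigl(c_1(L_k)^{r_k}\bigr) = \Psi\wedge c_1^{{\rm str},r_k}(\calL_k).$$
Integrating this over $\mokm$ yields precisely $\langle t_1^{r_1}\alpha_1^{\ell_1}\cdots t_{k-1}^{r_{k-1}}\alpha_{k-1}^{\ell_{k-1}}c_1^{{\rm str},r_k}\rangle_{0,k-1}$, the right-hand side. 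The degrees are consistent: $\dim\mok = \dim\mokm + 2$ and $c_1^{{\rm str},r_k}(\calL_k)\in H^{2r_k-2}$, so $\pi_*$ lowers degree by the fiber dimension $2$, as required.

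I expect the main obstacle to be analytic rather than formal: justifying the Fubini step and the projection formula on the noncompact space $\mok$. One must check that the fiberwise integral genuinely converges and extends across the punctures of $\BbP^1_-$, which is exactly what the restricted connections (pulled back from $L'_i$ on $(\BbP^1)^k/G$) were introduced to guarantee, and that integration over the fiber is compatible with the truncation ``mod $\mok\setminus V$'' and the pseudocycle structure used to define these integrals, i.e., that $\pi$ sends the relative cycle $[\bar V]$ to an admissible cycle downstairs so that the iterated integral agrees with the integral over $\mok$. I anticipate that verifying this compatibility of the compact-support/relative structure with fiber integration is the only delicate point, the underlying algebraic identity being just the projection formula.
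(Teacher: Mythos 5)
Your formal skeleton is exactly the paper's: recognize the left-hand integrand (with $\alpha_k^{\ell_k}=1$) as $\pi^*\Psi\wedge c_1(L_k)^{r_k}$, using $\ev_i^k = \ev_i^{k-1}\circ\pi$ and $\pi^*L_{i,k-1}=L_{i,k}$ for $i<k$, then apply the projection formula $\int_{\mok}\pi^*\omega\wedge\eta = \int_{\mokm}\omega\wedge\pi_*\eta$ and identify $\pi_*\bigl(c_1(L_k)^{r_k}\bigr)$ with $c_1^{{\rm str},r_k}(\calL_k)$. However, the point you defer at the end --- compatibility of the truncation ``mod $\mok\setminus V$'' with fiber integration --- is not a routine verification to be anticipated; it is the actual content of the paper's proof, and your proposal contains no argument for it. Concretely, the two sides of the identity are defined relative to a priori unrelated data: a neighborhood $U_k\subset M^k$ of $\partial\ev^k$ and a compact manifold $\bar V_k\subset\mok$ upstairs, versus $U_{k-1}\subset M^{k-1}$ and $\bar V_{k-1}\subset\mokm$ downstairs. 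The projection-formula manipulation yields the theorem only if these can be chosen coherently, i.e. so that $\pi_k(\mok\setminus V_k)\subset \mokm\setminus V_{k-1}$, which is the paper's condition (\ref{pro}).

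The paper establishes (\ref{pro}) by an explicit construction that your proposal is missing: first, a set-theoretic claim $p(\partial\ev^k)\subset\partial\ev^{k-1}$, where $p:M^k\to M^{k-1}$ is the projection, proved by a limiting argument that produces, for each point of $\partial\ev^{k-1}$, points of $\partial\ev^k$ mapping to it; second, exploiting the freedom in Zinger's triangulation description (\ref{uk2}) of the neighborhood $U$, a refinement of the triangulation of $M^k$ so that $U_k := p^{-1}(U_{k-1})$ is an admissible neighborhood of $\partial\ev^k$, with $x\in U_k \Leftrightarrow p(x)\in U_{k-1}$; and third, the choice $\bar V_k = \pi_k^{-1}(\bar V_{k-1})$ (after a slight perturbation of $\pi_k$), together with a check that this set is compact. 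None of this is automatic: without the claim about $p(\partial\ev^k)$, the set $p^{-1}(U_{k-1})$ need not contain $\partial\ev^k$ at all, and then the fiber integral of your truncated representative picks up uncontrolled contributions near the boundary, so the two ``mod'' equivalences do not match up. In short, you chose the right strategy and correctly located the delicate point, but the step you label ``the only delicate point'' is precisely the proof, and it is absent.
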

  
\begin{cor} (Dilaton Axiom) \cite[p. 306]{CK}  
{For $r_k=1,$ }  
$$\langle t_1^{r_1}\alpha_1^{\ell_1}
 \ldots \alpha_{k-1}^{\ell_{k-1}}t_k\rangle_{0,k} = (-2+k)
 \langle t_1^{r_1}\alpha_1^{\ell_1}
 \ldots t_{k-1}^{r_{k-1}}\alpha_{k-1}^{r_{k-1}}\rangle_{0,k-1}.$$
\end{cor}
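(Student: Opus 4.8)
The plan is to obtain the Dilaton Axiom as a direct specialization of Theorem \ref{gd}, using the degree-one string-class computation already recorded above. Setting $r_k = 1$ (the hypothesis $\ell_k = 0$ is the standing assumption of Theorem \ref{gd}), that theorem immediately rewrites
$$\langle t_1^{r_1}\alpha_1^{\ell_1}\cdots\alpha_{k-1}^{\ell_{k-1}}t_k\rangle_{0,k} = \langle t_1^{r_1}\alpha_1^{\ell_1}\cdots t_{k-1}^{r_{k-1}}\alpha_{k-1}^{\ell_{k-1}}c_1^{{\rm str},1}\rangle_{0,k-1},$$
reducing the claim to an identity purely on $\mokm$.

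Next I would invoke the observation made just after the definition of $c_1^{{\rm str},r}(\calL_i)$, namely that for $r = 1$ the string class $c_1^{{\rm str},1}(\calL_k)\in H^0(\mokm)$ is the constant function $-2+k$, coming from $\int_{\BbP'}\Tr(\Omega_k) = -\chi(T\BbP')$. The essential point is that this class sits in degree zero, so in the integral defining the right-hand correlator it commutes past the wedge product as a scalar and factors out of the integral over $\mokm$. What remains is exactly $(-2+k)$ times $\int_{\mokm} c_1(L_1)^{r_1}\wedge\ev_1^*\alpha_1^{\ell_1}\wedge\cdots\wedge c_1(L_{k-1})^{r_{k-1}}\wedge\ev_{k-1}^*\alpha_{k-1}^{\ell_{k-1}}$, which by definition is $(-2+k)\langle t_1^{r_1}\alpha_1^{\ell_1}\cdots t_{k-1}^{r_{k-1}}\alpha_{k-1}^{\ell_{k-1}}\rangle_{0,k-1}$; combining the two displays proves the corollary.

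As a consistency check I would confirm the degree bookkeeping: with $\ell_k = 0$ and $r_k = 1$ the $k$-th factor contributes $2r_k = 2$ to the constraint $\sum_i(2r_i + \ell_i|\alpha_i|) = \dim\mok = r$, so the remaining factors saturate $r - 2 = \dim\mokm$, precisely the top degree needed for the reduced correlator to be a nonzero integral. Since all of the genuine geometric content—integrability of the forgetful fibration, the pushforward identification $\calL_k = \pi_*L_k$, and the passage from $\int_{\mok}$ to an integral over $\mokm$ of a fiber-integrated form—is already packaged in Theorem \ref{gd}, the corollary itself carries no real obstacle. The only step demanding care is verifying that $c_1^{{\rm str},1}(\calL_k)$ is genuinely a constant in $H^0$ rather than a higher-degree form, so that pulling it outside the integral is legitimate; this is exactly what the Euler-characteristic computation $\int_{\BbP'}\Tr(\Omega_k) = -\chi(T\BbP') = -2+k$ guarantees.
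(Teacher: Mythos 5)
Your proposal is correct and matches the paper's (implicit) argument exactly: the corollary is obtained by specializing Theorem \ref{gd} to $r_k=1$ and then using the observation, recorded just after the definition of the string classes, that $c_1^{{\rm str},1}(\calL_k)\in H^0$ is the constant $-2+k$, which therefore factors out of the correlator over $\mokm$. The degree bookkeeping you add is a harmless extra confirmation; nothing further is needed.
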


\noindent {\it Proof of the Theorem.}  For a fibration $Z\to M\stackrel{\pi}{\to} B$ of smooth compact
manifolds, we have for $\omega\in \Lambda^*(B), \eta\in \Lambda^*(M)$,
\begin{equation}\label{PD2}\int_M \pi^*\omega\wedge \eta = \int_B\omega\wedge \pi_*\eta.
\end{equation}
In particular, this holds for $\BbP^1_-\to\mok\stackrel{ \pi_k}{\to}\mokm$ provided the forms extend to 
the closures of the moduli spaces.

For this fibration, we can  extend (\ref{PD2}) to integration mod $\mok\setminus V$ provided
\begin{equation}  \label{pro}
\pi_k(\mok\setminus V_k)\subset \mokm\setminus V_{k-1},
\end{equation}
where appropriate subscripts for $V$ have been added.  

We first sketch the proof of  (\ref{pro}).   By \cite[Lemma 2.4]{zi}, for a fixed 
generic triangulation $T$ of 
$M^k$ with simplices $\sigma$, the set $U=U_k$ in (\ref{uk}) is given by
\begin{equation}\label{uk2}
U_k  = \bigcup_{m= \dim M^k - \dim\mok}^{\dim X^k}\ \ \  \bigcup_{\sigma, |\sigma|=m} {\rm St}(b_\sigma, {\rm sd}\  T),
\end{equation}
where $b_\sigma$ is the barycenter of $\sigma$, sd is the first subdivision of $T$, and 
the star ${\rm St}(b_\sigma, {\rm sd}\  T)$ consists of the interior of all simplices in
sd $K$ containing $b_\sigma.$ (We don't distinguish between the simplices in the 
triangulation and their images in $M^k.$)
By the proof of \cite[ Lemma 2.4]{zi}, we can restrict the simplices in $U_k$ by any subset of 
$\{\sigma: |\sigma| \geq \dim M^k - \dim\mok\}$, provided we keep all the top simplices
that meet $\partial\ev_k.$  Given $U_{k-1}$, we will suitably restrict $U_k$ so that 
$p(U_k)\subset U_{k-1}$, for $p:M^k\to M^{k-1}$ the projection onto the first $k-1$ factors.  Since $\pi_k$ is a fibration, we will conclude that 
$x\not\in V_k$ implies $\pi_k(x)\not\in V_{k-1}$, which  is (\ref{pro}).

To fill in the details of (\ref{pro}),  we note that
\begin{equation}\label{cd2}\begin{CD} \mok@>\ev^k>>M^k\\
@V\pi_kVV@VVp V\\
\mokm@>>\ev^{k-1}> M^{k-1}
\end{CD}
\end{equation}
commutes, with  $\pi_k$ and $p$ surjective and open. 
 It follows that for $X\subset M^{k-1}$,
\begin{equation}\label{set}
(\ev^{k-1})^{-1}(X) = \pi_k( \ev^k)^{-1}p^{-1}(X).
\end{equation}

We claim that
$$p(\partial\ev^k)\subset \partial\ev^{k-1}.$$
For the claim, recall that 
$\partial\ev^{k-1} = \ev^{k-1}(\partial\mokm)$, and
points in $\partial\mokm$ are  given by $Z'=[h, y_1,\ldots, y_{k-1}]$, for $h$ a pseudoholomorphic maps on smooth curves or cusp curves ({\it i.e.,} curves with bubbling), with the $y_j$ possibly coincident.  Choose such a $Z'$ with $\ev_{k-1}(Z') = Y'.$  Take
compact sets $K_i'$ exhausting $M^{k-1}$, and choose 
$F^i = \ev^{k-1}[f^i, x_1^i,\ldots, x_{k-1}^i]\not\in K_i'$ with $\lim_i F^i = Y'.$
Take $G^i
= \ev^k[f^i, x_1^i,\ldots, x_k^i]
\in \pi^{-1}(F_i)$ for  $x^i_k$ distinct from the other $x_j^i$ and with 
$\lim_i f^i(x_k) = \lim_if^i(x_{k-1})$; note that the last limit exists.  Then 
$\lim_i G^i = Z\in \partial\mok$ exists, and for $Y = \ev^k(Z)\in \partial\ev^k$
$$p(Y) = p\ev^k(Z) = \ev^{k-1}\pi_k(Z) =\ev^{k-1} Z' = Y'.$$



As mentioned above, choose $U_{k-1}$ to contain only those top simplices $\sigma^{\rm top}$ which
meet $\partial\ev^{k-1}.$  Refine the triangulation of $M^k$ to a new triangulation, also called $T$, so that each
$p^{-1}(\sigma^{\rm top}_{k-1})$ is the sum of top simplices in $T$.  Set $U_k$ to contain 
only the simplices in $p^{-1}(U_{k-1})$; by the claim above, $U_k = p^{-1}(U_{k-1})$ is an open neighborhood of $\partial\ev_k,$   and
$$x\in U_k\Leftrightarrow p(x)\in U_{k-1}.$$

By (\ref{set}), 
$\pi_k\ev_{k}^{-1}(U_{k}) = \ev_{k-1}^{-1}(U_{k-1}).$  Thus for a choice of compact 
manifold with boundary
$\bar V_{k-1}\subset \mokm\setminus \ev_{k-1}^{-1}(U_{k-1})$ and a
slight perturbation of $\pi_k,$ $\bar V_k=  \pi^{-1}_k(\bar V_{k-1})$ is a
manifold with boundary containing $\mok\setminus \ev_{k-1}^{-1}(U_{k-1})$.  Finally,
$\bar V_k$ misses a smaller open neighborhood of $\bmok$, so the closed set $\bar V_k$ is contained in a compact subset of $\mok$.  Thus $\bar V_k$ is compact.  

By this construction, (\ref{pro}) is satisfied, so we can apply (\ref{PD2}) to the fibration
$\BbP^1_-\to\mok\stackrel{ \pi_k}{\to}\mokm$.

By (\ref{cd2}), $\pi_k^*\ev_{k-1}^* = \ev_k^*\pi^*$, which we abbreviate by dropping 
$\pi$ 
and denoting $\pi_k$ by $\pi$:   $\pi^*\ev_{k-1}^* = \ev_k^*.$
Since $L_{i,k}$, that is, $L_i$ as a bundle over $\mok$, satisfies
$\pi^*L_{i,k-1} = L_{i,k}$  
for $i<k$,  
by (\ref{PD2}) we have (with even more subscripts omitted)
\begin{eqnarray*} 
\lefteqn{ \langle t_1^{r_1}\alpha_1^{\ell_1}
 \ldots \alpha_{k-1}^{\ell_{k-1}}t_k^{r_k}\rangle_{0,k} }\\
 &=& \int_{\mok} \pi^*\left(c_1(L_1)^{r_1}\wedge \ev^*_1\alpha_1^{\ell_1}\wedge
 \ldots
 \wedge c_1(L_{k-1})^{r_{k-1}}\wedge \ev^*_{k-1}\alpha_{k-1}^{\ell_{k-1}}
\right) \wedge c_1(L_k)^{r_k}\\
&=& \int_{\mokm} 
 c_1(L_1)^{r_1}\wedge \ev^*_1\alpha_1^{\ell_1}\wedge\ldots
 \wedge c_1(L_{k-1})^{r_{k-1}}\wedge \ev^*_{k-1}\alpha_{k-1}^{\ell_{k-1}}
\wedge \pi_* c_1(L_k)^{r_k}\\
&=&  \langle t_1^{r_1}\alpha_1
 \ldots t_{k-1}^{r_{k-1}}\alpha_{k-1}^{\ell_{k-1}}c_1^{{\rm str},r_k}\rangle_{0,k-1}.
 \end{eqnarray*}
${}$\hfill $\Box$
\bigskip

For pure GW invariants, we investigate the geometry of the fibration $\pi.$  The next series
of lemmas hold for moduli spaces of genus $g$ curves.

\begin{lem}\label{fiveone}  $\pi$ is flat; i.e., for each $[f, x_1,\ldots, x_k]\in C^\infty_{0,k}(A)$,
there exists an integrable distribution 
$H_{[f,x_1,\ldots, x_k]}\subset T_{[f,x_1,\ldots, x_k]}C^\infty_{0,k}(A)$ such that
$$T_{[f,x_1,\ldots, x_k]}C^\infty_{0,k}(A)\simeq {\rm ker}(\pi_*)
\oplus H_{[f,x_1,\ldots, x_k]}.$$
\end{lem}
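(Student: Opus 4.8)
The plan is to construct the horizontal distribution upstairs, on the cover $C^\infty_0(A)\times\BbP^1_k$, and then push it down to $\cok$. Write $q:C^\infty_0(A)\times\BbP^1_k\to\cok$ for the quotient by $G$ and $P$ for the coordinate projection that drops $x_k$, so that $\pi_k\circ q=q\circ P$. At a point the vertical bundle of $P$ is $\ker P_*\cong T_{x_k}\BbP^1$, while the tangent space to the $G$-orbit, call it $O$, is $3$-complex-dimensional (the action is free by the simplicity of $f$). First I would record why the obvious guess fails: the $G$-invariant distribution ``$x_k$ held fixed'' does descend, but since the infinitesimal action $\mathfrak{g}\to T_{x_k}\BbP^1,\ \xi\mapsto\xi(x_k)$ is onto, the orbit directions already fill the vertical direction, so $q_*$ carries this distribution onto the \emph{entire} tangent space of $\cok$ rather than a complement of $\ker\pi_{k,*}$. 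Hence one needs a $G$-invariant rule for moving $x_k$ that is genuinely nontrivial along the fiber.

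The construction I would use is the cross-ratio. Assume first $k-1\ge 3$ and set $c=\mathrm{cr}(x_k;x_1,x_2,x_3)$, the cross-ratio of $x_k$ against three of the marked points, viewed as a map $C^\infty_0(A)\times\BbP^1_k\to\BbP^1$. Since $PSL(2,\BbC)$ preserves cross-ratios, $c$ is $G$-invariant; consequently $dc$ kills $O$ and every level set $\{c=\text{const}\}$ is a union of $G$-orbits. Put $\widetilde H=\ker dc$. Then $\widetilde H$ is $G$-invariant with $O\subset\widetilde H$; it is transverse to the vertical direction because $c$ is a nonconstant M\"obius function of $x_k$, so $\partial c/\partial x_k\ne 0$ on all of $\cok$ (where $x_k\ne x_1,x_2,x_3$); and it is integrable, being the kernel of the \emph{exact} one-form $dc$, so Frobenius holds automatically. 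I would then descend: because its leaves are $G$-invariant, the foliation $\{c=\text{const}\}$ passes to a foliation of $\cok$ with tangent distribution $H=q_*\widetilde H$. On each leaf $\pi_k$ is a local diffeomorphism onto $\cokm$ --- given $[f,x_1,\dots,x_{k-1}]$ and the value of $c$, the point $x_k$ is uniquely recovered --- so $H$ has the dimension of $\cokm$ and meets $\ker\pi_{k,*}\cong T_{x_k}\BbP^1$ only in $0$. This gives the splitting $T\cok\simeq\ker\pi_{k,*}\oplus H$, and the involutivity of $\widetilde H$ is exactly the condition $[X^h,Y^h]=[X,Y]^h$ from \S2, so the connection for the fibration has vanishing curvature and the leading order classes are defined. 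Restricting to the finite-dimensional $\mok\subset\cok$ gives the statement there as well.

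The main obstacle is the cases $k-1<3$, where there are too few marked points to form a cross-ratio and one must instead manufacture three $G$-equivariant reference points on the domain. Here the hypothesis that $f$ is simple is essential rather than cosmetic: it makes the $G$-action on $C^\infty_0(A)\times\BbP^1_{k-1}$ free, so local slices exist and ``$x_k$ constant in a slice'' defines a local integrable complement; the work is to glue these local choices into one globally defined distribution. The conceptual way to organize this is to view $\mathbf P:=C^\infty_0(A)\times\BbP^1_{k-1}\to\cokm$ as a principal $PSL(2,\BbC)$-bundle whose associated $\BbP^1$-bundle is the universal curve, with $\cok$ the open subbundle obtained by deleting the marked-point sections; a flat connection on $\mathbf P$ induces an integrable horizontal distribution on the associated bundle, and its transition functions act by M\"obius maps, which preserve the cross-ratio foliation and hence let the pieces agree on overlaps. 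The one genuinely nontrivial input is therefore the existence of such a flat principal connection (equivalently, that the fibration's transition functions can be chosen locally constant) when fewer than three marked points are available; for $k-1\ge 3$ this is unnecessary, since the globally defined invariant function $c$ already produces $H$ with no obstruction. The same scheme adapts to genus $g$, with $PSL(2,\BbC)$ replaced by the automorphism group of the fixed domain and the cross-ratio by a corresponding invariant; for $g\ge 2$ the automorphism group is finite, the orbit directions $O$ vanish infinitesimally, and the naive ``$x_k$ fixed'' distribution descends directly.
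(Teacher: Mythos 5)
Your proposal splits into three claims, and they deserve separate verdicts. First, and most importantly: the ``obvious guess'' that you refute at the outset is not a strawman --- it is, essentially word for word, the paper's own proof. The paper defines $H_{[f,x_1,\ldots,x_k]}$ as the set of velocities of curves $t\mapsto[\gamma_0(t),\gamma_1(t),\ldots,\gamma_{k-1}(t),x_k]$ with the last entry frozen, checks that this prescription is independent of the representative, and then proves involutivity by bracketing such ``horizontal'' lifts. Your orbit computation shows that this $H$ is all of $T\cok$: for $\xi\in\mathfrak{g}$ with flow $\phi_t$, the curve given by $\gamma_0(t)=f\circ\phi_t^{-1}$, $\gamma_i(t)=\phi_t(x_i)$ has frozen last entry, yet in the quotient it equals $[f,x_1,\ldots,x_{k-1},\phi_t^{-1}(x_k)]$, a curve inside the fiber of $\pi_k$; since $\xi\mapsto\xi(x_k)$ is onto $T_{x_k}\BbP^1$, every vector of $\ker\pi_{k,*}$ arises this way. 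So the paper's $H$ contains the vertical space and is not a complement; its well-definedness check is vacuous (the set is the whole tangent space for every representative), and the computation $[X^h,Y^h]=[X,Y]^h$ is empty because ``being of the form $[X_0,\ldots,X_{k-1},0]$'' does not pin down a lift. You have in effect found an error in the paper's argument, of a similar flavor to the one the authors acknowledge in \S3.

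Second, your cross-ratio construction is a correct and complete proof when $k-1\ge 3$. The function $c=\mathrm{cr}(x_k;x_1,x_2,x_3)$ is $G$-invariant, hence descends to $\bar c$ on $\cok$; it is a submersion, and on each fiber of $\pi_k$ it restricts to a M\"obius map, hence an immersion, so $\ker d\bar c$ is a complement to $\ker\pi_{k,*}$ and is integrable, being tangent to the level-set foliation of a submersion. Equivalently, $(\pi_k,\bar c)$ realizes $\cok$ as an open subset of $\cokm\times(\BbC\setminus\{0,1\})$ with $H$ tangent to the slices, and the restriction to $\mok$ works verbatim. This is a genuinely different route from the paper's, and it is the only valid one on the table; your side remark that for genus $g\ge 2$ the naive distribution does descend (finite automorphism group, no infinitesimal orbit directions) is also correct.

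Third, the genuine gap: for $k\le 3$ your argument only reduces the lemma to the existence of a flat principal $PSL(2,\BbC)$-connection on $C^\infty_0(A)\times\BbP^1_{k-1}\to\cokm$, which you do not prove, and this is not a technicality. For $k=1$ the fiber of $\pi_1$ is the compact $\BbP^1$, so any horizontal distribution is a complete Ehresmann connection; an integrable one would therefore exhibit the universal curve over $C^\infty_{0,0}(A)$ as a suspension, i.e.\ a flat $\BbP^1$-bundle with holonomy a representation of $\pi_1(C^\infty_{0,0}(A))$ in $\mathrm{Diff}(\BbP^1)$. Nothing in the setup makes this plausible, and the analogous statement for circle bundles already fails (the Hopf fibration $S^3\to S^2$ admits no transverse foliation, since simple connectivity of the base would force the bundle to be trivial). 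So for $k\le 3$ (and for the paper's genus-$g$ assertion when $g\le 1$) the lemma should be regarded as unproven --- by you and by the paper; what survives is your proof for $k\ge 4$, where enough retained marked points exist to form a cross-ratio.
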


\begin{cor}  $\pi:\calM_{0,k}(A)\to \calM_{0,k-1}(A)$ is flat.
\end{cor}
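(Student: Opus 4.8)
The plan is to use the $\Aut(\BbP^1)=PSL(2,\BbC)$ symmetry to put $\pi=\pi_k$ into the shape of a plain projection, for which an integrable complement to $\mathrm{ker}(\pi_*)$ is immediate. I would begin by isolating what makes the obvious guess fail. On the product $C^\infty_0(A)\times\BbP^1_k$ the forgetful map is the projection dropping $x_k$, and the distribution $\{\dot x_k=0\}$ (move $f,x_1,\dots,x_{k-1}$, hold $x_k$ fixed) is an integrable complement to its vertical $T_{x_k}\BbP^1$. This distribution is even $G$-invariant. Nevertheless it does not descend to a complement on the quotient $\cok$: since $G$ moves $x_k$, one checks that the image of $\{\dot x_k=0\}$ under $dq$ (for $q$ the quotient map) already contains the vertical direction of $\pi$, so it fails to be complementary. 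Disentangling the horizontal lift from the $G$-orbit directions is the real obstacle.

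I would resolve this by gauge-fixing. Assume first $k\ge 4$. Because $PSL(2,\BbC)$ acts simply transitively on ordered triples of distinct points of $\BbP^1$, the equations $x_1=0,\ x_2=1,\ x_3=\infty$ single out a global slice for the (free, by simplicity of the maps) $G$-action, yielding a diffeomorphism
\[
\cok\;\cong\;C^\infty_0(A)\times\{(y_4,\dots,y_k)\in(\BbP^1\setminus\{0,1,\infty\})^{k-3}: y_i\neq y_j\}.
\]
Applying the same normalization to the first three (retained) points identifies $\cokm$ with the analogous product in the variables $y_4,\dots,y_{k-1}$. As $\pi$ forgets $x_k$ and leaves $x_1,x_2,x_3$ alone, in these slice coordinates it is exactly the projection $(f,y_4,\dots,y_k)\mapsto(f,y_4,\dots,y_{k-1})$ that drops the last configuration variable.

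The construction is then immediate: set $H=\mathrm{ker}(dy_k)$, the span of the $f$- and $y_4,\dots,y_{k-1}$-directions. This is involutive, being the kernel of the submersion $y_k$ (so Frobenius applies and its leaves are the level sets $y_k=\mathrm{const}$), and it is complementary to $\mathrm{ker}(\pi_*)=\mathrm{span}(\partial_{y_k})\cong T_{x_k}\BbP^1$. Transporting $H$ back along the slice diffeomorphism gives the asserted integrable distribution on all of $\cok$; restricting to the pseudoholomorphic locus yields the Corollary, since both the fibration and $H$ restrict to $\mok$. The points that still require care, and which I expect to be the only remaining work, are the low-index cases $k\le 3$, where too few marked points survive to kill the gauge and one must normalize instead using the simple map $f$ on local slices, and the higher-genus statement of the following remark, for which the $PSL(2,\BbC)$ normalization is unavailable and a different model of the universal curve is needed.
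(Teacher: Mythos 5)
Your proposal is correct for $k\geq 4$, and it takes a genuinely different route from the paper---in fact, your preliminary observation is precisely an objection to the paper's own argument. The paper deduces the Corollary by restricting a Lemma whose proof is exactly the ``obvious guess'' you rule out: it defines $H_{[f,x_1,\ldots,x_k]}$ as the set of derivatives of curves $[\gamma_0(t),\gamma_1(t),\ldots,\gamma_{k-1}(t),x_k]$ holding the last marked point fixed (i.e.\ the image under $dq$ of $\{\dot x_k=0\}$), checks only that this prescription is unchanged when the representative is moved by $\phi\in G$, and deduces integrability from $[X^h,Y^h]=[X,Y]^h$. Your objection to this stands: since $\xi\mapsto \xi(x_k)$ maps $\frakg$ onto $T_{x_k}\BbP^1$, any vertical vector $(0,\ldots,0,v)$ differs from $(df(\xi),-\xi(x_1),\ldots,-\xi(x_{k-1}),0)\in\{\dot x_k=0\}$ by the infinitesimal orbit direction of a $\xi$ with $\xi(x_k)=v$, so $dq(\{\dot x_k=0\})$ contains $\ker\pi_*$ (a dimension count shows it is all of $T\cok$), and the paper's $H$ is not a complement to the vertical. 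Your cross-ratio gauge-fixing genuinely repairs this: the slice $x_1=0$, $x_2=1$, $x_3=\infty$ is global, it is compatible with $\pi_k$ because the normalizing automorphism depends only on the three retained points, $\pi_k$ becomes the projection dropping $y_k$, and $\ker(dy_k)$ is an integrable complement that visibly restricts to $\mok\cong\{f\ \text{pseudoholomorphic}\}\times\{(y_4,\ldots,y_k)\}$. So where the paper's construction was meant to be uniform in $k$ (and, per its later remark, in the genus), yours trades that generality for an argument that actually closes.

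The one real caveat is the range of $k$. The Corollary as stated is not restricted to $k\geq 4$, and your closing sentence about normalizing ``using the simple map $f$ on local slices'' is not yet a proof: local slices give local integrable distributions, but flatness requires a single global distribution, and distributions built from different local slices need not agree on overlaps. Concretely, for $k=3$ your normalization identifies $\pi_3$ with the quotient map $C^\infty_0(A)\to C^\infty_0(A)/\mathrm{Stab}(0,1)$, a principal $\BbC^*$-bundle, and for $k=1$ the map $\pi_1$ is the $\BbP^1$-bundle associated to the principal $G$-bundle $C^\infty_0(A)\to C^\infty_{0,0}(A)$; in neither case is the existence of an integrable horizontal complement automatic. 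So you should either state the result for $k\geq 4$ or supply a separate argument for $k\leq 3$---and note that, given the gap identified above, the paper does not settle those cases either.
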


\noindent {\it Proof of the Lemma.}  Let $\gamma_0(t)$ be a curve in $C^\infty_{0,k}(A)$ with $\gamma_0(0) = f$, and 
let $\gamma(t)$ be curves in $\BbP^1$ with $\gamma_i(0) = x_i$ for $i = 1,\ldots, k-1.$
  Set
\begin{eqnarray*}
H_{[f,x_1,\ldots, x_k]} &=& \left\{ (d/dt|_{t=0}) [\gamma_0(t), \gamma_1(t),\ldots,
\gamma_{k-1}(t), x_k]: \gamma_0,\ldots, \gamma_{k-1}\ {\rm as\ above}\right\}\\
&=& \left\{  [\dot \gamma_0(0), \dot\gamma_1(0),\ldots,
\dot \gamma_{k-1}(0), 0]\right\}.
\end{eqnarray*}
Since 
$$(d/dt|_{t=0}) [\gamma_0(t), \gamma_1(t),\ldots,
\gamma_{k-1}(t), x_k] = (d/dt|_{t=0}) [\gamma_0(t)\circ \phi^{-1}, \phi\circ \gamma_1(t),\ldots,\phi\circ \gamma_{k-1}(t), x_k],$$
$H$ is well defined. 

Let $X^h$ be the horizontal lift of a vector field $X$ on  $C^\infty_{0,k-1}(A)$, so 
$X^h$ is of the form\\ $[X_0, X_1,\ldots, X_{k-1},0].$
The Lie bracket
$\calL_{X^h}Y^h$ is computed using the flow of $X^h$, which is locally of the form 
$[\eta_0(t),\eta_1(t),\ldots, \eta_{k-1}(t), x_k].$  Thus $X=
[\dot\eta_0(0),\dot\eta_1(0),\ldots,\dot \eta_{k-1}(0)]$ and similarly for $Y$, so
\begin{equation}\label{h1}[X^h, Y^h] = \calL_{X^h}Y^h = [\calL_XY, 0] = [X,Y]^h\in H.
\end{equation}
(The second bracket in (\ref{h1}) refers to a point in $\mok.$)
${}$\hfill$\Box$
\medskip

\begin{rem}  In the local proofs of the families index theorem \cite{BGV}, the setup is a
fibration $Z\to M\stackrel{\pi}{\to} B$ of smooth manifolds with a horizontal distribution, and a hermitian bundle with connection
$(F,\nabla)\to M.$  To this data one can associate the infinite rank bundle $\calF = \pi_*F$ with the pushforward connection $\nabla' = \pi_*\nabla$. The curvature of $\nabla'$ in general
takes values in first order differential operators acting on the fibers 
$\Gamma(F|_{\pi^{-1}(b)})$ of $\calF$.
It is difficult to construct Chern classes for these bundles, as there are no known nontrivial traces on first order operators.  Thus it is particularly significant to have a flat 
fibration.  In this case, we will show that the curvature takes values in zeroth order operators, and so
has a leading order trace.
\end{rem}
\medskip

Let $(F,\nabla)$ be a finite rank hermitian bundle with connection over $\cok$, so 
$\calF = \pi_*F$ is an infinite rank bundle over $\cokm.$  $\pi_*F$ has the connection 
$\nabla'$ defined on $s\in \Gamma(\pi_*F)$ by 
$$ \nabla'_X s [f, x_1,\ldots, x_{k-1}] = \nabla_{X^h} \tilde s [f, x_1, \ldots, x_k],
$$
where $\tilde s [f, x_1, \ldots, x_k] = s [f, x_1,\ldots, x_{k-1}, x_k].$
By (\ref{h1}), the curvature of $\nabla'$ is given by
\begin{eqnarray*}\Omega'(X,Y) &=& \nabla'_{X^h}\nabla'_{Y^h} -
\nabla'_{Y^h}\nabla'_{X^h} -\nabla'_{[X, Y]^h}\\
&=& \nabla'_{X^h}\nabla'_{Y^h} -
\nabla'_{Y^h}\nabla'_{X^h} -\nabla'_{[X^h, Y^h]}\\
&=&\Omega(X^h, Y^h),
\end{eqnarray*}
where $\Omega$ is the curvature of $\nabla.$  In summary, we have
\begin{lem} \label{fivethree} Let $(F,\nabla)\to\cgk$ be a bundle with connection with curvature $\Omega.$
In the notation of Lemma \ref{fiveone}, the induced connection $\nabla'$ on
$\calF = \pi_*F$ has curvature $\Omega'(X,Y) = \Omega(X^h, Y^h)$.
\end{lem}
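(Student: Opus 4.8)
The plan is to compute the curvature $\Omega'$ directly from its definition $\Omega'(X,Y) = \nabla'_X\nabla'_Y - \nabla'_Y\nabla'_X - \nabla'_{[X,Y]}$, transferring each term to the total space $\cgk$ by means of the ``tilde'' correspondence $s\mapsto\tilde s$ between sections of $\calF=\pi_*F$ over $\cgkm$ and sections of $F$ over $\cgk$. The one identity that makes everything work is that the pushforward connection intertwines with $\nabla$ under this correspondence. First I would check that $\widetilde{\nabla'_Y s} = \nabla_{Y^h}\tilde s$ as sections of $F$ over $\cgk$: this is immediate from the defining formula $\nabla'_Y s[f,x_1,\ldots,x_{k-1}] = \nabla_{Y^h}\tilde s[f,x_1,\ldots,x_k]$, read as an equality of fiber sections as the last point $x_k$ ranges over the fiber $\BbP^1_-$. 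Once this is in hand, iterating it gives $(\nabla'_X\nabla'_Y s)[f,x_1,\ldots,x_{k-1}] = \nabla_{X^h}\nabla_{Y^h}\tilde s[f,x_1,\ldots,x_k]$, and symmetrically for $\nabla'_Y\nabla'_X$ and for $\nabla'_{[X,Y]}$.

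Collecting the three terms, I would obtain
$$\Omega'(X,Y)\tilde s = \left(\nabla_{X^h}\nabla_{Y^h} - \nabla_{Y^h}\nabla_{X^h} - \nabla_{[X,Y]^h}\right)\tilde s,$$
which is the curvature $\Omega(X^h,Y^h)$ of $\nabla$ \emph{except} that the final term differentiates along $[X,Y]^h$ rather than along $[X^h,Y^h]$. This is precisely where flatness enters: by Lemma \ref{fiveone} the horizontal distribution is integrable, so $[X^h,Y^h] = [X,Y]^h$ as recorded in (\ref{h1}). Substituting this turns the expression into $(\nabla_{X^h}\nabla_{Y^h} - \nabla_{Y^h}\nabla_{X^h} - \nabla_{[X^h,Y^h]})\tilde s = \Omega(X^h,Y^h)\tilde s$, which is the claim.

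The substantive point — and the reason the lemma is worth stating — is the contrast with the general, non-integrable setup of \S2. There the identical computation leaves a residual term $\nabla_{[X^h,Y^h]} - \nabla_{[X,Y]^h}$, namely differentiation along the vertical vector field $[X^h,Y^h]-[X,Y]^h$ that measures the curvature of the fibration connection; since differentiating a fiber section along a vertical direction is genuinely first order, $\Omega'$ is in general only a first-order operator with no available trace. Thus the main thing to get right is not an estimate but a bookkeeping check: that the intertwining $\widetilde{\nabla'_Y s} = \nabla_{Y^h}\tilde s$ is valid globally, so that $X^h$ may legitimately be applied to it, which rests on $X^h,Y^h$ being well-defined vector fields on $\cgk$ and on the integrability furnished by Lemma \ref{fiveone}. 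Once flatness forces the residual term to vanish, $\Omega'(X,Y)=\Omega(X^h,Y^h)$ is a pointwise endomorphism, and this is exactly what makes the leading order trace of the preceding Remark available for $\calF$.
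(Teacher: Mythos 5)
Your proof is correct and is essentially the paper's own argument: both compute $\Omega'(X,Y)$ directly from the definition of $\nabla'$, transfer each term to the total space via the correspondence $s\mapsto\tilde s$, and then use the integrability identity $[X^h,Y^h]=[X,Y]^h$ from (\ref{h1}) of Lemma \ref{fiveone} to identify the result with $\Omega(X^h,Y^h)$. Your write-up is in fact slightly cleaner, since you make the intertwining $\widetilde{\nabla'_Y s}=\nabla_{Y^h}\tilde s$ explicit where the paper's displayed computation silently conflates $\nabla'_X$ with $\nabla'_{X^h}$.
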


Let $\alpha_i$ be elements of the {\it even} cohomology of $M$.  Since the Chern character $ch:K(M)\otimes \C\to H^{\rm ev}(M,\C)$ is an isomorphism, $\alpha_i
= ch(E_i)$ 
for a virtual bundle $E_i$ ($i = 1,\ldots, k-1$), and $\alpha_k^{\ell_k} = ch(E_k)$.
Pullbacks and pushdowns of the $E_i$ are well defined 
virtual bundles.

\begin{thm} \label{pp} Let $\alpha_i\in H^{\rm ev}(M,\C)$ satisfy $\alpha_i = ch(E_i)$, $i=1,\ldots k-1$,
 and let $\alpha_k^{\ell_k} = ch(E_k)$ for $E_i\in K(M)$.  Set
$\calE_i = \pi_* \ev_i^*E_i\to\mokm.$  Then
\begin{eqnarray*}
\langle\alpha_1^{\ell_1}\ldots\alpha_{k}^{\ell_{k}}\rangle_{0,k}  
&=&  \langle ch^{\rm lo}(\calE_1)\cdots
ch^{\rm lo}(\calE_{k-1})ch^{\rm str}(\calE_k)\rangle_{0,k-1}.
\end{eqnarray*}
\end{thm}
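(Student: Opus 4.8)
The plan is to run the same projection-formula argument as in the proof of Theorem~\ref{gd}, but now reinterpreting each factor of the integrand as a leading order or string character. As in (\ref{en}), I would first write the left-hand side (the descendant-free GW invariant) as the genuine integral
\[
\langle\alpha_1^{\ell_1}\ldots\alpha_k^{\ell_k}\rangle_{0,k}=\int_{\mok}\ev_1^*\alpha_1^{\ell_1}\wedge\cdots\wedge\ev_{k-1}^*\alpha_{k-1}^{\ell_{k-1}}\wedge\ev_k^*\alpha_k^{\ell_k}\quad({\rm mod}\ \mok\setminus V),
\]
using the compatible cycles $\bar V_k=\pi^{-1}(\bar V_{k-1})$ and the verification of (\ref{pro}) already carried out for Theorem~\ref{gd}, which is exactly what licenses (\ref{PD2}) for the noncompact fibration $\BbP^1_-\to\mok\stackrel{\pi}{\to}\mokm$. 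Since $\ev_i^k=\ev_i^{k-1}\circ\pi$ for $i<k$, the first $k-1$ factors are pulled back along $\pi$, so applying (\ref{PD2}) to push $\ev_k^*\alpha_k^{\ell_k}$ down the fiber gives
\[
\int_{\mokm}(\ev_1^{k-1})^*\alpha_1^{\ell_1}\wedge\cdots\wedge(\ev_{k-1}^{k-1})^*\alpha_{k-1}^{\ell_{k-1}}\wedge\pi_*\ev_k^*\alpha_k^{\ell_k}.
\]

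The last factor is then the string character: since $\alpha_k^{\ell_k}=\che(E_k)$, naturality of $\che$ gives $\ev_k^*\alpha_k^{\ell_k}=\che(\ev_k^*E_k)$, and the pushforward property (\ref{1.1}) of Definition~\ref{2.2} yields $\pi_*\che(\ev_k^*E_k)=ch^{\rm str}(\pi_*\ev_k^*E_k)=ch^{\rm str}(\calE_k)$.

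The heart of the argument is identifying the remaining factors with leading order characters. For $i<k$, the relation $\ev_i^k=\ev_i^{k-1}\circ\pi$ exhibits $\ev_i^*E_i=\pi^*H$ as a pullback, where $H=(\ev_i^{k-1})^*E_i\to\mokm$ (taking $\che(E_i)=\alpha_i^{\ell_i}$, absorbing the power $\ell_i$ into the virtual bundle as for the $k$-th factor). Thus $\calE_i=\pi_*\pi^*H$ has fiber $\Gamma(\BbP^1_-,H_b)$ over $b$. By Lemma~\ref{fiveone} the fibration is flat, so by Lemma~\ref{fivethree} the curvature of the pushforward connection is $\Omega'(X,Y)=(\pi^*\Omega^H)(X^h,Y^h)=\Omega^H(X,Y)$, a zeroth order operator acting on each fiber as $\Omega^H(X,Y)\otimes{\rm id}$ and hence independent of the fiber variable. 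Therefore the leading order trace (\ref{trlo}) of any power merely integrates a fiber-constant endomorphism over $\BbP^1_-$, and with the fiber volume form normalized to unit mass one gets $\trlo(\exp\Omega')=\che(H)=(\ev_i^{k-1})^*\alpha_i^{\ell_i}$, that is $ch^{\rm lo}(\calE_i)=(\ev_i^{k-1})^*\alpha_i^{\ell_i}$. All three constructions extend by linearity to the virtual bundles $E_i\in K(M)$. Substituting the string identity and these $k-1$ leading order identities back into the displayed integral over $\mokm$ reproduces exactly $\langle ch^{\rm lo}(\calE_1)\cdots ch^{\rm lo}(\calE_{k-1})ch^{\rm str}(\calE_k)\rangle_{0,k-1}$, completing the proof.

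I expect the main obstacle to be analytic rather than algebraic: guaranteeing that (\ref{PD2}), the fiber integral defining $ch^{\rm str}(\calE_k)$, and the fiber integral in the leading order trace all genuinely extend across the punctures of $\BbP^1_-$ to the compact $\BbP^1$, and that the ``${\rm mod}\ \mok\setminus V$'' conventions together with the cycle compatibility (\ref{pro}) transfer verbatim from Theorem~\ref{gd}. A secondary point is fixing the fiber volume normalization so that the $k-1$ leading order factors contribute no spurious constant $V^{k-1}$; as a consistency check one verifies the degree count $\sum_i\ell_i|\alpha_i|=\dim\mok$, which, since the string factor drops degree by $z=\dim\BbP^1_-=2$, lands the total integrand in top degree $\dim\mokm$.
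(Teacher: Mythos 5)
Your proposal is correct and follows essentially the same route as the paper: express the invariant as an integral of pulled-back Chern--Weil forms over $\mok$, use $\ev_i^k=\ev_i^{k-1}\circ\pi$ and the projection formula (\ref{PD2}) (with the cycle compatibility (\ref{pro}) carried over from Theorem \ref{gd}), identify the pushed-down $k$-th factor as $ch^{\rm str}(\calE_k)$, and use the flatness of $\pi$ via Lemmas \ref{fiveone} and \ref{fivethree} to see that the pushforward curvature is fiber-constant, so the leading order trace with unit fiber volume gives $ch^{\rm lo}(\calE_i)=(\ev_i^{k-1})^*\alpha_i^{\ell_i}$. The only (inessential) difference is that you absorb every power $\ell_i$ into a virtual bundle, while the paper does this only for $i=k$ and keeps the other powers on the forms.
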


\begin{proof} 
Pick connections $\nabla_i$ on $E_i$ with curvature $\Omega_i$. Then
\begin{eqnarray}\label{i} 
\lefteqn{ \langle\alpha_1^{\ell_1}\ldots\alpha_k^{\ell_k}\rangle_{0,k}}\\
&=& \int_{\mok} \ev_1^*([\Tr(\exp(\Omega_1)]^{\ell_1})\wedge\ldots\wedge 
\ev_{k-1}^*(
[\Tr(\exp\Omega_{k-1})]^{\ell_{k-1}}\wedge \ev_{k}^*
([\Tr(\exp\Omega_{k})])\nonumber\\
&=& \int_{\mok} [\Tr(\pi^*\ev_1^*\exp(\Omega_1))]^{\ell_1}\wedge\ldots\wedge
[\Tr(\pi^*\ev_{k-1}^*\exp(\Omega_{k-1}))]^{\ell_{k-1}}\wedge [\Tr(\ev_k^*\exp
\Omega_k)]
\nonumber\\
&=& \int_{\mokm}  [\Tr(\ev_1^*\exp(\Omega_1))]^{\ell_1}\wedge\ldots\wedge
[\Tr(\ev_{k-1}^*\exp(\Omega_{k-1}))]^{\ell_{k-1}}\wedge \pi_*[\Tr(\ev_k^*
\exp(\Omega_k))]. \nonumber
\end{eqnarray}

We have 
$$\pi_*\left([\Tr(\ev_k^*\exp(\Omega_k))]^{\ell_k}\right) = ch^{\rm str}(\calE_k),$$
where $\calE_k = \pi_*\ev_k^*E_k.$

We claim that in the last line of (\ref{i}), 
$ [\Tr(\ev_i^*\exp(\Omega_i))]$ is the leading order Chern 
character
 for $\calE_i = \pi_*E_i$, for $i<k.$  Dropping the index $i$, let 
 $\ev_u:\mok\to M, \ev_d:\mokm\to M$ be the $i$-th evaluation maps.  
 Then the leading order Chern character
 for
$\calE$ as a differential form is given by
$$ch^{\rm lo}(\calE)(X_1,\ldots, X_{2r}) = \int_{\BbP'} \Tr(\exp(\Omega^{\ev^*_uE}))
(X_1^h,\ldots, X^h_{2r})\dvol_{\BbP'},$$
by (\ref{trlo}).   Since
\begin{eqnarray*}\Tr(\exp(\Omega^{\ev^*_uE}))_z(X_1^h,\ldots, X^h_{2r})
&=& \Tr(\exp(\Omega^{\pi^*\ev^*_dE}))_z(X_1^h,\ldots, X^h_{2r})\\
&=& \Tr(\exp(\Omega^{\ev^*_uE}))_z(\pi_*X_1^h,\ldots, \pi_*X^h_{2r})\\
&=& \Tr(\exp(\Omega^{\ev^*_uE}))(X_1,\ldots, X_{2r})
\end{eqnarray*}
is independent of  $z\in\BbP'$, we get
$$ch^{\rm lo}(\calE)= \int_{\BbP'} \Tr(\exp(\Omega^{\ev^*_uE}))
\dvol_{\BbP'} = {\rm Vol}(\BbP')\Tr(\exp(\Omega^{\ev^*_uE}))
= {\rm Vol}(\BbP')\Tr(\ev^*\exp(\Omega^{E})).$$
Setting the volume of $\BbP'$ equal to one finishes the claim and the propf.
\end{proof}
\bigskip

We briefly discuss the algebraic setting.  $M$ is now a smooth projective variety, with 
$\mgk,$
$ \bmgk$  the moduli space/stack of stable maps of a fixed genus $g$ curve into $X$ representing $A\in H_2(M)$, and its compactification, respectively.  The
forgetful map 
$\pi:\mgk\to\mgkm$ exists as long as $n+2g\geq 4$ or $A\neq 0$ and $n\geq 1$
\cite[p. 183]{CK}.
Provided the open moduli spaces are oriented manifolds, we can represent a suitable multiple of $a\in H_*(\mgkm,\Z)$ by an submanifold $N$, and then
$$\int_N \omega\wedge\pi_*\eta = \int_{\pi^{-1}N}\pi^*\omega\wedge\eta,$$
for $\omega\in \Lambda^*(\mgkm), \eta\in \Lambda^*(\mgk)$  compactly supported forms.

For a fibration $M\stackrel{\pi}{\to} B$ of oriented compact manifolds, define the homology pullback \\
$\pi^*:H_*(B)\to H_*(M)$ by $\pi^* = \PD_M^{-1}\circ \pi^*\circ \PD_B$, where $\pi^*$ on the right hand side is the usual cohomology pullback. 
By (\ref{PDPD}), (\ref{PD2}), 
\begin{equation}\label{PD3}\int_N\omega\wedge\pi_*\eta = \int_B\omega\wedge\pi_*\eta\wedge \PD_B(N)
=\int_M \pi^*\omega\wedge\eta\wedge\pi^*\PD_B(N) 
= \int _{\pi^*N}\pi^*\omega\wedge\eta,
\end{equation}
where we identify $N$ with its homology class.

We would like to apply (\ref{PD3}) to $[N] = \bvm,$ the virtual fundamental class of
$\overline{\calM}_{g,k-1}(A).$ Provided $\overline{\calM}_{g,k-1}(A), \overline{\calM}_{g,k}(A)$ are orbifolds, we have
$$\int_{\bvm} \omega \wedge\pi_*\eta = \int_{\bv}\pi^*\omega \wedge\eta,$$
since
$\pi^*\bvm = \bv$ \cite[(7.22)]{CK}.  The moduli spaces are orbifolds if e.g. 
$g=0$ and $M = \BbP^n$  (or more generally if $M$ is convex) \cite{FP}. 
In these cases, Theorem 4.1 continues to hold, since string classes are given by topological pushforwards.  However, for Theorem 4.6,
we would need to know in addition that $\ev_i^*E_i$ admit (suitable variants of) connections over the compactified moduli spaces, since the leading order classes are constructed from connections.  It is not clear that this is possible.

\section{Leading order Chern-Simons classes on loop groups}\label{string}

In this section we return to loop spaces in  the special case of based loop groups $\Omega G$.  We define 
Chern-Simons analogues of the string and leading order characteristic forms of \S2.  
The main result is that  $H^*(\Omega G, \R)$ is generated by Chern-Simons string classes (Thm.~\ref{thmfive}) or by a pushforward of
of leading order Chern-Simons classes (Thm.~\ref{relate}).  The relation between these different approaches is also given in 
Thm.~\ref{relate}.

\subsection{Relative Chern-Simons forms on compact Lie groups}
We first express the generators of the cohomology ring of a compact Lie group in terms of relative Chern-Simons classes.

Let $E\to M$ be a rank $n$ vector bundle over a closed manifold  $M$, and let $\nabla_0,\nabla_1$ be connections on $E$. Locally, we have $\nabla_i = d+\omega_i$,
and $\omega_1-\omega_0$ is globally defined.
Let $\Omega_t$ be the curvature of the  connection $\nabla_t=t\nabla_0+(1-t)\nabla_1$.
For a subgroup $G'$ in $U(n)$ and an Ad${}_{G'}$-invariant
polynomial $f$, 
the relative Chern-Simons form 
\begin{equation}\label{relcs}
CS_f(\nabla_0,\nabla_1)=\int_0^1f(\omega_1-\omega_0,\Omega_t,\ldots,\Omega_t)dt
\end{equation}
satisfies 
$f(\Omega_1)-f(\Omega_0)=dCS_f(\nabla_0,\nabla_1)$, provided the $\nabla_i$ are
$G'$-connections. 

Assume now that $E$ is a trivialized bundle and $\nabla_0=d$ is the canonical flat connection. A gauge transformation $h\in{\rm Aut}(E)$ yields the flat connection
 $\nabla_1=h\cdot\nabla_0=h^{-1}d h$. 
 We have $\omega_1-\omega_0=h^{-1}dh$ and $\Omega_t=C_t[h^{-1}dh,h^{-1}dh]$ for  $C_t > 0$. Note the confusing notation for $\nabla_1=h^{-1} dh$ and the 
 global connection one-form $h^{-1}dh.$
 In any case, $CS_f$ is a closed form of odd degree 
 and so determines a Chern-Simons cohomology class.

Let $G$ be a compact Lie group with Lie algebra $\mathfrak{g}$ and Maurer-Cartan form
$\theta^G$.
Choose a   finite dimensional faithful unitary representation $h:G\to\textrm{Aut}(V)$
with ${\rm Im}(h) = G'$. We may assume that $h$ is the exponentiated version of a faithful Lie algebra representation $dh = h_*: \mathfrak{g}\to\End(V).$  For $\underline{V}\to G$ the trivial vector bundle $G\times V\to G$, we can view 
$h$ as a gauge transformation of $\underline{V}.$
Let $\nabla_0=d$ and $\nabla_1=h^{-1}d h$ be connections on $\underline{V}$ as above. 
 As before,  $\omega_1-\omega_0=h^{-1}dh$, where 
$$h^{-1}dh|_g(g_*X)=h^{-1}(g)\,dh_g(g_*X)=h^{-1}(g)
 h(g)h_*(X)=h_*(X),$$
for  $X\in\mathfrak{g}$.  Here $g_* = (L_g)_*$ is the differential of left multiplication by $g$.  
Thus $h^{-1}dh=h_*(\theta^G)\in \Lambda^1(G, {\rm End}(\underline{V}))$.

An  Ad${}_G$-invariant polynomial $f$ on $\mathfrak{g}$ determines an $h(G)$-invariant polynomial $h_*f$ on $ \textrm{Im}(h_*)\subset \textrm{End}(V)$
by
$$(h_*f)(\alpha_1,\alpha_2,\ldots,\alpha_k) = f(h_*^{-1} \alpha_1,\ldots,h_*^{-1} \alpha_k).$$ 
In particular, 
 $$f(\theta^G,[\theta^G, \theta^G],\ldots,[\theta^G, \theta^G])= (h_*f)(h^{-1}dh,[h^{-1}dh, h^{-1}dh],\ldots,[h^{-1}dh, h^{-1}dh]).$$
The corresponding Chern-Simons form is equal to
$$CS_f(\nabla_0,\nabla_1)= C(h_*f)(h^{-1}dh,[h^{-1}dh, h^{-1}dh],\ldots,[h^{-1}dh, h^{-1}dh]),$$
for some $C\neq 0.$

The following result is classical \cite[\S4.11]{PrSe}.

\begin{thm}  Let $\{f_i\}$ be a set of generators of the algebra of 
Ad${}_G$-invariant polynomials on $\mathfrak g.$  Then
$\{f_i(\theta^G,[\theta^G, \theta^G],\ldots,[\theta^G, \theta^G]\}$ is a set of ring generators of
$H^*(G,\mathbb{R})$.
\end{thm}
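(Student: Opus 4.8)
The plan is to establish the theorem as an application of the classical Chevalley--Hopf structure theory for the cohomology of compact Lie groups, combined with the explicit Chern--Weil/Chern--Simons identifications already assembled in this section. The key classical input is that for a compact connected Lie group $G$, the real cohomology ring $H^*(G,\mathbb{R})$ is an exterior algebra on odd-degree generators, and that these generators are the \emph{primitive} elements, which are in turn represented by the bi-invariant forms obtained from $\mathrm{Ad}_G$-invariant polynomials on $\mathfrak{g}$ via transgression. Concretely, the Weil homomorphism identifies the algebra of $\mathrm{Ad}_G$-invariant polynomials $I(\mathfrak{g})$ with $H^*(BG,\mathbb{R})$, and the cohomology suspension $\sigma: H^*(BG)\to H^{*-1}(G)$ sends a set of polynomial generators $\{f_i\}$ of $I(\mathfrak{g})$ to a set of exterior-algebra generators of $H^*(G,\mathbb{R})$. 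This is exactly the content cited from \cite[\S4.11]{PrSe}.

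First I would make precise the bi-invariant form attached to an invariant polynomial $f$ of degree $k$. Using the Maurer--Cartan form $\theta^G\in\Lambda^1(G,\mathfrak{g})$ and the structure equation $d\theta^G=-\tfrac12[\theta^G,\theta^G]$, the expression $f(\theta^G,[\theta^G,\theta^G],\ldots,[\theta^G,\theta^G])$ is (up to the nonzero constant $C$ appearing above) precisely the transgressed form representing $\sigma(f)$. The second step is to connect this to the Chern--Simons apparatus of the excerpt: applying the preceding discussion to the trivialized bundle $\underline V\to G$ with $\nabla_0=d$ and $\nabla_1=h^{-1}dh$, both flat, so that $f(\Omega_1)-f(\Omega_0)=0=dCS_f(\nabla_0,\nabla_1)$ confirms $CS_f$ is closed, while the computed identity $\omega_1-\omega_0=h_*(\theta^G)$ and $\Omega_t=C_t[h^{-1}dh,h^{-1}dh]$ shows that under the faithful representation $h_*$ the form $CS_f(\nabla_0,\nabla_1)$ equals $C\,(h_*f)$ evaluated on $(h^{-1}dh,[h^{-1}dh,h^{-1}dh],\ldots)$, which pulls back under the intertwining $h_*$ to exactly $C\, f(\theta^G,[\theta^G,\theta^G],\ldots,[\theta^G,\theta^G])$.

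The third step is to verify that passing to the image representation and back is harmless, i.e. that the faithfulness of $h$ guarantees $h_*$ is a Lie algebra isomorphism onto its image, so that $h_*f$ is genuinely $h(G)$-invariant and the defining relation $(h_*f)(h_*X_1,\ldots,h_*X_k)=f(X_1,\ldots,X_k)$ identifies the two forms without loss of information. Consequently each $f_i(\theta^G,[\theta^G,\theta^G],\ldots)$ agrees, up to the nonzero scalar $C$, with a Chern--Simons representative, and scaling does not affect cohomology classes or the generating property. Finally I would invoke the classical structure result to conclude: since $\{f_i\}$ generate $I(\mathfrak{g})\cong H^*(BG,\mathbb{R})$ and the suspension carries polynomial generators to exterior generators of $H^*(G,\mathbb{R})$, the forms $\{f_i(\theta^G,[\theta^G,\theta^G],\ldots)\}$ generate $H^*(G,\mathbb{R})$ as a ring.

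I expect the main obstacle to be purely expository rather than mathematical: the result is classical, so the real work is bookkeeping the several normalizing constants ($C$, $C_t$, and the transgression normalization) and making the identification of the Chern--Simons form with the transgressed bi-invariant form precise enough that the nonzero scalar is visibly irrelevant to the generating statement. A secondary point requiring care is the standing assumption that $G$ is connected (implicit in ``compact Lie group'' here); the Hopf-algebra structure theorem and the exterior-algebra conclusion are stated for connected $G$, so one should either assume connectedness or note that the generators of $I(\mathfrak{g})$ only see the identity component. Beyond these, no new analysis is needed, since everything reduces to the finite-dimensional Chern--Weil theory already set up and the cited classical theorem.
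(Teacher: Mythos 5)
Your proposal is correct and follows essentially the same route as the paper: the paper offers no independent argument for this theorem but simply cites it as classical (Pressley--Segal, \S 4.11), and your outline --- Hopf's structure theorem plus the Borel/Cartan--Chevalley transgression identifying suspended generators of $I(\mathfrak g)\cong H^*(BG,\mathbb{R})$ with the bi-invariant forms $f_i(\theta^G,[\theta^G,\theta^G],\ldots,[\theta^G,\theta^G])$ --- is precisely the content of that citation. Your supplementary bookkeeping (faithfulness of $h_*$, the nonvanishing constants $C$, $C_t$, and connectedness of $G$) matches the discussion the paper gives surrounding the theorem and introduces no gaps.
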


Thus $H^*(G,\R)$ is generated by Chern-Simons classes:

\begin{cor}  For $\{f_i\}$ as above, 
$CS_{f_i}(\nabla_0, \nabla_1)$ is a set of ring
generators of
$H^*(G,\mathbb{R})$
\end{cor}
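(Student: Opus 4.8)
The plan is to deduce the Corollary directly from the Theorem by identifying the Chern--Simons forms $CS_{f_i}(\nabla_0,\nabla_1)$ with the classical generators $f_i(\theta^G,[\theta^G,\theta^G],\ldots,[\theta^G,\theta^G])$ up to nonzero scalar, which is exactly the content assembled in the paragraphs immediately preceding the Theorem. Concretely, the Theorem states that the forms $f_i(\theta^G,[\theta^G,\theta^G],\ldots)$ generate $H^*(G,\R)$ as a ring. The preceding discussion already established that
$$CS_{f_i}(\nabla_0,\nabla_1) = C\,(h_* f_i)(h^{-1}dh,[h^{-1}dh,h^{-1}dh],\ldots,[h^{-1}dh,h^{-1}dh])$$
for some constant $C \neq 0$, and that under the faithful representation $h$ one has $h^{-1}dh = h_*(\theta^G)$ together with the intertwining relation $(h_*f_i)(h_*\alpha_1,\ldots,h_*\alpha_k) = f_i(\alpha_1,\ldots,\alpha_k)$ coming from the definition of $h_*f_i$.

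First I would combine these two facts: substituting $h^{-1}dh = h_*(\theta^G)$ into the displayed formula for $CS_{f_i}$ and applying the intertwining identity for $h_*f_i$ converts the right-hand side into $C\,f_i(\theta^G,[\theta^G,\theta^G],\ldots,[\theta^G,\theta^G])$, since the bracket $[h^{-1}dh,h^{-1}dh]$ is the image under $h_*$ of $[\theta^G,\theta^G]$ (the representation $h_*$ is a Lie algebra homomorphism, so it respects brackets). Thus at the level of forms we obtain the clean equality
$$CS_{f_i}(\nabla_0,\nabla_1) = C\,f_i(\theta^G,[\theta^G,\theta^G],\ldots,[\theta^G,\theta^G]),\qquad C\neq 0.$$

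Second, I would pass to cohomology. Because each $CS_{f_i}$ is a closed odd-degree form (as noted in the setup, since $\nabla_0$ and $\nabla_1$ are both flat, so $f(\Omega_1)-f(\Omega_0)=0=dCS_{f_i}$), it represents a well-defined class in $H^*(G,\R)$, and multiplication by the nonzero scalar $C$ is an automorphism of the graded ring $H^*(G,\R)$. Hence the set $\{[CS_{f_i}(\nabla_0,\nabla_1)]\}$ differs from $\{[f_i(\theta^G,\ldots)]\}$ only by invertible scalars in each degree, so the two sets generate the same subring. Since the Theorem asserts the latter set generates all of $H^*(G,\R)$, so does the former, proving the Corollary.

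I do not expect a genuine obstacle here, as the Corollary is essentially a restatement of the Theorem once the scalar identification is in hand. The only point requiring a little care is bookkeeping the nonzero constant $C$ and verifying that passing from forms to cohomology classes is legitimate, i.e.\ that the $CS_{f_i}$ are genuinely closed; this relies on both $\nabla_0=d$ and $\nabla_1=h^{-1}dh$ being flat, which was recorded in the setup. A secondary subtlety, worth a remark but not an obstruction, is that the identification of $CS_{f_i}$ with the Maurer--Cartan polynomial holds at the level of differential forms rather than merely cohomology, so no transgression argument or degree-shift comparison is needed; the equality is pointwise up to the universal constant arising from the $t$-integration in \eqref{relcs}.
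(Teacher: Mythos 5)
Your proposal is correct and follows essentially the same route as the paper, which offers no separate proof precisely because the Corollary is immediate from the identities assembled before the Theorem: $CS_{f_i}(\nabla_0,\nabla_1) = C_i\, f_i(\theta^G,[\theta^G,\theta^G],\ldots,[\theta^G,\theta^G])$ with $C_i \neq 0$ (closedness coming from flatness of both connections), so the Theorem yields the Corollary. One small wording correction: multiplication by a nonzero scalar is not a ring automorphism of $H^*(G,\mathbb{R})$; the correct (and sufficient) statement is that rescaling each generator by a nonzero constant does not change the $\mathbb{R}$-subalgebra it generates, since scalars lie in degree zero.
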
 

For example, for $G = U(n)$ itself, the generators are given by 
$\Tr((h^{-1}dh)^{k})$, although these generators vanish for $k=1$, $k$ even, and $k> n^2-n.$

\subsection{String Chern-Simons forms on loop groups}

Using \cite{PrSe}, we show that generators of the real cohomology ring of a loop group  can be written as Chern-Simons forms for a finite rank bundle.

From now on, $G$ denotes a simply connected compact Lie group. Let $\Omega G$ 
be the group of smooth loops based at the identity. $\Omega G$ in the compact-open topology is an infinite dimensional Lie group with the homotopy type of a CW-complex.  As in (\ref{encap}), 
the evaluation map $\ev:\Omega G\times S^1\to G, \ev(\gamma, \theta) = \gamma(\theta)$, gives
\medskip

$$\begin{CD} 
@.\textrm{ev}^*\underline{V} @>>> \underline{V}\\
@.@VVV   @VVV\\
@.\Omega G\times S^1 @>{\rm ev}>> G\\
@.@V\pi VV @.\\
\calE = \pi_*\ev^*\underline{V}@>>>\Omega G  @.
\end{CD}\ \ \ \ \ \ \ \ \ \ \ \ \ \ \ \ \ \ \ \ \ \ \ \ \ \ $$

\medskip
The $\mathfrak{g}$-valued one-form $\ev^*h_*\theta^G = \ev^*(h^{-1}dh)$
on $ \Omega G\times S^1$ decomposes as
$$\textrm{ev}^*(h^{-1}dh)= \xi+ \eta ,$$
where $\xi$, resp. $\eta$, are supported on $\Omega G$, resp. $S^1$, directions.

It is easy to calculate $\xi$ and $\eta.$

\begin{lem}
(i) At $(\gamma, t)\in \Omega G\times S^1, $
\begin{equation}\label{321} \eta = \gamma^{-1}(t)\dot{\gamma}(t)\,dt.
\end{equation}

(ii) For  $(\gamma_*X,0)\in T_{(\gamma,t_0)}(\Omega G\times S^1)$, we have
\begin{equation}\label{likewise2}
\xi|_{(\gamma,t_0)}(\gamma_*X,0)
=X(t_0).
\end{equation}
\end{lem}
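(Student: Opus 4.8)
The plan is to compute the pullback one-form $\ev^*(h^{-1}dh) = \ev^*(h_*\theta^G)$ directly from the differential of the evaluation map, and then read off $\xi$ and $\eta$ from the product structure of $\Omega G\times S^1$. Since the decomposition $\ev^*(h^{-1}dh)=\xi+\eta$ is by the two factors, $\eta$ is determined by the value of $\ev^*(h^{-1}dh)$ on the $S^1$-vector $(0,\partial_t)$, while $\xi$ is determined by its values on the $\Omega G$-vectors $(\gamma_*X,0)$. Because $h_*$ is the injective differential of the faithful representation $h$, I would suppress it and work with the Maurer--Cartan form $\theta^G$, for which $\theta^G_g((L_g)_*Z)=Z$ for $Z\in\mathfrak{g}$; the displayed identities then follow after applying $h_*$, exactly as in the computation of $h^{-1}dh|_g(g_*X)$ above.

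For part (i), I would differentiate $\ev$ in the circle direction: the curve $t\mapsto\ev(\gamma,t)=\gamma(t)$ through $\gamma(t_0)$ gives $\ev_*(0,\partial_t)=\dot\gamma(t_0)\in T_{\gamma(t_0)}G$. Applying the Maurer--Cartan form converts this velocity into its logarithmic derivative, $\theta^G_{\gamma(t_0)}(\dot\gamma(t_0))=(L_{\gamma(t_0)^{-1}})_*\dot\gamma(t_0)=\gamma^{-1}(t_0)\dot\gamma(t_0)$. Since $\eta$ is the $S^1$-component and is supported on $dt$, this yields $\eta=\gamma^{-1}(t)\dot\gamma(t)\,dt$.

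For part (ii), the key point is the identification of $T_\gamma\Omega G$ with based loops in $\mathfrak{g}$ via left translation, together with the fact that left multiplication in $\Omega G$ is pointwise. I would realize $\gamma_*X=(L_\gamma)_*X$ by the path $s\mapsto\gamma(\cdot)\exp(sX(\cdot))$ in $\Omega G$, so that $\ev_*(\gamma_*X,0)=\frac{d}{ds}\big|_{s=0}\gamma(t_0)\exp(sX(t_0))=(L_{\gamma(t_0)})_*X(t_0)$. In other words, $\ev_*$ intertwines the pointwise left translation on $\Omega G$ with left translation on $G$. Applying the Maurer--Cartan form then cancels the left translation, $\theta^G_{\gamma(t_0)}\big((L_{\gamma(t_0)})_*X(t_0)\big)=X(t_0)$, giving $\xi|_{(\gamma,t_0)}(\gamma_*X,0)=X(t_0)$.

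The computation becomes essentially immediate once these identifications are fixed, so there is no serious analytic obstacle; the only thing requiring genuine care is the setup of $T_\gamma\Omega G$ by left translation and the verification that $\ev_*$ carries the pointwise group structure of $\Omega G$ to that of $G$. I would also record that suppressing $h_*$ is harmless precisely because $h$ is a faithful homomorphism, so it respects both the left-translation structure used here and the bracket appearing elsewhere in this subsection.
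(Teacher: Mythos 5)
Your proof is correct and follows essentially the same route as the paper: evaluate $\ev^*(h^{-1}dh)$ on $(0,\partial_t)$ and on $(\gamma_*X,0)$, compute $\ev_*$ of each, and apply the Maurer--Cartan identity $\theta^G_g(g_*Z)=Z$ (the paper's $h^{-1}dh|_g(g_*X)=h_*(X)$, with $\mathfrak g$ identified with its image under the faithful $h_*$). Your explicit path $s\mapsto\gamma(\cdot)\exp(sX(\cdot))$ merely fills in the step $\ev_*(\gamma_*X,0)=\gamma(t_0)_*X(t_0)$ that the paper asserts without proof.
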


\begin{proof} (i) 
For  $(0,\partial_t)$ tangent to $\{\gamma\}\times S^1$,
$$\eta|_{(\gamma,t_0)}(0,\partial_t)=\textrm{ev}^*(h^{-1}dh)|_{(\gamma,t_0)}(0,\partial_t)=
(h^{-1}dh)|_{\gamma(t_0)}(\textrm{ev}_*(0,\partial_t))
=\gamma(t_0)^{-1}\dot{\gamma}(t_0).$$

(ii) We have
\begin{eqnarray*}
\xi|_{(\gamma,t_0)}(\gamma_*X,0)&=&\textrm{ev}^*(h^{-1}dh)|_{(\gamma,t_0)}(\gamma_*X,0)=
(h^{-1}dh)|_{\gamma(t_0)}(\textrm{ev}_*(\gamma_*X,0))\\
&=&(h^{-1}dh)|_{\gamma(t_0)}(\gamma(t_0)_*X(t_0))=X(t_0).\nonumber
\end{eqnarray*}
\end{proof}

By \cite[\S4.11]{PrSe}, a set of generators for $H^*(\Omega G,\mathbb{R})
= H^{\rm ev}(\Omega G,\R)$ is given by 
\begin{eqnarray}\label{cstilde}
\lefteqn{\int_{S^1}f_i([\xi,\xi],\ldots,[\xi,\xi],\eta)} \\
&=&\int_{S^1}f_i\left(
[\textrm{ev}^*(h^{-1}dh),\textrm{ev}^*(h^{-1}dh)],\ldots,
[\textrm{ev}^*(h^{-1}dh),\textrm{ev}^*(h^{-1}dh)]
,\textrm{ev}^*(h^{-1}dh)\right),\nonumber
\end{eqnarray}
for $\{f_i\}$ a set of generators for the algebra of  Ad-invariant polynomials on $\mathfrak{g}$. To go from the first to the second line in (\ref{cstilde}), we use 
$[\ev^*(h^{-1}dh), \ev^*(h^{-1}dh)] = [\xi+\eta,\xi+\eta] = [\xi, \xi],$
and $\int_{S^1}f_i([\xi, \xi],\ldots, [\xi,\xi], \xi) = 0$.

We want to recognize the right hand side of (\ref{cstilde}) both as a string version of a Chern-Simons form 
and as a contraction of
 a leading order Chern-Simons form.  To begin, we give the Chern-Simons analogues of the the primary string and leading order forms of \S2.

\begin{defn} {\it (i)  Let $Z\to M\stackrel{\pi}{\to}B$ be a fibration of manifolds with $Z$ oriented and closed.  Let $E\to M$ be a vector bundle with structure group $G$ and 
with $G$-connections $\nabla_0, \nabla_1$, and let $\calE = \pi_*E\to B$ be the infinite rank pushdown bundle.  The
string CS form on $\calE$ associated to a degree k invariant polynomial $f$ 
on $\mathfrak g$ is 
$$CS_f^{{\rm str}, \calE}(\pi_*\nabla_0, \pi_*\nabla_1) = \pi_*CS_f^E(\nabla_0, \nabla_1)\in \Lambda^{2k-1-z}(B),$$
where $z = \dim(Z).$

(ii) Assume in addition that $M$ is Riemannian.  The leading order CS form associated to f is
$$CS_f^{{\rm lo}, \calE}(\pi_*\nabla_0, \pi_*\nabla_1) = \int_Z CS_f^E(\nabla_0, \nabla_1)\dvol_Z \in \Lambda^{2k-1}(B),$$
where $\dvol_Z$ is the volume form on the fibers and the integral over Z is in the sense of (\ref{trlo}).  }
\end{defn}

There is a corresponding more natural definition for $G$-principal bundles, which
is the setting for primary string classes in  \cite{h-m-v}, \cite{m-v}.

We see that the generators in (\ref{cstilde}) satisfy
\begin{eqnarray}\label{cstilde2}
\lefteqn{\int_{S^1}f_i\left(
[\textrm{ev}^*(h^{-1}dh),\textrm{ev}^*(h^{-1}dh)],\ldots,
[\textrm{ev}^*(h^{-1}dh),\textrm{ev}^*(h^{-1}dh)]
,\textrm{ev}^*(h^{-1}dh)\right)}\nonumber\\
&=& \pi_*f_i\left(
[\textrm{ev}^*(h^{-1}dh),\textrm{ev}^*(h^{-1}dh)],\ldots,
[\textrm{ev}^*(h^{-1}dh),\textrm{ev}^*(h^{-1}dh)]
,\textrm{ev}^*(h^{-1}dh)\right)\\
&=& \pi_* CS_{f_i}(\ev^*\nabla_0, \ev^*\nabla^1),\nonumber
\end{eqnarray}
where $\nabla_0 = d, \nabla_1 = h^{-1}dh$ are connections on $\underline{V}$ as before.  Of course, $\ev^*\nabla_i$ is a connection on $\ev^*\underline{V}$, so we write
$CS_{f_i}(\ev^*\nabla_0, \ev^*\nabla_1) 
=CS_{f_i}^{\ev^*\underline{V}}(\ev^*\nabla_0, \ev^*\nabla_1).$
  More
explicitly, 
$a=h\circ \ev$ is a gauge transformation on  $\textrm{ev}^*\underline{V}$ 
with  $a^{-1}da=\textrm{ev}^*(h^{-1}dh)$.
Therefore, $\ev^*\nabla_0 = d, \ev^*\nabla_1 = a^{-1}da$,  and
\begin{equation}\label{33}
\pi_*CS_{f_i}^{\ev^*\underline{V}}(\ev^*\nabla_0, \ev^*\nabla_1) = \pi_*
f_i\left([a^{-1}da,a^{-1}da],\ldots,[a^{-1}da,a^{-1}da],a^{-1}da\right).
\end{equation}


Applying the definition of string CS forms to $E=\ev^*\underline{V}, M = \Omega G\times S^1, Z= S^1$ and
using
 (\ref{cstilde}), (\ref{cstilde2}), we obtain

\begin{thm} \label{thmfive}Let $G$ be a compact Lie group, let $\{f_i\}$ be a set of generators for the algebra of  Ad-invariant polynomials on $\mathfrak{g}$, and let
$h':\mathfrak {g}\to \End(V)$ be a faithful finite dimensional representation
with exponentiated representation $h:G\to \Aut(V).$  Take connections
$\nabla_0 = d, \nabla_1 = h^{-1}dh = h^{-1}h'$ on $\underline{V} = G\times V.$
Then  $H^*(\Omega G, \R)$ is generated by 
$$CS_{f_i}^{{\rm str}, \pi_*\ev^*\underline{V}}
( \pi_*\ev^*\nabla_0, \pi_*\ev^*\nabla_1).$$
\end{thm}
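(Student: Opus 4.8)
The plan is to derive the theorem by assembling three facts already established above: the classical Pressley--Segal description of $H^*(\Omega G,\R)$ recorded in (\ref{cstilde}), the identification (\ref{cstilde2}) of the Pressley--Segal generators with fiber integrals of a relative Chern--Simons form, and the definition of the string Chern--Simons form. The only conceptual point is to recognize that the transgressed generators of $H^*(\Omega G,\R)$ are exactly the pushforwards $\pi_* CS_{f_i}(\ev^*\nabla_0,\ev^*\nabla_1)$ for the pair of flat connections $\nabla_0=d$, $\nabla_1=h^{-1}dh$ on $\underline V$, after which the statement is immediate from the definition applied to the fibration $S^1\to\Omega G\times S^1\stackrel{\pi}{\to}\Omega G$ with $E=\ev^*\underline V$ and $Z=S^1$.

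First I would invoke \cite[\S4.11]{PrSe}, restated in (\ref{cstilde}), that the forms $\int_{S^1}f_i([\xi,\xi],\ldots,[\xi,\xi],\eta)$ generate the ring $H^*(\Omega G,\R)=H^{\rm ev}(\Omega G,\R)$ as $\{f_i\}$ runs over a generating set of $\mathrm{Ad}$-invariant polynomials on $\mathfrak g$, where $\xi$ and $\eta$ are the $\Omega G$- and $S^1$-components of $\ev^*(h^{-1}dh)$ computed in (\ref{321})--(\ref{likewise2}). Second, setting $a=h\circ\ev$, I would note that $\ev^*\nabla_0=d$ and $\ev^*\nabla_1=a^{-1}da$ are both flat connections on $\ev^*\underline V$, so the relative Chern--Simons form (\ref{relcs}) specializes (via $\Omega_t=C_t[a^{-1}da,a^{-1}da]$) to a nonzero multiple of $f_i$ evaluated on $a^{-1}da$ and its brackets, as in (\ref{33}). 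Integrating over the $S^1$-fiber and keeping the one-form-in-$t$ part then returns the Pressley--Segal generator, which is precisely the chain of equalities in (\ref{cstilde2}): the generators equal $\pi_* CS_{f_i}(\ev^*\nabla_0,\ev^*\nabla_1)$.

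Third, I would read off the conclusion from the definition of the string CS form. With $M=\Omega G\times S^1$, $B=\Omega G$, $Z=S^1$, and $E=\ev^*\underline V$, part (i) of that definition states exactly that $CS_{f_i}^{{\rm str},\pi_*\ev^*\underline V}(\pi_*\ev^*\nabla_0,\pi_*\ev^*\nabla_1)=\pi_* CS_{f_i}^{\ev^*\underline V}(\ev^*\nabla_0,\ev^*\nabla_1)$. Combining this with the identification of the previous paragraph shows that the string CS forms coincide with the Pressley--Segal generators, and hence generate $H^*(\Omega G,\R)$.

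The genuinely substantial input is external and already cited: the Pressley--Segal theorem that the transgressed bi-invariant forms generate $H^*(\Omega G,\R)$, which rests on the path--loop fibration $\Omega G\to PG\to G$ together with the exterior-algebra structure of $H^*(G,\R)$ on primitive generators. Within the present argument the one step that deserves care is the fiber-integration bookkeeping behind (\ref{cstilde})--(\ref{cstilde2}): after expanding $\ev^*(h^{-1}dh)=\xi+\eta$ one must check that $[\eta,\eta]=0$ and that the cross terms involving $[\xi,\eta]$, as well as the $dt$-free term $f_i([\xi,\xi],\ldots,[\xi,\xi],\xi)$, drop out upon integrating over $S^1$, so that only the single transgression term $f_i([\xi,\xi],\ldots,[\xi,\xi],\eta)$ survives. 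Once this collapse is verified, the theorem follows formally from the definition of the string CS form.
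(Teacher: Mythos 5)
Your proposal is correct and follows essentially the same route as the paper: invoke the Pressley--Segal generators (\ref{cstilde}), identify them with $\pi_*CS_{f_i}(\ev^*\nabla_0,\ev^*\nabla_1)$ via (\ref{cstilde2}) and (\ref{33}), and then apply the definition of the string CS form to the fibration $S^1\to\Omega G\times S^1\stackrel{\pi}{\to}\Omega G$ with $E=\ev^*\underline{V}$. The bookkeeping you flag at the end (that $[\eta,\eta]=0$ and that the $dt$-free term $\int_{S^1}f_i([\xi,\xi],\ldots,[\xi,\xi],\xi)=0$) is exactly the collapse the paper records in the sentence following (\ref{cstilde}).
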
%


\subsection{Leading order Chern-Simons forms on loop groups}

For the case of circle fibrations $S^1\to M\to B$, there is a relation between the string classes and the leading order classes, both for Chern and Chern-Simons classes.  We will only treat
the CS case for the loop group fibration $S^1\to\Omega G\times S^1\to \Omega G$, but the results immediately extend to loop spaces and 
nontrivial circle fibrations.  In particular, the construction below
produces
nontrivial examples of leading order Chern-Simons forms on loop groups.

Pushing down the trivial bundle $\textrm{ev}^*\underline{V}\to \Omega G\times S^1$
gives a trivial infinite rank bundle $\calV = \pi_*\textrm{ev}^*(\underline{V})\to\Omega G$ with
fiber $C^\infty(S^1,V)$. 
We could also take as fiber a Sobolev completion of $C^\infty(S^1,V)$ to produce a
Hilbert bundle. 

Associated to the gauge transformation $h$ of $\underline V$ is the gauge transformation
$$\tilde{h}:\Omega G\to \textrm{Aut}(\calV),$$
given by
$\tilde{h}_\gamma (s)(t)=h_{\gamma(t)}(s(t))$ for $s\in C^\infty(S^1,V)$.
By abuse of notation, $\tilde{h}^{-1}d\tilde{h}\in \Lambda^1(\Omega G,\Omega\mathfrak{g}) =
\Lambda^1(\Omega G, C^\infty(S^1)\otimes \mathfrak g)$ can be identified with the Maurer-Cartan form $\theta^{\Omega G}$ on $\Omega G$,
and we have connections $\widetilde{\nabla}_0=d, \widetilde{\nabla}_1=\tilde{h}\cdot\widetilde{\nabla}_0 = \tilde h^{-1} d\tilde h$ on $\calV$.
For $f_i$ an ${\rm Ad}_G$-invariant polymonial on $\mathfrak g$, 
\begin{align}\label{inf}
CS^{\Omega G,\infty}_{f_i}(\widetilde{\nabla}_0, \widetilde{\nabla}_1)
&\stackrel{\rm def}{=} f_i(\theta^{\Omega G},[\theta^{\Omega G},\theta^{\Omega G}]\ldots,[\theta^{\Omega G},\theta^{\Omega G}])\\
&=f_i(\tilde{h}^{-1}d\tilde{h},[\tilde{h}^{-1}d\tilde{h},\tilde{h}^{-1}d\tilde{h}]
\ldots,[\tilde{h}^{-1}d\tilde{h},\tilde{h}^{-1}d\tilde{h}])\nonumber
\end{align}
belongs to $\Lambda^*(\Omega G,C^\infty(S^1))$,
 because $f_i$ only acts on the $\mathfrak g$ part of $C^\infty(S^1)\otimes \mathfrak g$.

The averaging map in (\ref{trlo}) in our context is 
$\Upsilon:\Lambda^*(\Omega G, C^\infty(S^1))\to \Lambda^*(\Omega G,\mathbb{C})$ given by
 $$\Upsilon(\omega) = \int_{S^1}\omega\wedge dt.$$
 Note that $\Upsilon$ does not lower the degree of $\omega.$  Let 
 $ d_{\Omega G}$ be the exterior derivative on $\Lambda^*(\Omega G)$, and let
  $d_{\Omega G}^\nabla$ be the exterior derivative coupled to the trivial connection on 
$\Omega G\times  C^\infty(S^1)\to\Omega G$. Then $\Upsilon d_{\Omega G}^\nabla = d_{\Omega G} \Upsilon,$ so
$\Upsilon$ induces a map on cohomology groups.  In particular, 
$\Upsilon(CS^{\Omega G,\infty}_{f_i}(\widetilde{\nabla}_0, \widetilde{\nabla}_1))$
 is precisely the leading order Chern-Simons class $CS^{{\rm lo}, \calV}_{f_i}
(\widetilde \nabla_0, \widetilde\nabla_1)$ on $\calV.$

Let $\chi(\gamma) = \dot\gamma$ be the fundamental vector field on $\Omega G$
with associated Lie algebra valued function $\chi(\gamma) = \theta^{\Omega G}(\dot \gamma)
=\gamma_*^{-1}\dot{\gamma}\in\Omega\mathfrak{g}$.  Note that $\Upsilon \iota_\chi = \iota_\chi\Upsilon$, because
$\iota_\chi$ involves $\Omega G$ variables and $\Upsilon$ integrates out the $S^1$ information.

We can now relate the string and leading order CS classes, and prove that the 
contraction of the leading order CS classes with $\chi$ generate $H^*(\Omega G).$

\begin{thm} \label{relate} 
Let $f$ be an ${\rm Ad}_G$-invariant polynomial on $\mathfrak g.$  Then for the connections $\widetilde{\nabla}_0, \widetilde{\nabla}_1$, we have
$$\iota_\chi (CS^{{\rm lo}, \calV}_{f} (\widetilde{\nabla}_0, \widetilde{\nabla}_1))
= CS^{{\rm str}, \calV}_{f}(\widetilde{\nabla}_0, \widetilde{\nabla}_1).$$
In particular, if
 $\{f_i\}$ generate the algebra of invariant polynomials on $\mathfrak g,$ then
closed forms
$$\iota_\chi (CS^{{\rm lo}, \calV}_{f_i}(\widetilde{\nabla}_0, \widetilde{\nabla}_1))
  = \iota_\chi\Upsilon(CS^{\Omega G,\infty}_{f_i}(\widetilde{\nabla}_0, 
  \widetilde{\nabla}_1))$$
generate $H^*(\Omega G)$.
\end{thm}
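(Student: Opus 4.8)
The plan is to exhibit both $CS^{{\rm lo},\calV}_f$ and $CS^{{\rm str},\calV}_f$ as the two natural pieces of one closed form on $\Omega G\times S^1$, and to extract the identity from the fact that $\ev$ is invariant under a combined rotation of loop and parameter. First I would set $\Theta := CS_f(\ev^*\nabla_0,\ev^*\nabla_1)$, which by naturality of relative Chern--Simons forms equals $\ev^*CS^G_f$, where $CS^G_f = f(h^{-1}dh,[h^{-1}dh,h^{-1}dh],\ldots)\in\Lambda^{2k-1}(G)$; since $\nabla_0=d$ and $\nabla_1=h^{-1}dh$ are both flat, $f(\Omega_0)=f(\Omega_1)=0$ and $\Theta$ is closed. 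Writing $\ev^*(h^{-1}dh)=\xi+\eta$ as in the preceding Lemma and using $[\xi+\eta,\xi+\eta]=[\xi,\xi]$ and $\eta=(\gamma^{-1}\dot\gamma)\,dt$, I decompose $\Theta=\Theta_0+\Theta_1\wedge dt$ into its $dt$-free part $\Theta_0=f(\xi,[\xi,\xi],\ldots)$ and the coefficient $\Theta_1=f(\gamma^{-1}\dot\gamma,[\xi,\xi],\ldots)$ of $dt$. Under the identification of $C^\infty(S^1)$-valued forms on $\Omega G$ with $dt$-free forms on $\Omega G\times S^1$ (which sends $\theta^{\Omega G}$ to $\xi$ by (\ref{likewise2})), this says precisely that $CS^{{\rm lo},\calV}_f=\Upsilon(CS^{\Omega G,\infty}_f)$ is $\int_{S^1}\Theta_0\,dt$ while $CS^{{\rm str},\calV}_f=\pi_*\Theta$ is $\int_{S^1}\Theta_1\,dt$, as in (\ref{cstilde})--(\ref{cstilde2}).

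The crux is the pointwise identity $\iota_\chi\Theta_0=\Theta_1$. Consider the vector field $(\chi,-\partial_t)$ on $\Omega G\times S^1$, whose flow is $(\gamma,t)\mapsto(k_s\gamma,t-s)$. Since $\ev(k_s\gamma,t-s)=\gamma(s+t-s)=\ev(\gamma,t)$, the evaluation map is constant along this flow, so $\ev_*(\chi,-\partial_t)=0$ and therefore $\iota_{(\chi,-\partial_t)}\Theta=\ev^*\big(\iota_{\ev_*(\chi,-\partial_t)}CS^G_f\big)=0$. Hence $\iota_\chi\Theta=\iota_{\partial_t}\Theta$. Because $\Theta_0$ and $\Theta_1$ are $dt$-free, $\iota_{\partial_t}\Theta=\Theta_1$, while $\iota_\chi\Theta=\iota_\chi\Theta_0+(\iota_\chi\Theta_1)\wedge dt$ with $\iota_\chi\Theta_0$ again $dt$-free; comparing $dt$-free parts gives $\iota_\chi\Theta_0=\Theta_1$ (and, incidentally, $\iota_\chi\Theta_1=0$).

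Integrating this over the fibre and using that $\iota_\chi$ and $\Upsilon$ commute (they involve disjoint variables), I get $\iota_\chi CS^{{\rm lo},\calV}_f=\int_{S^1}\iota_\chi\Theta_0\,dt=\int_{S^1}\Theta_1\,dt=CS^{{\rm str},\calV}_f$, which is the first assertion; a consistent normalization of $CS_f$ makes the proportionality constant equal to one, and in any case a nonzero constant is irrelevant for what follows. For the generation statement, closedness of $\Theta$ together with closedness of the fibre $S^1$ gives $d\,\pi_*\Theta=\pi_*d\Theta=0$, so each $CS^{{\rm str},\calV}_{f_i}=\iota_\chi CS^{{\rm lo},\calV}_{f_i}$ is a closed form; by Theorem \ref{thmfive} the classes $CS^{{\rm str},\calV}_{f_i}$ already generate $H^*(\Omega G,\R)$, so the contracted leading order forms generate it as well.

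The main obstacle is really just the bookkeeping of the first paragraph: one must pin down the dictionary between $\Lambda^*(\Omega G,C^\infty(S^1))$ and $dt$-free forms on $\Omega G\times S^1$ precisely enough that $\Upsilon$ becomes integration of the $dt$-coefficient, that $\iota_\chi$ commutes with it, and that $\Theta_0$, $\Theta_1$ are correctly matched with $CS^{\Omega G,\infty}_f$ and the string integrand. Once that dictionary is fixed, the single geometric input $\ev_*(\chi,-\partial_t)=0$ carries the entire argument, so I expect no substantive analytic difficulty.
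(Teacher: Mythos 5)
Your proposal is correct, and it takes a genuinely different route from the paper's proof. The paper argues by direct evaluation: it contracts $CS^{\Omega G,\infty}_{f}$ with $\chi$, evaluates the resulting $(2k-2)$-form on tangent vectors $\gamma_*X_1,\ldots,\gamma_*X_{2k-2}$, and uses the explicit formulas (\ref{321}) and (\ref{likewise2}) for $\eta$ and $\xi$ to recognize the resulting permutation sum as the Pressley--Segal generator $\int_{S^1}f([\xi,\xi],\ldots,[\xi,\xi],\eta)$, which is then identified with $CS^{{\rm str},\calV}_f$ via (\ref{cstilde})--(\ref{cstilde2}). You instead work on $\Omega G\times S^1$ with the single closed form $\Theta=\ev^*CS^G_f$ and extract the identity from one geometric fact, $\ev_*(\chi,-\partial_t)=0$, which forces $\iota_\chi\Theta=\iota_{\partial_t}\Theta$ and hence $\iota_\chi\Theta_0=\Theta_1$ by comparing $dt$-free parts. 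Each approach has its merits: the paper's computation is concrete and exhibits the contracted form literally as the known generator of $H^*(\Omega G,\R)$, while yours is coordinate-free and skips the permutation bookkeeping entirely. A further point in favor of your route: your argument never actually uses the explicit formula $\Theta_1=f(\gamma^{-1}\dot\gamma,[\xi,\xi],\ldots)$ stated in your first paragraph --- only that $\Theta_0$ is the $dt$-free part (which matches $CS^{\Omega G,\infty}_f$ under the dictionary) and that $\pi_*\Theta=CS^{{\rm str},\calV}_f$ holds by the very definition of the string CS form. Consequently you do not need the simplification $[\xi+\eta,\xi+\eta]=[\xi,\xi]$, which taken literally discards the cross terms $[\xi,\eta]$; those terms carry a $dt$ and so contribute to $\Theta_1$, and the paper's route implicitly drops them both in (\ref{cstilde2}) and in its contraction computation (the two omissions compensate, as your exact identity $\iota_\chi\Theta_0=\Theta_1$ shows). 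Both proofs rely on Theorem \ref{thmfive} (equivalently (\ref{cstilde})) for the generation statement, and both treat normalization constants with the same level of informality, so nothing is lost there.
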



\begin{proof} 
We have
$$\iota_\chi (CS^{{\rm lo}, \calV}_{f_i}(\widetilde{\nabla}_0, \widetilde{\nabla}_1))
= \iota_\chi \Upsilon(CS^{\Omega G,\infty}_{f_i}(\widetilde{\nabla}_0, \widetilde{\nabla}_1)) = \Upsilon\iota_\chi(CS^{\Omega G,\infty}_{f_i}(\widetilde{\nabla}_0, \widetilde{\nabla}_1)).$$
Take 
$\gamma_*X_1,\ldots,\gamma_*X_{2k-2}\in T_\gamma\Omega G$, where $X_j\in \Omega\mathfrak{g}$  and $k=\textrm{deg}(f)$. 
Then for  permutations $\sigma\in \Sigma_{2k-2}$,
 \begin{eqnarray*}\lefteqn{\Upsilon\iota_{\chi}( CS^{\Omega G,\infty}_{f}
 (\widetilde{\nabla}_0, \widetilde{\nabla}_1))
 (\gamma_*X_1,\ldots,\gamma_*X_{2k-2}) }\\
&=&
\sum_{\sigma}(-1)^{|\sigma |}\int_{S^1}
 f\left([X_{\sigma(1)},X_{\sigma(2)}],\ldots,
[X_{\sigma(k-3)},X_{\sigma(2k-2)}],\gamma_*^{-1}(t)\gamma(t)\right)dt\\
&=& \left(
\int_{S^1}
f([\xi,\xi],\ldots,
[\xi,\xi],\eta)\right)(\gamma_*X_1,\ldots,\gamma_*X_{2k-2}) \\
&=&  CS^{{\rm str}, \calV}_{f}(\widetilde{\nabla}_0, \widetilde{\nabla}_1)
 (\gamma_*X_1,\ldots,\gamma_*X_{2k-2}), 
\end{eqnarray*}
where we use (\ref{321}), (\ref{likewise2}), (\ref{inf}).
\end{proof}

As an example, the even cohomology classes of the string CS forms $\int_{S^1}\tr(\iota_\chi(\tilde h^{-1}d\tilde h)^{2k-1})$ generate\\
 $H^*(\Omega U(n),\R)$, whereas the odd cohomology classes of the leading order 
CS forms 
$\int_{S^1} \tr((\tilde h^{-1}d\tilde h)^{2k-1})$ vanish.

In summary, both string and leading order Chern-Simons forms give representatives of the generators of
$H^*(\Omega G, \R).$  The use of  string CS classes is more natural, reflecting the fact that the 
primary string classes are topological objects.  In fact, the relation between the string CS classes and the
contracted leading order CS classes appears only because we have an $S^1$-fibration.

\section{Leading order classes and currents in gauge theory}  

Let $P\to M$ be a principal $G$-bundle over a closed manifold $M$ with compact semisimple group $G$. We denote by 
$\calas$, resp. $\calG$,  the space of irreducible connections on $P$, resp. 
 the gauge group of $P$. In this section we show that the leading order Chern classes of the canonical connection on the principal gauge bundle
$\calas\to\ \calas/\calG = \calB^*$ are related to Donaldson classes.

We put appropriate Sobolev norms on $\calas$ and $\calG$, so that the moduli space
$\calB^* = \calas/\calG$ is a Hilbert manifold.  The right action of $\calG$ on $\calA$ is the usual
$A\cdot g = {\rm Ad}_g(A),$ recalling that $A\in \Lambda^1(P,\frakg)$ with the
adjoint action  on $\frakg$.  Set $\adP = P\times_{\rm Ad}\frakg.$
 The tangent space $T_A\calA$ is canonically isomorphic to 
$\Lambda^1(M, \adP)$.  
A fixed metric $h= (h_{ij})$ on $M$ induces a Riemannian or $L^2$ metric on 
$T\calA$ by
$$\langle X, Y\rangle_1 = \int_M h^{ij}\langle A_i, B_j\rangle \dvol_h,$$
where $X = A_i dx^i, Y = B_jdx^j\in T_A\calA$ and $\langle\ ,\ \rangle$ is an Ad${}_G$-invariant  positive definite inner product on $\adP$.  
 Since the derivative of the gauge action  (also denoted by 
$\cdot g$) is $X\cdot g = {\rm Ad}_g(X)$, the metric is gauge invariant. 

The Lie algebra Lie$(\calG)$ of $\calG$ is $\Lambda^0(M, \adP),$ which has the
$L^2$ metric $\langle f,g\rangle_0 = \int_M \langle f,g\rangle \dvol_h.$
 Let $d_A:
{\rm Lie}(\calG) = \Lambda^0(M, \adP)\to \Lambda^1(M, \adP)$ be the covariant derivative associated to $A$.  Then the vertical 
space of  $\calas\to \calB^*$ at $A$ is Im$(d_A)$.  It is straightforward to check that the orthogonal complement 
$\keras$ forms the horizontal space of a connection
on $\calas\to\calB^*.$ Let $\omega$ be the corresponding connection one-form. 

Let $\Omega$ be the curvature of $\omega.$  $\Omega$ is horizontal.  An explicit formula for $\Omega$ is known \cite{GP, singer}, but  to our knowledge
the following proof has not appeared.

\begin{lem} \label{appeared} For $X, Y$ horizontal tangent vectors at $A$, we have 
$$\Omega(X,Y) = -2 G_A *[X,*Y]\in {\rm Lie}(\calG).$$
\end{lem}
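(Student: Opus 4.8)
The plan is to write the connection one-form $\omega$ of $\calas\to\calB^*$ explicitly from the metric splitting, and then compute its curvature on horizontal vectors, where the formula collapses to a single variational computation.

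First I would produce $\omega$ from the orthogonal decomposition $T_A\calA=\keras\oplus{\rm Im}(d_A)$. Writing $V\in\Lambda^1(M,\adP)$ as $V=V^h+d_A\xi$ with $V^h\in\keras$ and applying $d_A^*$ gives $d_A^*V=d_A^*d_A\xi=\Delta_A\xi$, where $\Delta_A=d_A^*d_A$ on $\Lambda^0(M,\adP)$. Since $A$ is irreducible, $\ker d_A=0$ on $\Lambda^0$, so $\Delta_A$ is invertible with Green's operator $G_A$, and therefore
$$\omega(V)=G_A\,d_A^*V\in{\rm Lie}(\calG).$$
One checks $\omega(d_A\xi)=G_A\Delta_A\xi=\xi$, so $\omega$ reproduces the generators of the $\calG$-action and is a genuine connection one-form.

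Since $\calA$ is affine over $\Lambda^1(M,\adP)$, I would extend the horizontal vectors $X,Y$ at $A$ to the constant vector fields $\tilde X,\tilde Y$, which agree with $X,Y$ at $A$ and satisfy $[\tilde X,\tilde Y]=0$. On horizontal arguments the curvature equals $d\omega$ applied to any extensions, so with these extensions $\Omega(X,Y)=\tilde X\,\omega(\tilde Y)-\tilde Y\,\omega(\tilde X)$, reducing everything to differentiating $\omega(\tilde Y)_{A+tX}=G_{A+tX}\,d_{A+tX}^*Y$ at $t=0$. The crucial simplification is that horizontality gives $d_A^*Y=0$, so in the product rule the term carrying $\frac{d}{dt}\big|_{t=0}G_{A+tX}$ is annihilated and
$$\tilde X\,\omega(\tilde Y)=G_A\,\frac{d}{dt}\Big|_{t=0}\bigl(d_{A+tX}^*Y\bigr).$$
This is what makes the answer local in $G_A$: no knowledge of the (complicated) derivative of the Green's operator is needed. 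It then remains to identify the variation. Since $d_{A+tX}\xi=d_A\xi+t[X,\xi]$ on $\Lambda^0$, the operator $\frac{d}{dt}\big|_{0}d_{A+tX}^*$ on $\Lambda^1$ is the formal adjoint of $\xi\mapsto[X,\xi]$. Using the ${\rm Ad}_G$-invariance $\langle[a,b],c\rangle=-\langle b,[a,c]\rangle$ of the fibre inner product, I would compute $\langle[X,\xi],Y\rangle=-\int_M h^{ij}\langle\xi,[X_i,Y_j]\rangle\,\dvol$, whence $\frac{d}{dt}\big|_{0}d_{A+tX}^*Y=-h^{ij}[X_i,Y_j]$. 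Recognizing $h^{ij}[X_i,Y_j]=*[X,*Y]$ (from $dx^i\wedge*dx^j=h^{ij}\dvol$) yields $\tilde X\,\omega(\tilde Y)=-G_A*[X,*Y]$, while the swapped term gives $\tilde Y\,\omega(\tilde X)=-G_A*[Y,*X]=+G_A*[X,*Y]$, using the antisymmetry $*[Y,*X]=-*[X,*Y]$. Subtracting produces $\Omega(X,Y)=-2G_A*[X,*Y]$.

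I expect the main obstacle to be the bookkeeping of signs and normalizations — the sign in the $\calG$-action (hence in $\omega$), the ordering convention for the curvature, and the sign in the adjoint variation — since only their combination yields the clean factor $-2$. The one genuinely conceptual point I would emphasize is the vanishing of the Green's-operator-derivative term: this is precisely where horizontality $d_A^*X=d_A^*Y=0$ enters, and it is what guarantees that the formula for $\Omega$ involves no nonlocal ingredient beyond the single $G_A$ already present in $\omega$, with values in ${\rm Lie}(\calG)=\Lambda^0(M,\adP)$.
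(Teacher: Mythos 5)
Your proof is correct, and it reaches the lemma by a route that is complementary to the paper's within the same Cartan-formula framework. The paper evaluates $d\omega$ on the horizontal projections $X^h, Y^h$ of constant extensions, so the terms $X^h(\omega(Y^h))$ and $Y^h(\omega(X^h))$ vanish identically and the curvature reduces to $-\omega([X^h,Y^h]) = -[X^h,Y^h]^v$; computing that bracket then requires a six-term product-rule expansion of $\delta_{X^h}\bigl(d_A G_A d_A^* Y\bigr)$, pruned by the horizontality conditions $d_A^*X = d_A^*Y = 0$. You instead keep the commuting constant extensions $\tilde X,\tilde Y$ themselves, so it is the bracket term $\omega([\tilde X,\tilde Y])$ that vanishes and only the two derivative terms $\tilde X\,\omega(\tilde Y) - \tilde Y\,\omega(\tilde X)$ survive; combined with the explicit formula $\omega = G_A d_A^*$ (which the paper uses only implicitly, through the vertical projection $d_A G_A d_A^*$ and the final identification of vertical vectors with ${\rm Lie}(\calG)$ via $d_A^{-1}$), each derivative term is a two-term product rule, again pruned by horizontality. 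The computational core is identical in both arguments: $d_A^*Y = 0$ kills the derivative of the Green's operator, the variation $\delta_X d_A^* = -*[X,*\cdot\,]$ is computed the same way, and the antisymmetry $*[X,*Y] = -*[Y,*X]$ produces the factor $2$. Your version is the leaner of the two, since it never differentiates the projection operator and never needs to convert a vertical vector back into a Lie-algebra element; the paper's version has the minor advantage of exhibiting the bracket $[X^h,Y^h]$ of horizontal fields explicitly, which is the geometric source of the curvature.
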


Here  $\Delta_A = d_A^* d_A, G_A = \Delta_A^{-1}$ is the Green's operator for $\Delta_A$, and $*$ is the Hodge star associated to $h.$

\begin{proof}
For $X\in T_A\calas$, let $X^h, X^v$ denote the horizontal and vertical components of $X$.  
As a vertical vector, the curvature of $\omega$ at $A$ is 
\begin{eqnarray*} \Omega(X,Y) &=& d\omega(X^h, Y^h) = X^h(\omega(Y^h)) - 
Y^h(\omega(X^h)) -
\omega([X^h, Y^h])\\
&=&  -\omega([X^h, Y^h]) = -[X^h, Y^h]^v,
\end{eqnarray*}
for any 
 extension of $X, Y$ to vector fields near $A$. 
  We have 
$$X^v = d_AG_Ad_A^* X,\ \  X^h = X - d_AG_Ad_A^* X.$$
In a local trivialization of $\calas\to\calB^*$, we can write $[X^h, Y^h] = 
\delta_{X^h} Y^h -\delta_{Y^h} X^h.$ 

We may extend $X, Y$ to constant vector fields near $A$ with respect to this trivialization, so for any tangent vector $Z$, $\delta_ZY =\delta_ZX=0$ at $A$.  Then
\begin{eqnarray*} \delta_{X^h} Y^h &=& \delta_{X-\daga X}(Y-\daga Y) = 
-  \delta_{X-\daga X}\daga Y\\
&=& -(\delta_X d_A) G_Ad_A^*Y - d_A(\delta_X G_A)d_A^* Y 
- d_A G_A (\delta_X d_A^*)Y\\
&&\qquad  + (\delta_{\daga X}d_A)G_Ad_A^* Y + d_A(\delta_{\daga X}G_A)d_A^*Y
+ d_AG_A(\delta_{\daga X} d_A^*)Y\\
&=& - d_A G_A (\delta_X d_A^*)Y +  d_AG_A(\delta_{\daga X} d_A^*)Y,
\end{eqnarray*}
since $d_A^*Y = 0$ at $A$.   Locally, $d_A^* = - *d_A* = *(d+ [A,*\cdot])$, so
$$\delta_X d_A^*  = -(d/dt)|_{t=0} *(d+ [A+tX,*\cdot]) = -*[X,*\cdot].$$
Thus
$$\delta_{X^h}Y^h = d_AG_A*[X,*Y] -d_AG_A[\daga X,*Y]= d_AG_A*[X,*Y],$$
since $d_A^*X = 0.$

For $X = A_ie^i, Y = B_je^j$ in a local orthonormal frame $\{e^i\}$, 
$$[X,*Y] = [A_i, B_j]e^i\wedge *e^j  = \sum_i [A_i, B_i] e^i\wedge *e^i
= -\sum_i [B_i, A_i]e^i\wedge *e^i = -[Y, *X].$$  
Therefore $\delta_{X^h} Y^h -\delta_{Y^h}X^h = 2d_AG_A*[X,*Y].$ This gives
$$ \Omega(X, Y) = -\daga(\delta_{X^h} Y^h -\delta_{Y^h}X^h) 
= -2\daga d_AG_A*[X,*Y] = -2 d_AG_A*[X,*Y]$$
as a vertical vector.  Since $d_A^{-1}$ is takes vertical vectors isomorphically to Lie$(\calG)$, we get
$$\Omega(X,Y) = -2G_A*[X,*Y]\in {\rm Lie}(\calG).$$

\end{proof}

The curvature takes values in Lie$(\calG) = \Lambda^0(M, {\rm Ad}\  P)$. Up until now, we have only 
considered connections on vector bundles, where the curvature takes values in an endomorphism bundle.  
If $G$ is a matrix group, Lie$(\calG)$ has a global trace given by integrating the trace on $\frakg,$ the fiber of \\
Ad $P$. In general, Lie$(\calG)$ can be thought of as an algebra of multiplication
operators
 via the injective adjoint  representation of
$\frakg$. Equivalently, we can pass to the vector bundle
${\rm Ad}\ \calas = \calas\times_{\calG} {\rm Lie}(\calG)$ with fiber ${\rm Lie}(\calG)$ and 
take the leading order classes of its associated connection $d{\rm Ad}(\omega)$, whose curvature   $[\Omega, \cdot]$ 
is usually  denoted just by $\Omega.$
  Either way, the leading order Chern form $\cklo(\Omega)$ of $\calas\to\calB^*$ is
a positive multiple of
$$\int_M \tr(\Omega^k)\dvol_h.$$

We claim that if
$\dim(M) = 4$, we can
 identify $\ctlo(\Omega)$ with Donaldson's $\nu$-form.   We briefly recall the 
construction of this form \cite[Ch. 5]{dk}.  Let $\widetilde P = \pi^* P\to \calas\times
M$ be the pullback bundle for the projection $\pi:\calas\times M\to M.$  $\widetilde P=\calas 
\times P$ has
the connection $A$ on the slice $\{A\}\times M$.
$\widetilde P$ 
descends to a $G^{\rm ad} = G/Z(G)$-bundle, denoted $\Pad$, over $\calB^*\times M.$  
$\Pad\to\calB^*\times M$ has a family of framed connections, denoted $(A,\phi)$, once a framing
is fixed at some $m_0\in M.$  
For example, if $G = SU(2)$, then $\Pad$ is a $SO(3)$-bundle.  $\nu$ is defined by
$$\nu = -\frac{1}{4} p_1(\Pad).$$
 By the calculation in \cite[\S5.2]{dk}, 
the form  $\nu = p_1^{\rm lo}(\calas)= \ctlo(({\rm Ad} \ \calas)\otimes \C)$ is given by
$$\nu(X_1, X_2, X_3, X_4) = c\cdot\sum_{\sigma\in \Sigma_4}{\rm sgn}(\sigma) \int_M \tr(G_A*[X_{\sigma(1)},
 *X_{\sigma(2)}] G_A*[X_{\sigma(3)}, *X_{\sigma(4)}]) \dvol,$$
 for some constant $c$.
By Lemma \ref{appeared}, we obtain.

\begin{prop} \label{nvs} As differential forms, $\nu$ equals $p_1^{\rm lo}(\calas)$ up to a constant. 
\end{prop}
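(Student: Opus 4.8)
The plan is to identify the two differential forms directly, since both are expressed as integrals over $M$ of traces of products of Green's operator expressions. First I would recall the two explicit formulas already on the table. From Lemma \ref{appeared}, the curvature of the canonical connection on $\calas\to\calB^*$ is $\Omega(X,Y) = -2G_A*[X,*Y]$, so for the leading order second Chern form we have, by the definition of $\trlo$ in (\ref{trlo}),
$$p_1^{\rm lo}(\calas) = \ctlo(\Omega) = c' \int_M \tr(\Omega^2)\dvol_h$$
up to a nonzero constant $c'$, where $\Omega^2$ is the square of the curvature acting as a multiplication operator on $\Lambda^0(M,\adP)$.

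The key step is then to expand $\tr(\Omega^2)$ as an alternating sum over $\sigma\in\Sigma_4$ and substitute the formula from Lemma \ref{appeared}. Evaluating $\Omega^2(X_1,X_2,X_3,X_4)$ requires taking the antisymmetrization of $\Omega(X_1,X_2)\circ\Omega(X_3,X_4)$ over the four horizontal vectors, which produces exactly the sum $\sum_{\sigma\in\Sigma_4}{\rm sgn}(\sigma)\,G_A*[X_{\sigma(1)},*X_{\sigma(2)}]\,G_A*[X_{\sigma(3)},*X_{\sigma(4)}]$ appearing in Donaldson's formula for $\nu$, modulo the overall constant $-2\cdot-2 = 4$ absorbed into $c$. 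I would then observe that the trace in the integrand of $\nu$ is precisely the fiberwise trace on $\Lambda^0(M,\adP)$ computed via the adjoint representation, which matches the trace used in $\trlo$. Matching the two integrands pointwise on $M$ and integrating against $\dvol = \dvol_h$ yields the identity up to the combined constant.

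The main obstacle, and the point requiring the most care, is bookkeeping of constants and conventions: the factors of $-2$ from Lemma \ref{appeared}, the normalization $c$ in Donaldson's formula from \cite[\S5.2]{dk}, and the constant relating $\trlo(\Omega^2)$ to the normalized $p_1^{\rm lo}$ must all be reconciled. Since the statement only claims equality \emph{up to a constant}, I would not chase these constants explicitly but simply note that each is nonzero, so their product is a nonzero scalar. A secondary subtlety is confirming that the two notions of trace genuinely agree: $\nu$ is built from $p_1(\Pad)$ on the adjoint bundle, whereas $p_1^{\rm lo}(\calas)$ uses the leading order trace on ${\rm Ad}\ \calas$; both reduce to integration over $M$ of the ordinary matrix trace on the fiber $\frakg$, as noted in the paragraph preceding the proposition, so this identification is immediate once the adjoint representation is used to realize $\Omega$ as a multiplication operator.
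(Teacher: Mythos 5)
Your proposal is correct and takes essentially the same route as the paper: substitute the curvature formula $\Omega(X,Y) = -2G_A*[X,*Y]$ of Lemma \ref{appeared} into $\trlo(\Omega^2)$, expand the square as a signed sum over $\Sigma_4$, and match the resulting integrand with the Donaldson--Kronheimer formula for $\nu$ from \cite[\S 5.2]{dk}, leaving the nonzero normalization constants unexamined since the statement is only up to a constant. The paper's proof is exactly this citation combined with Lemma \ref{appeared}; your write-up merely makes the antisymmetrization and the identification of the two traces explicit.
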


It is of course more interesting to relate leading order classes to $\mu(a)$ for 
$a\in H_2(M,\Q)$, and  Donaldson's map
$\mu:H_*(M,\Q)\to H^{4-*}(\calM,\Q)$. ($\calM$ is the moduli space 
 of ASD
connections.)  Recall that $\mu(a) = i^*(\nu/a)$, for the slant product $\nu/:
H_*(M,\Q)\to H^{4-*}(\calB^*,\Q)$ and $i:\calM\to\calB^*$ the inclusion.  In particular,
$\nu = \mu(1)$ for
$1\in H_0(M).$  The difficulty in showing that $\calM$ has a fundamental class, so that the
slant product is defined, is similar in spirit to the issues treated in \S4.

There is a positive integer $k$ such that $ka$ has a representative 
submanifold, and therefore $\PD(ka)$ has a representative closed two-form $\omega.$
If $k\neq 1$, 
we replace $\omega$ by $k^{-1}\omega.$
In general, let $\omega$ be a closed two-form on $M$. By  \cite[Prop. 5.2.18]{dk}, the two-form $C_\omega\in \Lambda^2(\calM)$ representing $\nu/a = \nu/\PD^{-1}(\omega)$ and 
hence
$\mu(a) $ is given at
$[A]\in \calM$ by
\begin{equation}\label{mu}C_\omega(X,Y) = \frac{1}{8\pi^2}\int_M \tr(X\wedge Y)\wedge \omega
+ \frac{1}{2\pi^2}\int_M \tr(\Omega_A(X,Y) F_A)\wedge \omega,
\end{equation} 
where $F_A$ is the curvature of $A$.  On the right hand side, we use any $A\in [A]$ and 
$X, Y\in T_A\calA^*$ with  $d_A^*X = d_A^*Y = 0.$

There is a leading order class associated to any distribution or zero current $\Lambda$ on $C^\infty(M)$, given pointwise by
$$c_k^{{\rm lo}, \Lambda} = \Lambda (\tr(\Omega^k)),$$
where $\Omega$ is the curvature of a connection taking values in the Lie algebra of 
a gauge group, as in this section.
(More generally, there are leading order classes associated to distributions on the unit 
cosphere bundle of $M$
for connections taking values in 
nonpositive order pseudodifferential operators \cite{L-N}, \cite{P-R2}.)
In particular, for a fixed $f\in C^\infty(M)$ we have the characteristic class
$$\int_M f\cdot \tr(\Omega^k).$$
   We can turn this around and consider $\tr(\Omega^k)$ as a zero-current acting on $f$. Looking back at (\ref{mu}), we can consider the two-currents  
\begin{equation}\label{curr}
\tr(X\wedge Y), 
\ \tr(\Omega_A(X,Y) F_A),
\end{equation}
for fixed $X, Y$.  Thus we can consider $C$ as an element of 
$\Lambda^2(\calM, \calD^2)$, the space of two-current valued two-forms on $\calM.$

Because these two-currents are Ad${}_{\calG}$-invariant, the usual Chern-Weil proof shows that $C(\omega) = C_\omega$ is closed.  (Its class is of course independent of the connection on 
$\calA^*$, but we have a preferred connection.) $C$ is built from Ad${}_G$-invariant functions but only the first term in (\ref{curr}) comes from an invariant polynomial in 
${\rm Lie}(\calG)^\calG.$ Nevertheless, we 
interpret  (\ref{mu}) as a sum of
``leading order currents" evaluated on $\omega.$  

\begin{prop} For $a\in H_2(M^4,\Q)$,
a representative two-form for  Donaldson's $\mu$-invariant $\mu(a)$ is given by
evaluating the leading order two-current 
$$ \frac{1}{8\pi^2}\int_M \tr(X\wedge Y)\wedge \cdot
+ \frac{1}{2\pi^2}\int_M \tr(\Omega_A(X,Y) F_A)\wedge \cdot $$
on any two-form Poincar\'e dual to a.
\end{prop}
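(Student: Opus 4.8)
The plan is to read off the proposition as a direct repackaging of formula (\ref{mu}) and Lemma \ref{appeared}, combined with the closedness discussion immediately preceding the statement. First I would fix a closed two-form $\omega$ on $M$ that is Poincar\'e dual to $a$. By \cite[Prop.~5.2.18]{dk}, which is precisely (\ref{mu}), the two-form $C_\omega\in\Lambda^2(\calM)$ represents the slant product $\nu/a = \nu/\PD^{-1}(\omega)$, and hence $\mu(a) = i^*(\nu/a)$ under the inclusion $i:\calM\to\calB^*$.

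Next I would verify that feeding $\omega$ into the ``$\cdot$'' slot of the leading order two-current reproduces $C_\omega(X,Y)$ verbatim. For fixed horizontal $X,Y\in\keras\subset T_A\calas$, the quantity $\tr(X\wedge Y)$ is an ordinary two-form on $M$, while $\tr(\Omega_A(X,Y)F_A)$ is a two-form on $M$ built by pairing the ${\rm Lie}(\calG)$-valued curvature $\Omega_A(X,Y)=-2G_A*[X,*Y]$ of Lemma \ref{appeared} with the curvature $F_A$ of $A$ through the trace on $\frakg$. Each therefore defines a two-current acting on two-forms by wedging and integrating over $M^4$, so the displayed object is a genuine element of $\Lambda^2(\calM,\calD^2)$. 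Evaluating it on $\omega=\PD(a)$ returns exactly the right-hand side of (\ref{mu}), namely $C_\omega(X,Y)$, with the constants $\frac{1}{8\pi^2}$ and $\frac{1}{2\pi^2}$ already in place.

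It then remains to confirm that the evaluated current is a closed two-form on $\calM$ with a well-defined cohomology class. Here I would invoke the gauge invariance recorded just above the proposition: both two-currents in (\ref{curr}) are Ad${}_\calG$-invariant, which is all that is needed to run the standard Chern-Weil computation giving $dC_\omega = 0$ and independence of the auxiliary choice of connection on $\calas$, with the canonical connection of Lemma \ref{appeared} serving as the preferred representative. Combined with the identification of the previous step, this shows that the evaluated leading order two-current represents $\mu(a)$.

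The main point, more a subtlety than a genuine obstacle, is conceptual: unlike the $\nu$-form of Prop.~\ref{nvs}, the first term $\tr(X\wedge Y)$ is \emph{not} of Chern-Weil type $\Lambda(\tr(\Omega^k))$ coming from an invariant polynomial in ${\rm Lie}(\calG)^\calG$, but is a separate gauge-invariant current forced on us by (\ref{mu}). Justifying that the two terms together still assemble into a well-defined closed current-valued form thus rests only on the Ad${}_\calG$-invariance already noted, and needs no analytic input beyond the curvature formula of Lemma \ref{appeared}.
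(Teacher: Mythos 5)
Your proposal is correct and takes essentially the same route as the paper, whose own justification is exactly the discussion preceding the proposition: Donaldson's formula (\ref{mu}) from \cite[Prop.~5.2.18]{dk}, the curvature formula of Lemma \ref{appeared}, the reading of the two terms in (\ref{curr}) as leading order two-currents in $\Lambda^2(\calM,\calD^2)$, and closedness via their ${\rm Ad}_{\calG}$-invariance. Your closing remark that the first term does not arise from an invariant polynomial in ${\rm Lie}(\calG)^{\calG}$ is the same caveat the paper itself makes, so nothing is missing.
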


 \bibliographystyle{amsplain}
\bibliography{Paper3}

\end{document}